\def\RR{\mathbb R}
\def\II{\mathbb I}
\newtheorem{Lmm}{Lemma}
\newtheorem{Thm}{Theorem}
\newtheorem{Dfn}{Definition}
\newtheorem{Crl}{Corollary}
\begin{document}

\title{Equivalence of three classical algorithms with quantum side information: Privacy amplification, error correction, and data compression}

\author{Toyohiro Tsurumaru%
\thanks{
The author is with Mitsubishi Electric Corporation, Information Technology R\&D Center,
5-1-1 Ofuna, Kamakura-shi, Kanagawa, 247-8501, Japan (e-mail: Tsurumaru.Toyohiro@da.MitsubishiElectric.co.jp).
This paper was presented in part at the 43rd Quantum Information Technology Symposium (QIT43), held online, Dec. 10-11, 2020, and at a poster session of QCrypt 2021 Conference, held online, Aug. 23-27, 2021.
}
}

\maketitle

\begin{abstract}
Privacy amplification (PA) is an indispensable component in classical and quantum cryptography.
Error correction (EC) and data compression (DC) algorithms are also indispensable in classical and quantum information theory.
We here study these three algorithms (PA, EC, and DC) in the presence of quantum side information, and show that they all become equivalent in the one-shot scenario.
As an application of this equivalence, we take previously known security bounds of PA, and translate them into coding theorems for EC and DC which have not been obtained previously.
Further, we apply these results to simplify and improve our previous result that the two prevalent approaches to the security proof of quantum key distribution (QKD) are equivalent.
We also propose a new method to simplify the security proof of QKD.
\end{abstract}

\begin{IEEEkeywords}
Data compression, error correction, privacy amplification, quantum cryptography, quantum information theory.
\end{IEEEkeywords}

\section{Introduction}
Privacy amplification (PA) algorithms is an indispensable component in classical and quantum cryptography \cite{RennerPhD,VanAssche,tuyls2007security,MYCQZ16}.
The goal of PA is to generate a random bit string that is completely unknown to outside, from a bit string which may be partially leaked to outside.
On the other hand, error correction (EC) and data compression (DC) algorithms are also indispensable in classical and quantum information theory \cite{10.5555/1146355}.

We here study these three algorithms (PA, EC, and DC), in the presence of quantum side information, and show that they all become equivalent in the one-shot scenario.

The equivalence here means that the following two conditions are satisfied:
1) If one chooses any one of the three algorithms (PA, EC and DC) and its input, the remaining two algorithms and their inputs are also determined uniquely and automatically, and
2) The security or the performance indices of the three algorithms thus determined are all equal
(see Sec. \ref{sec:main_results} for the rigorous statements).
This means that these three algorithms are in fact a single algorithm viewed from different angles.

Also, by generalizing this result to randomized algorithms, we show the equivalence between a security evaluation method for PA (called the leftover hashing lemma, LHL), and the coding theorems for EC and DC.
From a practical viewpoint, this means, e.g., that if one wishes to improve PA algorithms used for quantum cryptography, it suffices to improve DC or EC algorithms instead, and vice versa.

The equivalence is made possible by modifying these three algorithms in the following three points:

First, we consider the generalized cases where there is quantum side information:
In PA, information leaked to the eavesdropper is not necessarily classical but quantum \cite{RennerPhD,5961850}.
In EC and in DC, the decoder can use an auxiliary quantum state, in addition to the classical codeword or the usual classical compressed data \cite{doi:10.1098/rspa.2010.0445,8970489}.

Second, in order to evaluate the security of PA, we use a relatively new security index put forward by K\"{o}nig et al. \cite{5208530} based on the purified distance \cite{TomamichelPhD}, while most literature use the conventional criterion called the $\varepsilon$-security (see e.g. Ref. \cite{RennerPhD}), based on the trace distance.
We stress that we do not lose the security essentially by using this new criterion;
see Section \ref{sec:justification} for details.

Third, we restrict hash functions used for PA, and codes used for PA and EC to be linear.

We also demonstrate the usefulness of this equivalence with two applications:

First, we take previously known security bounds (i.e., LHLs) of PA, and convert them into new coding theorems for EC and DC with quantum side information.
Specifically, we consider three types of hash functions $F$ which are widely used for PA, namely, the universal$_2$ \cite{CARTER1979143}, the {\it almost} universal$_2$ \cite{WEGMAN1981265}, and the {\it almost dual} universal$_2$ functions \cite{FS08,6492260}.
Then we convert their LHLs into the coding theorems of EC and DC using the dual function of $F$.
To the best of our knowledge, these coding theorems are new results that have not been obtained previously.

Second, we apply these results to the security proof of quantum key distribution (QKD) \cite{Nielsen-Chuang}.
In the field of QKD, there are two major approaches to the security proof, called the leftover hashing lemma (LHL)-based approach \cite{RennerPhD}, and the phase error correction (PEC)-based approach \cite{Mayers98,Lo2050,SP00,Koashi,H07}.
Previously, we have shown that these two approaches are in fact equivalent mathematically \cite{8970489}.
In this paper, we simplify and improve this proof by exploiting the equivalence of the three algorithms (PA, EC, and DC).
The proof here is improved in that it is valid for a larger class of hash functions; that is, the equivalence holds for the case where the random function $F$ for PA is {\it almost} universal$_2$ and {\it almost dual} universal$_2$, while previously we treated only the case of universal$_2$ \cite{8970489}.

Further, utilizing the knowledge gained in this new proof, we propose a method to simplify the PEC-based proof.
That is, we propose to evaluate the randomness of Alice's phase degrees of freedom by the smooth max-entropy, rather than by the phase error rate.
This method has an additional merit that every step of the proof becomes equivalent to that of  the corresponding LHL-based proof.
As a result, one is guaranteed to reach exactly the same security bound as in the LHL-based approach, without any extra factor.

The relations between results of the present manuscript and the existing literature are as follows:

In Refs. \cite{PhysRevA.78.032335, 6157080}, Renes and coauthors studied PA and DC in the independent and identically distributed (i.i.d.) setting, not in the one-shot scenario, and showed that their asymptotic rates are equal.
Renes refined these results in more recent papers \cite{8047296, 8434309}, and showed that PA and DC are equivalent in the one-shot scenario using a fixed (not randomized) code (Ref. \cite{8047296}, Corollary 11).
This corresponds to a limited case of Theorem 1 of the present manuscript.
In comparison, our contributions here are the extension of this equivalence that includes EC (Theorem 1), and all the results after that, such as (i) we generalized Theorem 1 further to random codes, and showed the equivalence of LHL for PA and the coding theorems for DC and EC (Theorems 2 and 3), (ii) obtained explicit forms of equivalent pairs of an LHL and a coding theorem for practical cases including the universal$_2$ hash function (Section V), and (iii) showed how these results can simplify and improve the security proof of QKD (Section VI).

We also note that our main result here can be considered as a refinement of the results of our previous paper \cite{8970489}.
Previously, we have shown the leftover hashing lemmas (LHLs) can be derived from a coding theorem of EC with quantum side information \cite{8970489}.
On the other hand, in the present manuscript, we demonstrate that PA and EC are not only directly connected, but are in fact equivalent, if we define the security of PA using the purified distance, instead of the trace distance which has been widely used.
In addition, we also prove that DC with quantum side information is also equivalent to these two algorithms.

\section{Notation}
\label{sec:notation}

All the rules below apply to alphabets besides $A$, $U$ and $V$.

\subsection{Random variables and the Hilbert spaces}
We denote random variables by a capital letter, such as $A$.
The same capital letter will also be used to denote the Hilbert space where the random variable is stored, unless otherwise specified.
For example, if a random variable, $A$, is already defined and if we speak of Hilbert space $A$, it means that random variable $A$ is stored in Hilbert space $A$ (for examples of this notation, see Refs.  \cite{RennerPhD,TomamichelPhD}).

\subsection{Use of tilde}
\label{sec:use_of_tilde}
If a random variable $A$ takes value $a$, and if it is stored in the $z$ basis, we write the situation as $\ket{a}_A$ without tilde; and if stored in the $x$ basis, we write it as $\ket{\tilde{a}}_A$ with tilde.

To put it more precisely: We denote the Pauli matrices in the $z$ and the $x$ bases by $\sigma_Z$ and $\sigma_X$ respectively\footnote{%
It is straightforward to generalize all our results below to the mutually unbiased bases (MUB) in prime power dimensions.
However for the sake of simplicity, in this paper we limit ourselves with the case of qubits.}.
We denote eigenstates of $\sigma_Z$ and $\sigma_X$ in a qubit space by $\ket{z}$ and $\ket{\tilde{x}}$ respectively; i.e., $\sigma_Z\ket{z}=(-1)^z\ket{z}$ and $\sigma_X\ket{\widetilde{x}}=(-1)^x\ket{\widetilde{x}}$ with $z,x\in\{0,1\}$.
We also extend this notation to multi-qubit space with length $l$, and write $\ket{z}=\ket{z_1}\otimes\cdots\otimes\ket{z_l}$ and $\ket{\widetilde{x}}=\ket{\widetilde{x_1}}\otimes\cdots\otimes\ket{\widetilde{x_l}}$ for $z=(z_1,\dots,z_l), x=(x_1,\cdots,x_l)\in\{0,1\}^l$.


\subsection{Classical states}
Given any quantum state $\rho_U$ in a space $U$, we denote by $Z^U$ the random variable that results from the $z$-basis measurement in $U$, and by $\rho_{Z^U}$ the resulting state; see, e.g., Ref. \cite{doi:10.1098/rspa.2010.0445}.

In this notation, when given a state $\rho_{UV}$, the result of $z$ basis measurement on space $U$ takes the form
\begin{equation}
\rho_{Z^UV}=\sum_z(\ket{z}\bra{z}_U\otimes\II_V)\rho_{UV}(\ket{z}\bra{z}_U\otimes\II_V),
\end{equation}
with $\II$ being the identity operator.
We say that $\rho_{UV}$ is {\it classical in} $Z^U$ if $\rho_{UV}=\rho_{Z^UV}$, i.e., if $\rho_{UV}$ is invariant under the $Z^U$-measurement (or informally, if $U$ is already measured in the $z$ basis).

We also use the same notation for the $x$ basis; e.g., $\rho_{UV}$ is classical in $X^U$ if $\rho_{UV}=\rho_{X^UV}$.

\subsection{Distance measures and entropies}

A state $\rho$ is called sub-normalized if ${\rm Tr}(\rho)\le1$.
The $L_1$ norm of a matrix $A$ is $\|A\|_1:={\rm Tr}(\sqrt{AA^\dagger})$.
For two sub-normalized states $\rho,\sigma$, the generalized fidelity is $F(\rho,\sigma):=\|\sqrt{\rho}\sqrt{\sigma}\|_1+\sqrt{(1-{\rm Tr}(\rho))(1-{\rm Tr}(\sigma))}$, and the purified distance is $P(\rho,\sigma):=\sqrt{1-F(\rho,\sigma)^2}$.
We say that $\rho,\sigma$ are $\varepsilon$-close and write $\rho\approx_\varepsilon\sigma$, if $P(\rho,\sigma)\le\varepsilon$ (see e.g. Ref. \cite{TomamichelPhD}).

The conditional min- and max-entropies of a sub-normalized state $\rho_{UV}$ are
\begin{align}
&H_{\rm min}(U|V)_\rho\nonumber\\
&\quad:=-\log\min_{\sigma\ge0}\{{\rm Tr} (\sigma)\,:\,\rho_{UV}\le\II_U\otimes\sigma_V\},\\
&H_{\rm max}(U|V)_{\rho}\nonumber\\
&\quad:=\max_{\sigma\ge0,{\rm Tr}(\sigma)=1}\log_2 (|{\cal U}|F(\rho_{UV},|{\cal U}|^{-1}\II_U\otimes\sigma_V)^2),
\end{align}
where ${\cal U}$ denotes the domain (alphabets) of random variable $U$, and $|{\cal U}|$ its cardinality.
Their smoothed versions are
\begin{align}
H_{\rm min}(U|V)^\varepsilon_\rho&:=\max_{\bar{\rho}}H_{\rm min}(U|V)_{\bar{\rho}},\\
H_{\rm max}(U|V)^\varepsilon_{\rho}&:=\min_{\bar{\rho}}H_{\rm max}(U|V)_{\bar{\rho}},
\end{align}
where the maximum and the minimum are evaluated for sub-normalized states $\bar{\rho}_{UV}\approx_\varepsilon\rho_{UV}$
(see e.g. Ref. \cite{TomamichelPhD}).

\section{Three algorithms to be considered}
\label{sec:def_algorithms}

In this section, we specify three algorithms to be considered in this paper: Privacy amplification (PA), error correction (EC), and data compression (DC).
In the next section, these three algorithms will be shown equivalent to each other.

\subsection{Privacy amplification (PA) against quantum side information}
\label{sec:PA}
Privacy amplification (PA) is an algorithm for extracting a secret random bits, from a bit string which may be partially leaked outside (Fig. \ref{fig:privacy amplification}) \cite{RennerPhD,5961850,VanAssche,tuyls2007security}.

\subsubsection{Description of the algorithm}
By definition, A PA algorithm starts with a situation where the legitimate user, Alice, has a classical information $z\in\{0,1\}^n$, and the eavesdropper, Eve, has a quantum state (quantum side information) $\rho^z$ correlated with $z$.
That is, there is initially a classical-quantum (cq) state,
\begin{equation}
\rho_{Z^AE}=\sum_z\ket{z}\bra{z}_A\otimes\rho_E^z,
\label{eq:rho_AE_defined}
\end{equation}
where Hilbert spaces $A$ and $E$ describe Alice's and Eve's degrees of freedom, respectively.
Then Alice applies a linear function $f:\{0,1\}^n\to\{0,1\}^m$ to $z$ and generates a random bits $k=f(z)$.
The function $f$ used in this context of PA is often called a {\it hash function}.
As a result, random bits $K$  and Eve end up in a state
\begin{equation}
\rho_{KE}^f:=\rho_{f(Z^A)E}=\sum_{k}\ket{k}\bra{k}_K\otimes\sum_{z\in f^{-1}(k)}\rho^z_E,
\label{eq:rho_f_KE_def}
\end{equation}
where $k$ is the output of function $f$, and $K$ is the Hilbert space for storing $k$.

\begin{figure}[t]
 \includegraphics[bb=0 0 950 450, width=\linewidth]{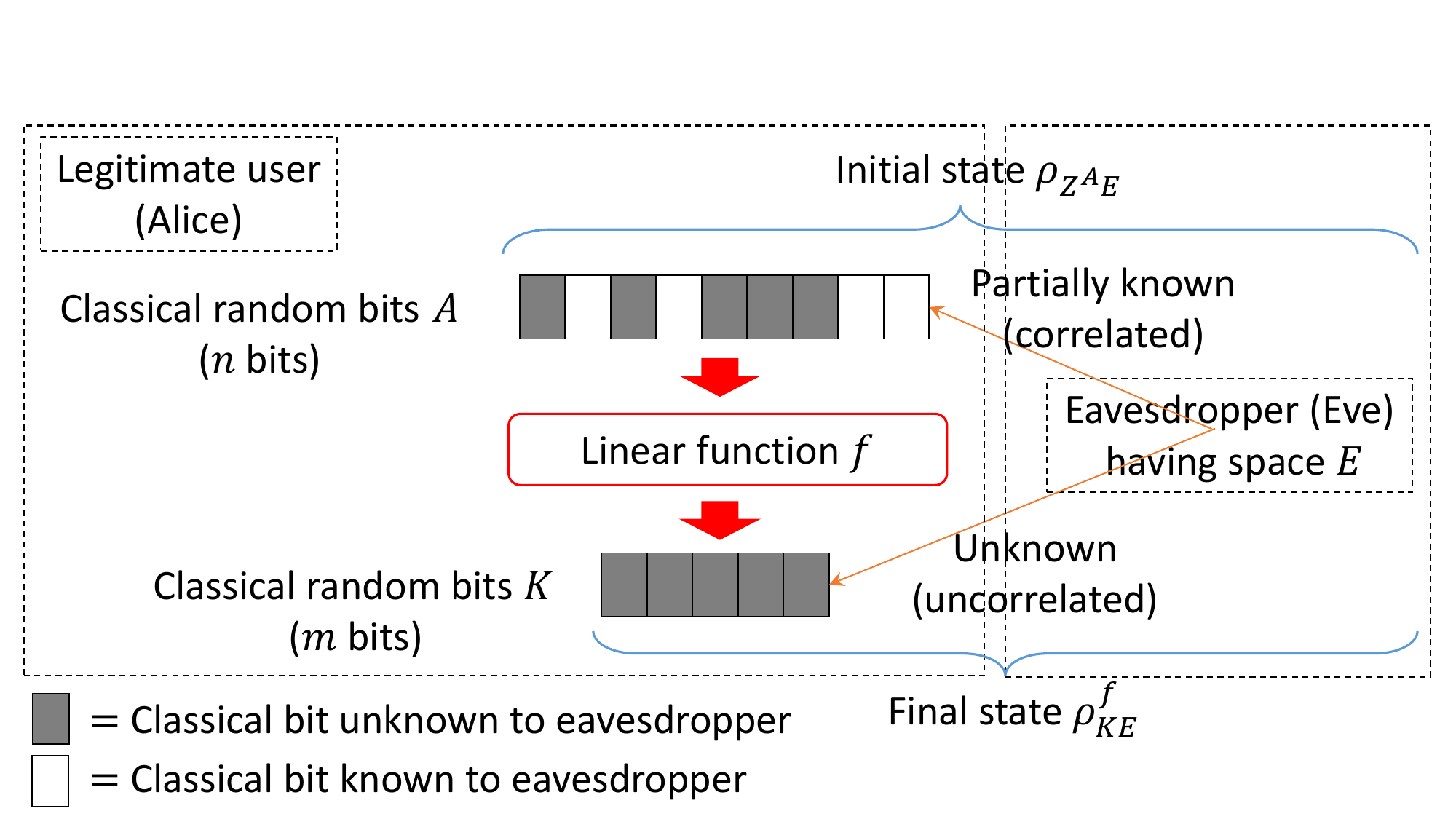}
 \caption{
Privacy amplification (PA) against quantum side information.
Initially, Alice's bit string $Z^A$ is partially known to the Eve; i.e., Eve has a quantum side information in space $E$, which is correlated with $Z^A$.
By applying a linear function $f$, Alice attempts to convert $Z^A$ into a bit string $K$, which is unknown to Eve.
}
 \label{fig:privacy amplification}
\end{figure}

\subsubsection{Security criteria of PA}
We say that random bits $K$ is ideally secure, if it is uniformly distributed and completely unknown to Eve, i.e., if $\rho_{KE}^f=2^{-m}\II_K\otimes\sigma_E$ for a normalized state $\sigma$.
However, such ideal state is unrealistic in practice.
Thus it is customary to define the security by the distance between the actual state $\rho^f_{KE}$ and the ideal state $2^{-m}\II_K\otimes\sigma_E$.

In this paper, we particularly follow K\"{o}nig et al. \cite{5208530} and measure the security by an index
\begin{eqnarray}
Q^{{\rm PA},f}(\rho_{Z^AE})&:=&{\rm Tr}(\rho_{Z^AE})-2^{H_{\rm max}(f(Z^A)|E)_{\rho}-m}\nonumber\\
&=&{\rm Tr}(\rho^f_{KE})-2^{H_{\rm max}(K|E)_{\rho^f}-m}.
\label{eq:QPA_H_max}
\end{eqnarray}
The idea here is to measure the distance between the actual and ideal states by the square of the purified distance.
Note that if $\rho_{Z^AE}$ is normalized, $Q^{{\rm PA},f}(\rho_{Z^AE})$ indeed equals the squared purified distance between the two states.
For the general case where $\rho_{Z^AE}$ is not necessarily normalized, $Q^{{\rm PA},f}(\rho_{Z^AE})$ is defined to scale proportionally to ${\rm Tr}(\rho_{Z^AE})$.

\subsubsection{Justification for using the new index $Q^{{\rm PA},f}$}
\label{sec:justification}

\paragraph{Conventional criterion}
\label{sec:relation_with_the_conventional}
On the other hand, in fact, most existing literature on PA do not use the security index $Q^{{\rm PA},f}$ of (\ref{eq:QPA_H_max}).
They rather use an alternative index
\begin{equation}
d_1(\rho_{KE}^f):=\left\|\rho_{KE}^f-2^{-m}\II_K\otimes\rho_E\right\|_1
\label{eq:conventional_security_criteria}
\end{equation}
based on the trace distance, 
and say that random number $K$ is $\varepsilon$-secure if $\frac12d_1(\rho_{KE}^f)\le\varepsilon$ (e.g. Ref. \cite{RennerPhD}).
This criterion, $\varepsilon$-security, is prevalent because it is explicitly shown to satisfy a desirable property called the universal composability \cite{10.1007/978-3-540-30576-7_21}.

Even so, our use of a rather new criterion, $Q^{{\rm PA},f}$, can be justified by the following two observations.

\paragraph{$\varepsilon$-security using the new index $Q^{{\rm PA},f}$}
The $\varepsilon$-security can also be guaranteed by using $Q^{{\rm PA},f}(\rho_{Z^AE})$.
This is because $Q^{{\rm PA},f}(\rho_{Z^AE})$ bounds $d_1(\rho^f_{KE})$ as
\begin{equation}
 d_1(\rho_{KE}^f) \le 4\sqrt{{\rm Tr}(\rho)}\sqrt{Q^{{\rm PA},f}(\rho_{Z^AE})}
\label{eq:d1_Q_bound}
\end{equation}
(see Appendix \ref{sec:security_Eriteria} for the proof).

\paragraph{Tightness of security bounds}
\label{par:tightness}

Bounds on $Q^{{\rm PA},f}$ thus obtained are nearly as tight as previously obtained bounds on $d_1(\rho^f_{KE})$ in many practical situations.

For example, if we let function $f$ be a random function called the universal$_2$ function \cite{CARTER1979143}, and denote it by capital letter $F$ (see Section \ref{sec:randomizing_fg} for details of this notation), then we have
\begin{equation}
{\rm E}_F\, Q^{{\rm PA},F}(\rho_{Z^AE})\le2^{m-H_{\rm min}(Z^A|E)_\rho},
\label{eq:bound_on_EFQ}
\end{equation}
where the expected value ${\rm E}_F$ is taken on the ensemble of random function $F$ (Lemma 12 of Ref. \cite{8970489}, or Lemma \ref{lmm:LHL1} of the present paper).
If we further apply (\ref{eq:d1_Q_bound}) and Jensen's inequality to (\ref{eq:bound_on_EFQ}), we obtain
\begin{eqnarray}
{\rm E}_F\, d_1(\rho_{KE}^F)
&\le& 4\sqrt{{\rm Tr}(\rho)}\sqrt{2^{m-H_{\rm min}(Z^A|E)_\rho}},
\label{eq:LHL_universal_2}
\end{eqnarray}
which differs only by a factor of 4 from the well-known bound called the leftover hashing lemma (LHL, or Theorem 5.5.1 of Ref. \cite{RennerPhD}.
Later, we will derive it again as Eq. (\ref{eq:original_LHL})).
Note that this factor 4 is harmless in practice, since it can be compensated for by reducing the length of random bits $m$ only by 4 bits.

In Section \ref{sec:application1_LHL_coding_theorem}, we will also show that similarly tight bounds can be obtained for almost universal$_2$ \cite{WEGMAN1981265} and almost {\it dual} universal$_2$ hash functions \cite{FS08,6492260}.

\subsubsection{LHLs of the conventional type and of the new type}
\label{sec:LHL_new_type}
In the past literature, an LHL always meant a bound on an average of the conventional security index, ${\rm E}_F\, d_1(\rho_{KE}^F)$.
In this paper, we extend this terminology and use the word LHL to also mean a bound on an average of the new security index, ${\rm E}_F\, Q^{{\rm PA},F}(\rho_{Z^AE})$ (e.g. (\ref{eq:bound_on_EFQ})).
When we need to distinguish between these two types, we call the former an LHL of the conventional type, and the latter the new type.

We do not claim that the new security index based on $Q^{{\rm PA},F}$ is either best or proper.
The main motivation for using this index here is to clarify the equivalence of PA and other two algorithms, EC and DC, defined below.

\subsection{Error correction (EC) with quantum side information}
\label{sec:EC_side_info}

Next we introduce a generalized form of classical error correction (EC), which we call {\it EC with quantum side information} (Fig. \ref{fig:error_correction}).
This is identical to what we called the {\it generalized error correction} in Ref. \cite{8970489}, but in this paper we will use the name above to clarify the relation with the data compression algorithm to be discussed in the next subsection.
From now on, whenever we say EC, we mean this generalized form.

In a conventional classical EC algorithm \cite{10.5555/1146355}, the sender chooses a message $t\in\{0,1\}^m$ and generates the corresponding codeword $c(t)\in\{0,1\}^n$.
The string $c(t)$ is then sent through a noisy channel, and output as a string $x$, which is $c(t)$ with bit flips applied probabilistically.
The receiver then decodes $x$ to recover  $c(t)$.

On the other hand, in our EC with quantum side information, codeword $c(t)$ is sent through a noisy quantum channel and output as a cq state $\rho^{c(t)}=\sum_x\ket{\tilde{x}}\bra{\tilde{x}}\otimes\tilde{\rho}^{c(t),x}$; see Fig. \ref{fig:error_correction}.
Or equivalently as a classical string $x$ {\it plus} an auxiliary quantum state (quantum side information) $\tilde{\rho}^{c(t),x}$, where the tilde indicates that it is an expansion of  $\rho^{c(t)}$ with respect to $\ket{\tilde{x}}$.
Hence, it is possible that the decoding succeeds with a higher probability, due to information (i.e. hint) obtained by measuring $\tilde{\rho}^{c(t),x}$.
Details are as follows.

\subsubsection{Description of the algorithm}
For the sake of similicity, we assume that the channel is symmetric under bit flips (i.e., binary symmetric channel).
We also assume that the error correcting code $C\subset\{0,1\}^n$ is linear, i.e., a linear $[n,m]$ code.
We denote its linear syndrome function by $g:\{0,1\}^n\to\{0,1\}^{n-m}$.

\paragraph{Encoding and sending}
The sender chooses an message $m\in\{0,1\}^m$, and the classical codeword $c(t)\in C$ that corresponds to it.

\paragraph{Quantum channel}
The codeword $c(t)$ is input to the quantum channel and then output as a cq state $\rho^{c(t)}=\sum_x\ket{\tilde{x}}\bra{\tilde{x}}\otimes\tilde{\rho}^{c(t),x}$.
Or equivalently it is output as a classical string $x$ plus an auxiliary quantum state  $\tilde{\rho}^{c(t),x}$.

\paragraph{Decoding}
\label{sec:decoding_ec}
The receiver performs the following decoding algorithm using $x$ and  $\tilde{\rho}^{c(t),x}$.
\begin{enumerate}
\item Calculate the syndrome $s=g(x)\in\{0,1\}^{n-m}$.
\item Measure $\tilde{\rho}^{c(t),x}$ using positive operator valued measures (POVMs) which depend on $s$; that is,
$M^s=\{M^{s,e}\,|\,e\in\{0,1\}^n\}$ satisfying $\sum_eM^{s,e}=\II$.
The result $e$ is the estimated error pattern.
\item Output $y=x+e$ as the estimated codeword.
\end{enumerate}

\begin{figure}[t]
 \includegraphics[bb=0 0 950 520, width=\linewidth]{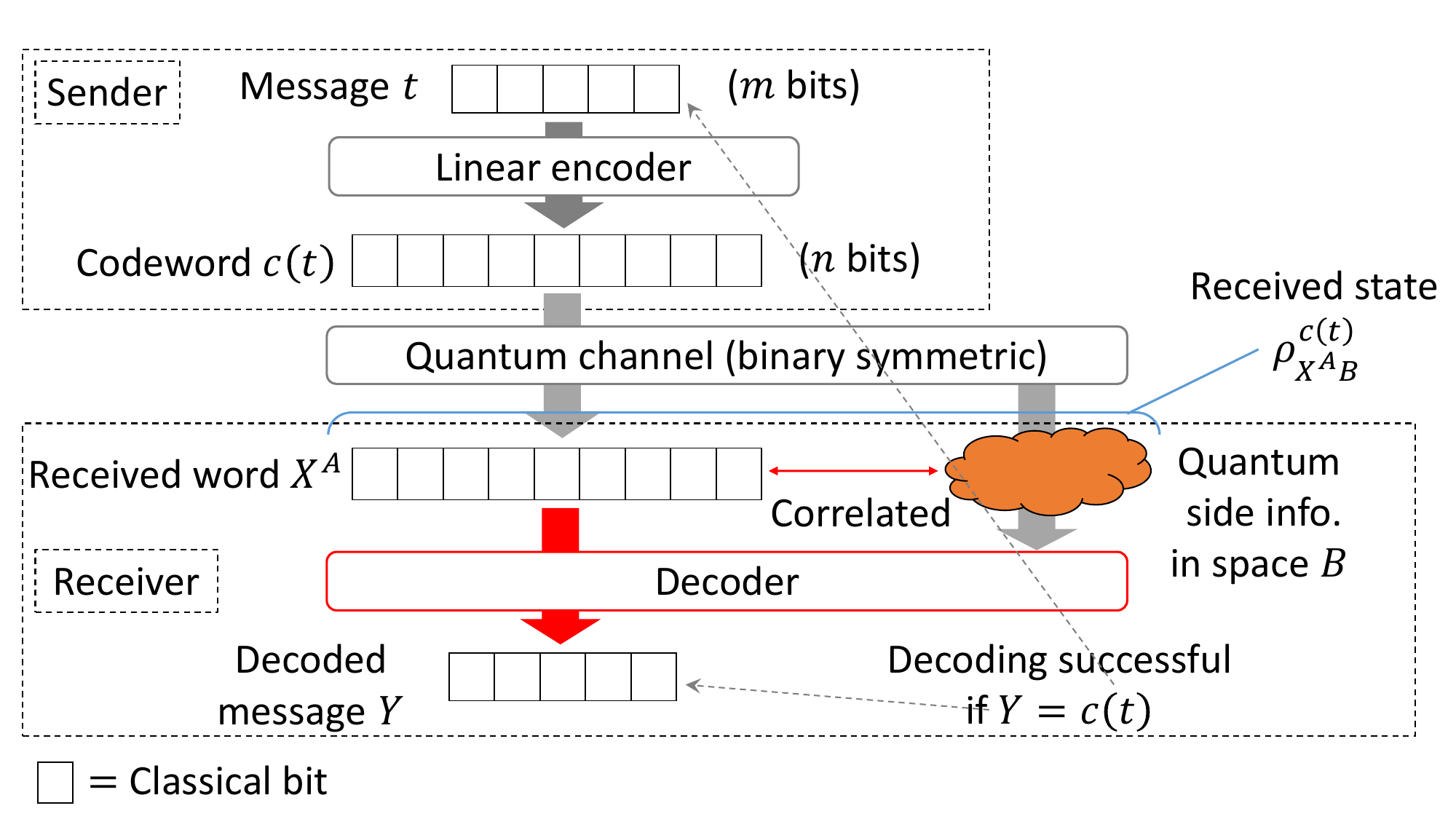}
 \caption{
Error correction (EC) with quantum side information.
The sender uses a linear classical error correcting code $C$, and we denote its syndrome function by $g$.
Note that the situation becomes the same as in the conventional classical error correction, if the quantum side information (orange cloud) is absent.
}
 \label{fig:error_correction}
\end{figure}

\subsubsection{Performance index}
We say that the decoding is successful when its output $y$ equals the correct codeword $c(t)$; i.e.,
$\Pr[{\rm EC\ succeeds}]:=\Pr[Y=c(T)]$.
We then define the performance index of our EC to be its failure probability (block error rate) when using the best decoder,
\begin{equation}
Q^{{\rm EC},g}:=\min_{\{M^s\}}\Pr[Y\ne c(T)].
\end{equation}

As we here limit ourselves with a BSC, we have
\begin{equation}
\tilde{\rho}^{c(t)+\Delta,x+\Delta}=\tilde{\rho}^{c(t),x}
\label{eq:BSC_condition}
\end{equation}
for any $\Delta \in\{0,1\}^n$.
Hence, as we also limit ourselves with a linear code $C$, it suffices to consider $Q^{{\rm EC},g}$ for conditioned on message $t=0$ and the codeword $c(0)=0$; that is,
\begin{eqnarray}
Q^{{\rm EC},g}(\rho_{X^AB}^0)&=&\min_{\{M^s\}}\Pr[Y\ne 0\,|\,T=0]\nonumber\\
&=&\min_{\{M^s\}}\Pr[Y\ne 0\,|\,\rho^0_{X^AB}],
\label{eq:QEC_def_with0}
\end{eqnarray}
where
\begin{equation}
\rho_{X^AB}^0=\sum_x\ket{\tilde{x}}\bra{\tilde{x}}_A\otimes\tilde{\rho}_B^{0,x}
\label{eq:rho_AB_EC_with0}
\end{equation}
denotes the output of the quantum channel on input $c(0)=0$.
As remarked in Section \ref{sec:use_of_tilde}, $\ket{\tilde{x}}$ appearing in (\ref{eq:rho_AB_EC_with0}) means that the channel output $x$ is encoded in the $x$ basis.
We stress that we lose no generality by using this particular choice of the basis; see Section \ref{sec:note on the basis choices}.

In what follows, for ease of notation, we will often omit superscript 0 of $\rho_{X^AB}^0$ and $\tilde{\rho}_B^{0,x}$, and write
\begin{equation}
\rho_{X^AB}=\rho_{X^AB}^0,\quad \tilde{\rho}_B^{x}=\tilde{\rho}_B^{0,x}.
\end{equation}
In this notation, Eq. (\ref{eq:rho_AB_EC_with0}) is rewritten as
\begin{equation}
\rho_{X^AB}=\sum_x\ket{\tilde{x}}\bra{\tilde{x}}_A\otimes\tilde{\rho}_B^{x},
\label{eq:rho_AB_EC}
\end{equation}
and the performance index  (\ref{eq:QEC_def_with0}) can also regarded a function of $\rho_{X^AB}$,
\begin{equation}
Q^{{\rm EC},g}(\rho_{X^AB})=Q^{{\rm EC},g}(\rho_{X^AB}^0).
\label{eq:QEC_def}
\end{equation}

\subsubsection{Relation with classical EC with quantum decoder in the past literature}
The EC defined above is a limitation of classical EC with quantum decoder, which has been discussed in various literatures (see e.g. Ref. \cite{Nielsen-Chuang}, Section 12.3 for the asymptotic case, and \cite{2018arXiv180911143C} for non-asymptotic cases).
The past literature and this paper are the same up to the point where the quantum channel receives codeword $c(t)$ and outputs a quantum state $\rho^{c(t)}$.
On the other hand, while $\rho^{c(t)}$ are general quantum state in the past literature, we here assume that $\rho^{c(t)}$ are cq states satisfying BSC condition (\ref{eq:BSC_condition}).

The main motivation for considering such restricted form of EC is its application to the security proof of QKD, whose details will be given in Section VI. A. 4.
To give a quick overview: In the phase error correction (PEC)-based approach of the QKD security proof, one can benefit by considering EC on a purification $\ket{\rho}_{ABE}$ (virtual state) of the actual state $\rho_{Z^AE}$, instead of dealing with $\rho_{Z^AE}$ directly.
This naturally gives rise to EC as defined above, where the additional information from the (virtual) ancilla space $B$ improves the performance of EC (and thus also improve the security of QKD).

\subsection{Data compression (DC) with quantum side information}
Similarly, a generalization of classical data compression (DC) is known, called {\it DC with quantum side information} (Fig. \ref{fig:data_compression}); see Ref. \cite{PhysRevA.68.042301} for the original definition, and Refs. \cite{doi:10.1098/rspa.2010.0445,TomamichelPhD,6157080,8047296,9261419} for recent results on one-shot and non-asymptotic cases.
From now on, whenever we say DC, we mean this generalized type.

In a conventional classical DC algorithm \cite{10.5555/1146355}, given a classical data $x\in\{0,1\}^n$, one generates the corresponding compressed data $g(x)\in\{0,1\}^m$, using a compression function $g$, and stores it. 
After some time passes, one decodes $g(x)$ to restore $x$.

On the other hand, in DC with quantum side information (Fig. \ref{fig:data_compression}) \cite{PhysRevA.68.042301,doi:10.1098/rspa.2010.0445,TomamichelPhD,8047296,6157080,9261419}, there is an additional set of sub-normalized quantum states (quantum side information) $\tilde{\rho}^x$ which correspond to $x$ and satisfy $\sum_x{\rm Tr}\tilde{\rho}^x=1$.
The user can store $\tilde{\rho}^x$ along with the classical compressed data $g(x)$, and also use it for decoding. 
Thus, as in the EC algorithm of the previous subsection, it is possible that the decoding succeeds with a higher probability, due to information (i.e. hint) obtained by measuring $\tilde{\rho}^x$.
Details are as follows.

\subsubsection{Description of the algorithm}
For the sake of simplicity, we assume that the compression function $g:\{0,1\}^n\to\{0,1\}^m$ is linear.

\paragraph{Encoding} Choose variable $x$ with probability $\Pr[X^A=x]={\rm Tr}(\tilde{\rho}^{x})$.
Then output the compressed data $s=g(x)$ of $x$, along with the corresponding quantum side information $\tilde{\rho}^{x}$.

\paragraph{Decoding} Receive $s\in\{0,1\}^{n-m}$ and state $\tilde{\rho}_{B}^x$ as inputs.
Measure $\tilde{\rho}_{B}^x$ using a POVM $M^s=\{M^{s,e}\,|\,e\in\{0,1\}^n\}$, and output $x'=e$ as the estimated value of $x$.

\begin{figure}[t]
 \includegraphics[bb=0 0 950 500, width=\linewidth]{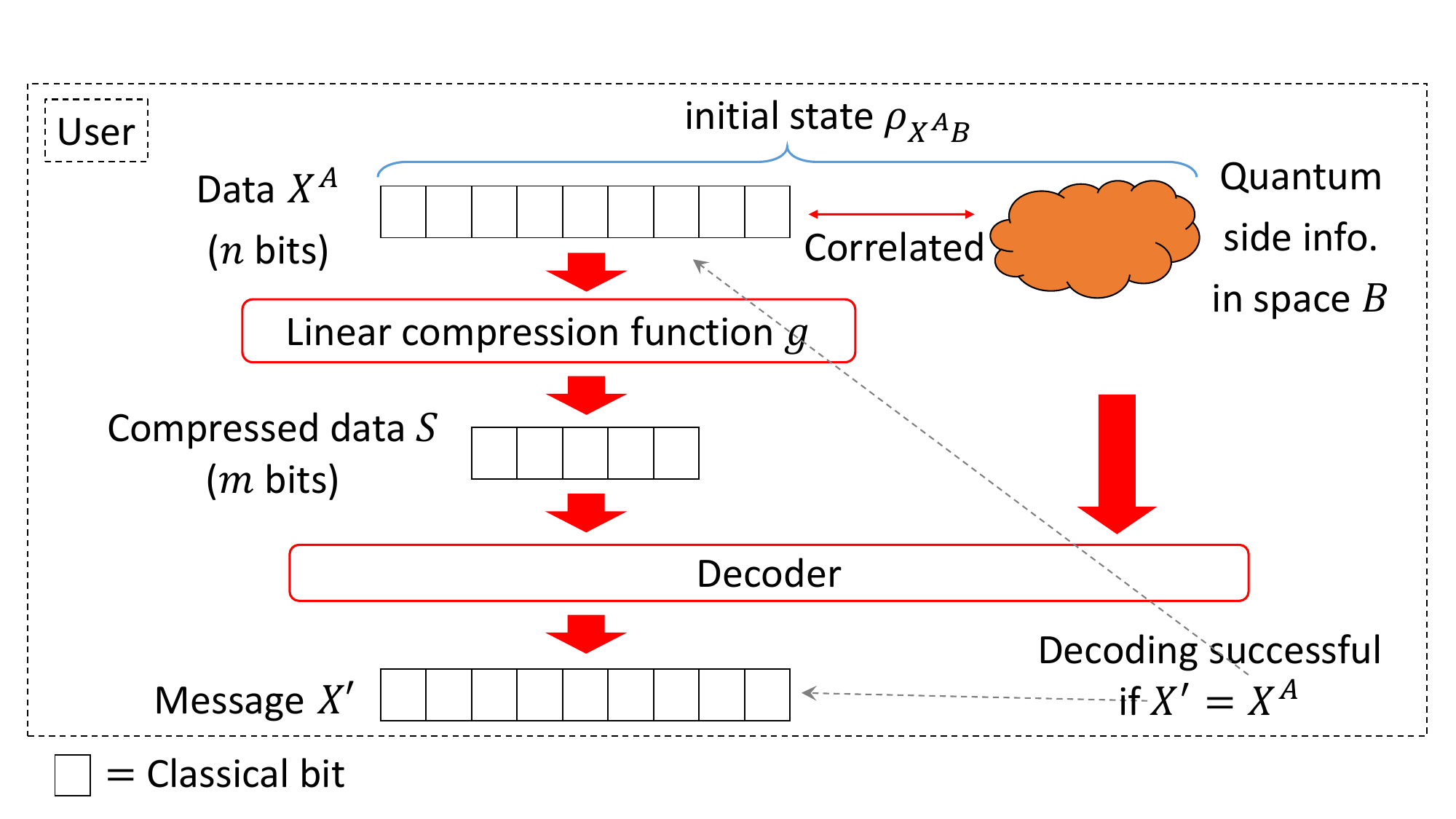}
 \caption{
Data compression (DC) with quantum side information.
Note that the situation becomes the same as in the conventional classical data compression using compression function $g$, if the quantum side information (orange cloud) is absent.
}
 \label{fig:data_compression}
\end{figure}

\subsubsection{Performance index}
We say that the decoding is successful when its output $x'$ equals the correct data $x$;
$\Pr[{\rm DC\ succeeds}]=\Pr[X'=X^A]$.
We then define the performance index of DC with quantum side information to be its failure probability when using the best decoder,
\begin{equation}
Q^{{\rm DC},g}(\{\tilde{\rho}^x\}):=\min_{\{M^s\}}\Pr[X'\ne X^A\,|\,\{\tilde{\rho}^x\}]
\label{eq:Q_DC_defined}
\end{equation}

If we particularly choose to encode the classical data $x$ in the $x$ basis, then the correlation between $x$ and $\tilde{\rho}^x$ can be represented by a single quantum state $\rho_{X^AB}=\sum_x\ket{\tilde{x}}\bra{\tilde{x}}\otimes\tilde{\rho}^x$, which has the identical form to the state $\rho_{X^AB}$ of (\ref{eq:rho_AB_EC}), which is used in EC.
That is, though $\rho_{X^AB}$ was originally introduced to describe the quantum channel of EC, it can also be regarded as describing the initial state of DC.
In this notation, the probability $Q^{{\rm DC},g}$ of (\ref{eq:Q_DC_defined}) can be considered as a function of $\rho_{X^AB}$, 
\begin{eqnarray}
Q^{{\rm DC},g}(\{\tilde{\rho}^x\})&=&Q^{{\rm DC},g}(\rho_{X^AB})\nonumber\\
&=&\min_{\{M^s\}}\Pr[X'\ne X^A\,|\,\rho_{X^AB}].
\label{eq:GDC_def}
\end{eqnarray}

\subsection{Note on the basis choices of the classical information}
\label{sec:note on the basis choices}

In the specification of algorithms above, for each algorithm we assigned different orthogonal bases for encoding the classical variable $z$ or $x$:
Variable $z$ in PA is encoded in the $z$ basis (without tilde), whereas variable $x$ in EC and DC is encoded in the $x$ basis (with tilde).

We note that we chose these particular bases solely for the purpose of simplifying the discussion of the following sections, where we prove the equivalence of the three algorithms.
We stress that we have no other reason for these particular choices.
For example, when one implements any one of the three algorithms in practice, one can use an arbitrary orthogonal basis for encoding classical variables, even besides the $z$ and the $x$ bases.

\section{Main results}
\label{sec:main_results}
Next we present the two main results of this paper.

The first result is that all three algorithms described above are actually equivalent.
That is, if any one of the three is specified, the other two can also be defined uniquely, and in addition, their indices, $Q^{\rm PA}$, $Q^{\rm EC}$, $Q^{\rm DC}$, are all equal.

The second result is that an leftover hashing lemma (LHL) for PA and coding theorems for EC and DC with quantum side information, are also equivalent.
That is, given either an LHL for PA, or a coding theorem for EC or DC, one can also derive the other two propositions (LHL or coding theorem) uniquely, and the three propositions thus obtained are all equivalent.

\subsection{Equivalence of the three algorithms}
\label{sec:equivalence_three_algorithms}

\subsubsection{Definition of the equivalence}
\label{sec:def_of_equivalence_three_algorithms}

By the equivalence of the three algorithms, we mean the following two conditions:
\begin{itemize}
\item [I.]{\bf Correspondence between the three types of algorithms:} If one specifies any one of the three algorithms (PA, EC and DC) and its input, the other two algorithms and their inputs are also specified uniquely and automatically.

Here, to ``specify an algorithm'' means to fix the function $f$ or $g$, which is used for the classical data processing\footnote{
It is straightforward to see that an PA algorithm can be specified uniquely by fixing the function $f$; and that an DC algorithm can be specified by $g$.
An EC algorithm can also be specified uniquely by the linear syndrome function $g$, as $g$ determines the corresponding linear error correcting code $C$, up to unessential linear transformations on input variable $x$}.
To ``specify the input to an algorithm'' is to fix the state $\rho_{Z^AE}$ or $\rho_{X^AB}$, which determines the initial state or the environment of the algorithm.

\item[II.] {\bf Equality of three indices:}
The indices of the three algorithms thus specified (i.e., $Q^{{\rm PA},f}(\rho_{Z^AE})$, $Q^{{\rm EC},g}(\rho_{X^AB})$, and $Q^{{\rm DC},g}(\rho_{X^AB})$) are all equal.
\end{itemize}

Condition I says that, once the hash function $f$ and the input $\rho_{Z^AE}$ for PA are fixed, then the function $g$ and the input $\rho_{X^AB}$ for EC (or for DC) are also fixed automatically; and vice versa.
This correspondence will be defined explicitly in Section \ref{sec:how_to_construct_three}.

Condition II says that once the analysis is finished for the security or the failure probability of any one of the three algorithms, then analyses for the other two algorithms become no longer necessary, since we know that they always give the same result.
From a practical viewpoint, this means, for example, that if one wishes to improve PA algorithms used for quantum cryptography, it suffices to improve DC or EC algorithms instead; and vice versa.
This equality will be shown in Section \ref{sec:subsec_equivalence_3_algorithms}.

In short, there is a correspondence between the three types of algorithms, and the triplet of algorithms connected by this correspondence share the same value of indices $Q$.
This means that the triplet of algorithms are in fact a single algorithm viewed from different angles.
Thus we call it the the equivalence of the three types of algorithms.

%

\subsubsection{Correspondence between the three types of algorithms (Condition I)}
\label{sec:how_to_construct_three}
Here we explicitly define the correspondence mentioned in Condition I above.
That is, we show that if one specifies the classical function (i.e., $f$ or $g$) and the input (i.e., $(\rho_{Z^AE}$ or $\rho_{X^AB}$) for any one of the three algorithms (PA, EC and DC), then the function and the input for the other two algorithms are also specified uniquely and automatically.

\paragraph{Correspondence between $f$ and $g$ (dual functions)}

Given either one of functions $f$ and $g$, one can choose the other function to be the {\it dual} function in the following sense.

\begin{Dfn}[Dual function pair \cite{6492260,7399404}]
\label{dfn:dual_functions}
Suppose a pair of linear functions $f:\{0,1\}^n\to\{0,1\}^m$ and $g:\{0,1\}^n\to\{0,1\}^{n-m}$ is given, and can be written $f(x)=fx^T$, $g(x)=gx^T$ using $m\times n$ matrices $f,g$.
Also suppose that functions $f,g$ are both surjective 
(i.e., matrices $f,g$ are of full rank).
We say that functions $f,g$ are dual, if the corresponding matrices $f,g$ satisfy $fg^T=0$.

We will often write $f\perp g$ or $g=f^\perp$ or $f=g^\perp$, to say that $f$, $g$ are dual.
\end{Dfn}

Note that we do not lose generality by restricting $f$ and $g$ to be  surjective.
A non-surjective $f$ or $g$ can always be modified to be surjective by discarding some of its output bits, and this modification does not affect the indices of the corresponding algorithm, $Q^{\rm PA}$, $Q^{\rm EC}$, and $Q^{\rm DC}$.

\paragraph{Correspondence between $\rho_{Z^AE}$ and $\rho_{X^AB}$ (standard form of tripartite states)}
\label{par:correspondence_states}
Given either one of states $\rho_{Z^AE}$ and $\rho_{X^AB}$, one can always define the other state uniquely by the following procedure (Fig. \ref{fig:construction_standard_state}).
\begin{enumerate}
\item Define a tripartite pure state $\rho_{ABE}$ by purifying $\rho_{Z^AE}$ or $\rho_{X^AB}$.

(E.g., if $\rho_{Z^AE}$ is given, introduce ancilla space $B$ and let $\rho_{ABE}$ such that $\rho_{Z^AE}={\rm Tr}_B(\ket{\rho}\bra{\rho}_{ABE})$.)
\item Trace out the unnecessary one of spaces $B$ and $E$.
\item Measure space $A$ in the appropriate basis, of the $z$ and $x$ bases.

(E.g., in order to obtain $\rho_{X^AB}$, perform the $X^A$-measurement on $\rho_{AB}={\rm Tr}_E(\ket{\rho}\bra{\rho}_{ABE})$.)
\end{enumerate}

In what follows, we will often refer to this procedure as $T$ \footnote{
In fact, applying $T$ twice in a row does not output the original state, e.g. $T(T(\rho_{Z^AE}))\ne \rho_{Z^AE}$, but this fact does not compromise our correspondence of the states.
This is because the state $T(T(\rho_{Z^AE}))$ and $\rho_{Z^AE}$ can be regarded equivalent as long as PA is concerned, and similarly,
$T(T(\rho_{X^AB}))$ and $\rho_{X^AB}$ are equivalent as long as DC or EC is concerned.
Details are given in Appendix \ref{sec:note_on_correspondence}.}.
The pure state $\rho_{ABE}$ obtained in this procedure satisfies $\rho_{AE}=\rho_{Z^AE}$ or $\rho_{AB}=\rho_{X^AB}$ by definition.
For later convenience, we call this type of $\rho_{ABE}$ the {\it standard form}.

\begin{Dfn}[Standard form of tripartite states; Fig. \ref{fig:construction_standard_state}]
\label{dfn:standard_form}
We say that a sub-normalized tripartite state $\rho_{ABE}$ is in the standard form, if it is pure, and satisfies either one of the conditions below, 
\begin{itemize}
\item $\rho_{AE}$ is classical in $Z^A$ (i.e., $\rho_{AE}=\rho_{Z^AE}$),
\item $\rho_{AB}$ is classical in $X^A$ (i.e., $\rho_{AB}=\rho_{X^AB}$).
\end{itemize}
\end{Dfn}

\begin{figure}[t]
 \includegraphics[bb=0 0 900 290, width=\linewidth]{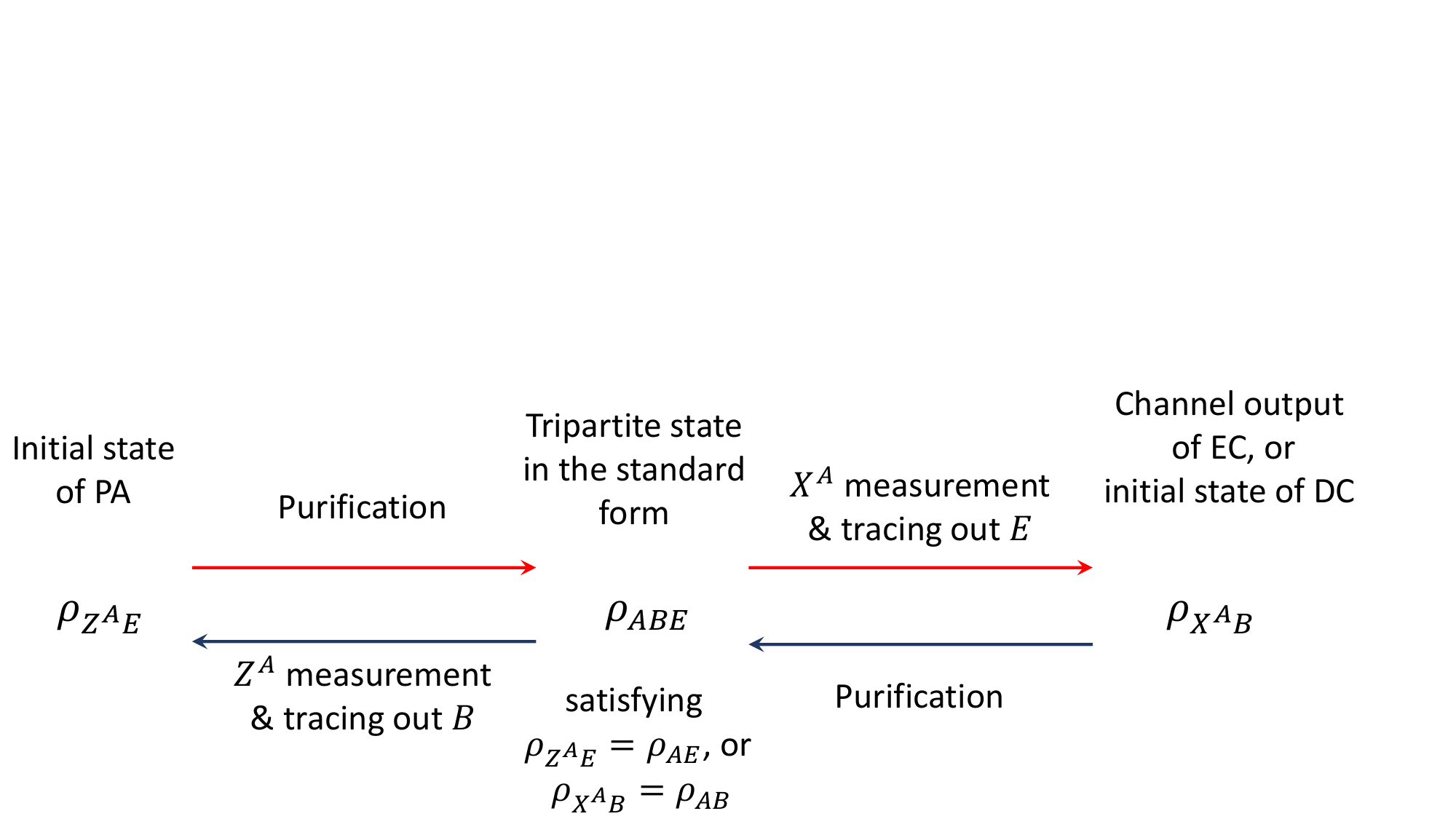}
 \caption{
Standard form of tripartite states:
The state $\rho_{Z^AE}$ that specifies the PA algorithm, and the state $\rho_{X^AB}$ that specifies EC and DC can be converted to each other uniquely, as shown above (cf. Section \ref{sec:how_to_construct_three}).
In these processes, one also obtains a pure tripartite state $\rho_{ABE}$ satisfying the condition $\rho_{AE}=\rho_{Z^AE}$ or $\rho_{AB}=\rho_{X^AB}$.
For later convenience, we call this type of $\rho_{ABE}$ the standard form (see Definition \ref{dfn:standard_form}).
}
 \label{fig:construction_standard_state}
\end{figure}

\subsubsection{Equality of three indices (Condition II)}
\label{sec:subsec_equivalence_3_algorithms}

Next we prove the equality of the indices of the three algorithms thus specified (Condition II of Section \ref{sec:def_of_equivalence_three_algorithms}).
We also show that they can all be expressed by the conditional min- and max-entropies.

\begin{Thm}[Equality of the indices; Fig. \ref{fig:tripartite}.]
\label{Thm:equivalence}
For a dual pair of functions, $f$ and $g$ $(=f^\perp)$, and for a sub-normalized state $\rho_{ABE}$ in the standard form,
\begin{eqnarray}
Q^{{\rm PA},f}(\rho_{Z^AE})&=&Q^{{\rm EC},g}(\rho_{X^AB})= Q^{{\rm DC},g}(\rho_{X^AB})\nonumber\\
&=&{\rm Tr}(\rho)-2^{H_{\rm max}(f(Z^A)|E)_{\rho}-m}\nonumber\\
&=&{\rm Tr}(\rho)-2^{-H_{\rm min}(X^A|B,g(X^A))_\rho}.
\label{eq:Thm_equality}
\end{eqnarray}
\end{Thm}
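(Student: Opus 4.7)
The claim amounts to four sub-equalities: (i) $Q^{EC,g}(\rho_{X^AB})=Q^{DC,g}(\rho_{X^AB})$; (ii) $Q^{PA,f}(\rho_{Z^AE})={\rm Tr}(\rho)-2^{H_{\max}(f(Z^A)|E)_\rho-m}$; (iii) $Q^{DC,g}(\rho_{X^AB})={\rm Tr}(\rho)-2^{-H_{\min}(X^A|B,g(X^A))_\rho}$; and (iv) the duality identity $2^{H_{\max}(f(Z^A)|E)_\rho-m}=2^{-H_{\min}(X^A|B,g(X^A))_\rho}$ for pure $\rho_{ABE}$ in standard form and $f\perp g$. Items (i) and (ii) require no work: (i) follows by inspection of Eqs.~\eqref{eq:QEC_def} and~\eqref{eq:GDC_def}, since both algorithms minimize the same POVM-failure probability $\Pr[e\ne x]$ (using $c(0)=0$ in EC and $X'=e$ in DC); (ii) is literally Eq.~\eqref{eq:QPA_H_max}. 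For (iii), I would invoke the operational interpretation of conditional min-entropy for cq-states due to K\"onig--Renner--Schaffner~\cite{5208530}: for $\sigma_{UV}$ classical in $U$, $2^{-H_{\min}(U|V)_\sigma}=p_{\rm guess}(U|V)$. Applied with $U=X^A$ and $V=(B,g(X^A))$---noting that $\rho_{X^A B g(X^A)}$ is classical in both $X^A$ and the deterministic syndrome---this identifies the optimal-guess POVM with the DC decoder and the guessing-failure probability ${\rm Tr}(\rho)-p_{\rm guess}$ with $Q^{DC,g}$.

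The heart of the proof is (iv). My plan is to introduce a coherent version of the hash: choose an auxiliary linear $h:\{0,1\}^n\to\{0,1\}^{n-m}$ so that $(f,h)$ is a bijection on $\{0,1\}^n$, and define the isometry $V_K:A\to K\otimes A_1$ by $V_K\ket{z}_A=\ket{f(z)}_K\ket{h(z)}_{A_1}$. The pure state $V_K\ket{\rho}_{KA_1BE}$ has reduced state $\rho^f_{KE}$ on $KE$, so the duality of min and max entropies for pure sub-normalized tripartite states~\cite{TomamichelPhD} gives $H_{\max}(f(Z^A)|E)_\rho=-H_{\min}(K|A_1B)_{V_K\rho V_K^\dagger}$, reducing (iv) to showing $2^{-H_{\min}(K|A_1B)}=2^m\cdot 2^{-H_{\min}(X^A|B,g(X^A))}$. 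The dual condition $fg^T=0$ enters here: by an $\FF_2$ linear-algebra calculation, $h$ can be chosen so that measuring $A_1$ in the $X$-basis after $V_K$ realizes exactly the syndrome map $g(X^A)$ on $A$. This is the combinatorial bridge that pairs cosets of $\ker f$ with cosets of $\ker g$ between the PA and DC pictures.

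The principal obstacle is closing this last identity, since $K$ is \emph{not} classical in the coherent pure state $V_K\ket{\rho}$, so the naive classical guessing interpretation of $H_{\min}(K|A_1B)$ does not apply. I would resolve this by writing both sides as semidefinite programs and matching them directly: on the left, use Uhlmann's theorem to re-express $2^{H_{\max}(K|E)}/2^m=\max_\sigma F(\rho^f_{KE},2^{-m}\II_K\otimes\sigma_E)^2$ as a maximization over purifications on $A_1B$; on the right, write $p_{\rm guess}(X^A|B,g(X^A))$ as a maximization over POVMs $\{M^{s,e}\}_e$ on $B$ indexed by syndromes $s$. Naimark dilation turns these POVMs into isometries plus projective measurements which, under the $X$-basis identification of $A_1$ with the syndrome register from the previous paragraph, become precisely purifications of the form needed on the left-hand side. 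The coset pairing supplied by $f\perp g$ forces the two SDP values to coincide exactly, rather than merely satisfying an inequality as in Renes's earlier treatment~\cite{doi:10.1098/rspa.2010.0445}; the sharpness is exactly the feature that previous dualities lacked. The whole construction is symmetric under $X\leftrightarrow Z$, so the proof goes through starting from either of the two conditions in the definition of standard form.
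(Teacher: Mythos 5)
Your decomposition into (i)--(iv) is sound, and (i)--(iii) do match the paper: $Q^{\rm EC}=Q^{\rm DC}$ is read off from the definitions, (ii) is the definition \eqref{eq:QPA_H_max}, and (iii) is exactly the paper's Lemma~\ref{lmm:EC_DC_equivalence}, proved the same way via the K\"onig--Renner--Schaffner guessing-probability interpretation of $H_{\rm min}$ (together with $H_{\rm min}(X^A|B,D)_\tau=H_{\rm min}(X^A|B,g(X^A))_\rho$). Your setup for (iv) is also legitimate: the matrix $h$ with $gh^T=\II$ and $\binom{f}{h}$ invertible exists precisely because $fg^T=0$ forces $\ker(g\,\cdot)={\rm row}(f)$, the isometry $V_K$ does yield a purification of $\rho^f_{KE}$ on $A_1B$ (this uses the standard-form classicality of $\rho_{AE}$ in $Z^A$ to kill cross terms), and the min--max duality then correctly reduces the theorem to $H_{\rm min}(X^K|B,X^{A_1})=m+H_{\rm min}(K|A_1B)$ for the rotated pure state.

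The gap is that this last identity --- which is the entire nontrivial content of the theorem --- is asserted rather than proved. ``Uhlmann on one side, Naimark on the other, and the coset pairing forces the two SDP values to coincide'' is a plan, not an argument: passing from an optimal guessing POVM to a feasible purification gives you one inequality between the two SDPs, but the reverse direction (every Uhlmann purification induces a syndrome-indexed POVM achieving the same value) is exactly where equality can fail for generic states, and you never say where the standard-form hypothesis enters to rule this out. Indeed, without that hypothesis the relation degrades to the one-sided entropic uncertainty relation \eqref{eq:UCR}, so any correct proof must visibly consume the classicality of $\rho_{AE}$ in $Z^A$ (or of $\rho_{AB}$ in $X^A$) at this step. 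The paper closes the gap differently: after the same $A=A_1A_2$ splitting, it measures the syndrome register in the $x$ basis, observes that each conditional state $\ket{\tau^s}$ is a purification of $\rho_{Z^{A_1}E}$ and hence itself in standard form, applies the \emph{equality condition} of the entropic uncertainty relation \cite{PhysRevLett.108.210405} to each $\ket{\tau^s}$ to get $H_{\rm min}(X^{A_1}|B)_{\tau^s}=m-H_{\rm max}(Z^{A_1}|E)_\rho$, and then recombines over $s$ using the additivity $2^{H_{\rm min}(X^{A_1}|X^{A_2},B)}=2^{-(n-m)}\sum_s 2^{H_{\rm min}(X^{A_1}|B)_{\tau^s}}$ (Proposition~4.6 of Ref.~\cite{TomamichelPhD}). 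If you want to keep your SDP route, you would essentially have to re-derive that equality condition from scratch; otherwise, citing it (as the paper does) is the missing ingredient. Two minor further omissions: the reduction of the sub-normalized case to the normalized one (all terms scale with ${\rm Tr}\,\rho$), and the fact that the KRS guessing formula you invoke in (iii) is stated for normalized states.
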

In other words, evaluating the security of PA is equivalent to evaluating the performances of EC and DC.
The proof of this theorem is given in Appendix \ref{sec:proof_of_theorem}.

We note that Renes has already shown part of Theorem \ref{Thm:equivalence}, the equivalence of PA and DC (Ref. \cite{8047296}, Corollary 11).
Thus our contribution in Theorem \ref{Thm:equivalence} is that we extended the equivalence to also include EC.



\begin{figure}[t]
 \includegraphics[bb=0 0 950 500, width=\linewidth]{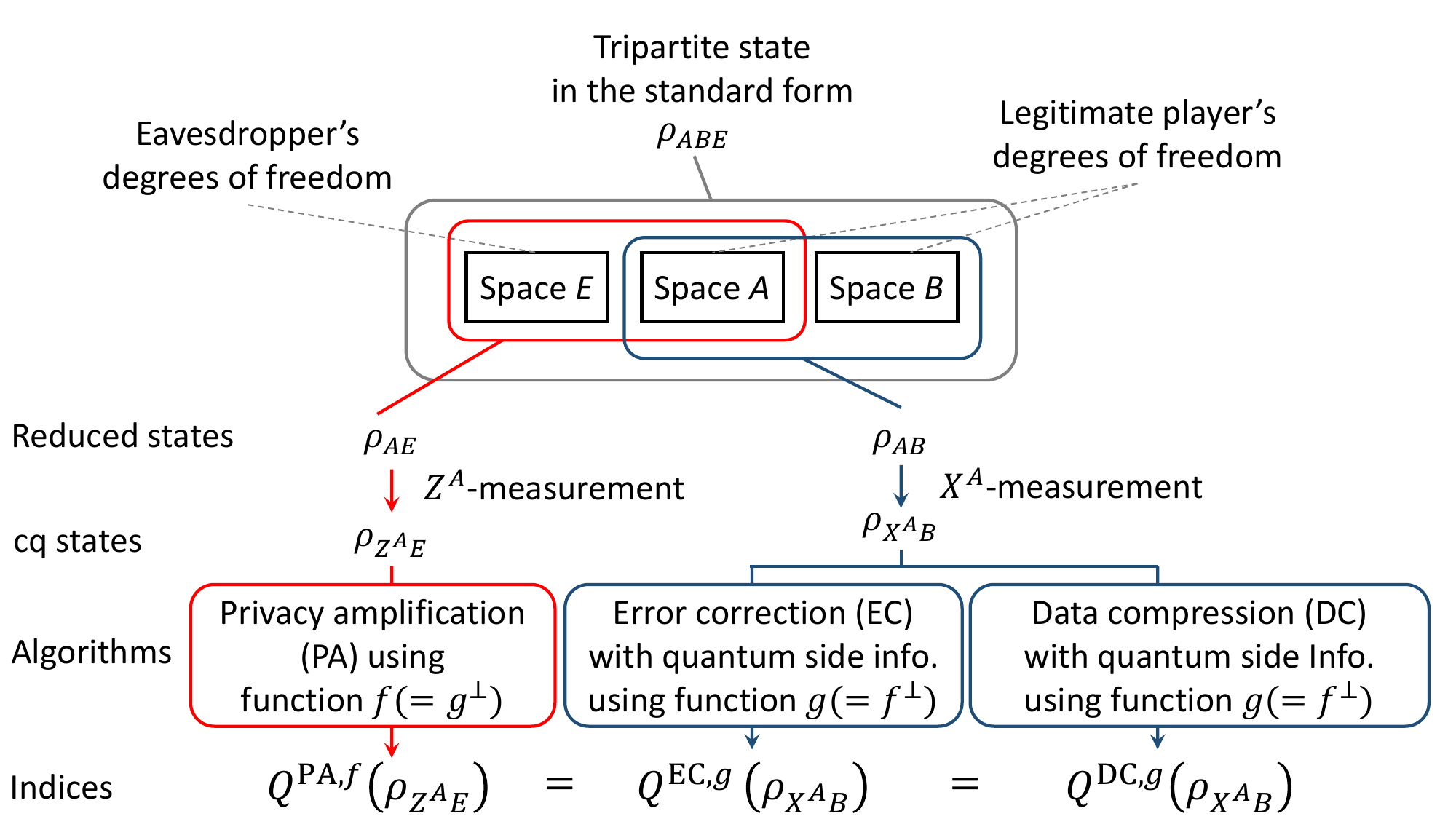}
 \caption{
Equality of three indices:
The indices $Q^{\rm PA}$, $Q^{\rm EC}$, $Q^{\rm DC}$ of the three algorithms are equal (Section \ref{sec:subsec_equivalence_3_algorithms}), if their inputs $\rho_{Z^AE}$ and $\rho_{X^AE}$ are related via tripartite state $\rho_{ABE}$ in the standard form (cf. Fig. \ref{fig:construction_standard_state}), and if functions $f$ and $g$ are dual (cf. Definition \ref{dfn:dual_functions}). }
 \label{fig:tripartite}
\end{figure}

\subsection{Random functions $F,G$}
\label{sec:randomizing_fg}
Next suppose that one chooses function $f$ (or $g$) randomly from a given set ${\cal F}$ (or ${\cal G}$) with a given probability $p(f)$ (or $q(g)$), every time one executes the algorithm.
This corresponds to the case where one uses a random hash function in PA, or a random code in EC and DC.

Even in this setting, Theorem \ref{Thm:equivalence} is true, as long as $f$ and $g$ are randomized in such a way that their duality is maintained.
Details are as follows. 

In order to simplify the notation of these random functions $f,g$, from now on, we will consider them as random variables and write them in uppercase $F,G$.
In this notation, the occurrence probability of $f$, for example, is denoted $\Pr[F=f]=p(f)$, and the expected value of $r(f)$, a function $r$ of $f$, is ${\rm E}_F\,r(F)=\sum_{f\in{\cal F}}p(f)r(f)$.
The same is true for function $g$.

Then we say that a pair of random functions $F,G$ is dual, if they are chosen randomly with their duality (in the sense of Definition \ref{dfn:dual_functions}) maintained; that is,
\begin{Dfn}
We say that a pair of random functions $F,G$ is dual, if they are chosen from sets of the same size, ${\cal F}=\{f_1,f_2,\dots\}$, ${\cal G}=\{g_1,g_2,\dots\}$, respectively, and if each function pair $f_i, g_i$ is dual (in the sense of Definition \ref{dfn:dual_functions}) and is chosen with the same probability, i.e.,
\begin{equation}
f_i\perp g_i\ {\rm and}\ \Pr[F=f_i]=\Pr[G=g_i]\ {\rm for}\ \forall i.  
\end{equation}

We also write $F\perp G$ or $G=F^\perp$ or $F=G^\perp$, if $F$, $G$ are dual.
\end{Dfn}

By using this notation, we can state an averaged version of Theorem \ref{Thm:equivalence} as follows.
\begin{Crl}
\label{crl:equivalence_random_case}
Let $F,G$ be a pair of dual random functions, and $\rho_{ABE}$ be a sub-normalized state in the standard form, then
\begin{eqnarray}
\lefteqn{{\rm E}_F\, Q^{{\rm PA},F}(\rho_{Z^AE})}\nonumber\\
&= &{\rm E}_{G}\, Q^{{\rm EC},G}(\rho_{X^AB})={\rm E}_{G}\, Q^{{\rm DC},G}(\rho_{X^AB})\nonumber\\
&=&{\rm Tr}(\rho)-{\rm E}_F\, 2^{H_{\rm max}(F(Z^A)|E)_{\rho}-m}\nonumber\\
&=&{\rm Tr}(\rho)-{\rm E}_{G}\, 2^{-H_{\rm min}(X^A|B,G(X^A))_\rho}.
\label{eq:equivalence_random_case}
\end{eqnarray}
\end{Crl}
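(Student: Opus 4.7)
The corollary is essentially just Theorem~\ref{Thm:equivalence} averaged under the joint distribution on $(F,G)$, so my plan is to reduce it to a pointwise application of that theorem followed by taking expectations, carefully exploiting the fact that dual random functions share the same index probabilities.

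First, I fix the enumeration $\mathcal{F}=\{f_1,f_2,\dots\}$, $\mathcal{G}=\{g_1,g_2,\dots\}$ with $f_i\perp g_i$ and $p_i:=\Pr[F=f_i]=\Pr[G=g_i]$ supplied by the definition of a dual random pair. For each fixed index $i$, the pair $(f_i,g_i)$ is a dual pair of deterministic linear functions, and $\rho_{ABE}$ is already assumed in the standard form. Thus Theorem~\ref{Thm:equivalence} applies verbatim to $(f_i,g_i,\rho_{ABE})$ and yields the chain of equalities
\begin{align*}
Q^{{\rm PA},f_i}(\rho_{Z^AE})
&=Q^{{\rm EC},g_i}(\rho_{X^AB})\\
&=Q^{{\rm DC},g_i}(\rho_{X^AB})\\
&={\rm Tr}(\rho)-2^{H_{\rm max}(f_i(Z^A)|E)_{\rho}-m}\\
&={\rm Tr}(\rho)-2^{-H_{\rm min}(X^A|B,g_i(X^A))_{\rho}} .
\end{align*}

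Next I multiply each of the five equal quantities by $p_i$ and sum over $i$. Because $\Pr[F=f_i]=\Pr[G=g_i]=p_i$, the $F$-expectation of any quantity depending only on $f$ through $f_i$ coincides with the $G$-expectation of the corresponding quantity indexed by $g_i$, and so the sums on the two ``sides'' of the chain match up correctly: $\sum_i p_i\,Q^{{\rm PA},f_i}={\rm E}_F Q^{{\rm PA},F}$ on one side and $\sum_i p_i\,Q^{{\rm EC},g_i}={\rm E}_G Q^{{\rm EC},G}$, $\sum_i p_i\,Q^{{\rm DC},g_i}={\rm E}_G Q^{{\rm DC},G}$ on the other, and similarly for the entropic expressions. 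Linearity of expectation then gives precisely \eqref{eq:equivalence_random_case}, where the $\mathrm{Tr}(\rho)$ terms factor out since $\sum_i p_i=1$.

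There is no genuine obstacle, since Theorem~\ref{Thm:equivalence} does all of the analytic work. The only point that requires a little care is bookkeeping: one must verify that ``$F$-expectation'' and ``$G$-expectation'' refer to the same underlying distribution over indices $i$, which is exactly the content of the definition of a dual random pair. Once that is noted, the proof is a one-line averaging argument, so in the actual writeup I would simply say: apply Theorem~\ref{Thm:equivalence} to each dual pair $(f_i,g_i)$, multiply by $p_i$, and sum over $i$.
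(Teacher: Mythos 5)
Your proposal is correct and matches the paper's intent exactly: the paper presents the corollary as an immediate averaged version of Theorem~\ref{Thm:equivalence}, obtained by applying the theorem to each dual pair $(f_i,g_i)$ and taking the expectation over the common index distribution, which is precisely your argument. The bookkeeping point you flag (that the $F$- and $G$-expectations run over the same distribution on indices) is indeed the only content beyond the theorem itself, and it is supplied by the definition of a dual random function pair.
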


\subsection{Equivalence of leftover hashing lemmas (LHL) and coding theorems}
\label{sec:equivalence_LHL_conding_theorem}

\subsubsection{Equivalence of LHLs and coding theorems}

If the average security index ${\rm E}_F\, Q^{{\rm PA},F}(\rho_{Z^AE})$, appearing in Corollary \ref{crl:equivalence_random_case}, can be bounded by some function of the initial state $\rho_{Z^AE}$, it will be called a leftover hashing lemma (LHL).
Similarly, if the average failure probability ${\rm E}_{G}\, Q^{{\rm EC},G}(\rho_{X^AB})$ of EC (or ${\rm E}_{G}\, Q^{{\rm DC},G}(\rho_{X^AB})$ of DC) can be bounded by some function of the environment state $\rho_{X^AE}$, it will be called a coding theorem.

We can show that these LHLs and coding theorems are equivalent, if we combine Corollary \ref{crl:equivalence_random_case} and the duality of min- and max-entropies (Lemma \ref{lmm:uncertainty} below).
\begin{Thm}[Equivalence of an LHL and coding theorems]
\label{thm:equivalence_LHL_coding}
Let $F,G$ be a pair of dual random functions, $r$ be an arbitrary function, and $\rho_{Z^AE}$ and $\rho_{X^AB}$ be normalized states.
Then the following three inequalities are equivalent; that is, if any one of them is true, then all of them are true.
\begin{itemize}
\item An LHL for PA using $F$ ($=G^\perp$),
\begin{equation}
{\rm E}_F\, Q^{{\rm PA},F}(\rho_{Z^AE})\le r(H_{\rm min}(Z^A|E)_\rho).
\label{eq:thm:LHL-like_bound}
\end{equation}
\item Coding theorem for EC using $G$ ($=F^\perp$),
\begin{equation}
{\rm E}_{G}\, Q^{{\rm EC},G}(\rho_{X^AB})\le r(n-H_{\rm max}(X^A|B)_\rho).
\label{eq:thm:coding_theorem_EC}
\end{equation}
\item Coding theorem for DC using $G$ ($=F^\perp$),
\begin{equation}
{\rm E}_{G}\, Q^{{\rm DC},G}(\rho_{X^AB})\le r(n-H_{\rm max}(X^A|B)_\rho).
\label{eq:thm:coding_theorem_DC}
\end{equation}
\end{itemize}
\end{Thm}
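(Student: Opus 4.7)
The plan is to derive Theorem~\ref{thm:equivalence_LHL_coding} by combining two ingredients that are already (or about to be) in the paper: Corollary~\ref{crl:equivalence_random_case}, which equates the three left-hand sides, and the duality between conditional min- and max-entropies (Lemma~\ref{lmm:uncertainty}), which equates the three right-hand sides. Once both identifications are in place, the three displayed inequalities collapse to a single inequality, so any one of them implies the other two.

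For the left-hand sides I would fix a standard-form pure $\rho_{ABE}$ whose two classical-quantum reductions, obtained as in Section~\ref{sec:how_to_construct_three}, are the states $\rho_{Z^AE}$ and $\rho_{X^AB}$ appearing in the theorem; this is the natural common setting in which all three algorithms can be run on the same underlying state. Applying Corollary~\ref{crl:equivalence_random_case} to this $\rho_{ABE}$ and to the dual random pair $(F,G=F^\perp)$ then yields
\begin{equation*}
{\rm E}_F\, Q^{{\rm PA},F}(\rho_{Z^AE}) = {\rm E}_G\, Q^{{\rm EC},G}(\rho_{X^AB}) = {\rm E}_G\, Q^{{\rm DC},G}(\rho_{X^AB}).
\end{equation*}
For the right-hand sides I would invoke the min/max-entropy duality for the mutually unbiased pair $(X^A, Z^A)$ on a pure tripartite state, namely
\begin{equation*}
H_{\rm min}(Z^A|E)_\rho = n - H_{\rm max}(X^A|B)_\rho,
\end{equation*}
which is the content anticipated by Lemma~\ref{lmm:uncertainty} (and is in fact already implicit in Theorem~\ref{Thm:equivalence} with $f$ taken to be the identity and $g$ trivial). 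Substituting this identity gives $r(H_{\rm min}(Z^A|E)_\rho) = r(n - H_{\rm max}(X^A|B)_\rho)$, so the three right-hand sides coincide as well.

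The main obstacle is ensuring the equivalence is genuinely bidirectional under the universal quantification over normalized input states. Assuming, say, (\ref{eq:thm:LHL-like_bound}) for every normalized $\rho_{Z^AE}$ is meant to yield (\ref{eq:thm:coding_theorem_EC}) for every normalized $\rho_{X^AB}$, so I would check that the construction of Section~\ref{sec:how_to_construct_three} is surjective (indeed bijective up to the irrelevant choice of purification) between these two classes of normalized inputs; once that is verified, the single-state equivalence above lifts to the universally quantified one. Beyond this bookkeeping, the only genuinely nontrivial ingredient is the min/max duality, which must be stated in precisely the form above so that the arguments of $r$ on the two sides match exactly; this is a known entropic uncertainty relation for MUBs, and fits naturally as the immediately preceding Lemma~\ref{lmm:uncertainty}.
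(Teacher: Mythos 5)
Your proposal is correct and follows essentially the same route as the paper: the paper likewise derives this theorem by combining Corollary~\ref{crl:equivalence_random_case} (to equate the three left-hand sides) with the equality case of the entropic uncertainty relation in Lemma~\ref{lmm:uncertainty} for standard-form states (to equate the three right-hand sides). Your additional remark about the bidirectionality of the state construction is reasonable bookkeeping that the paper leaves implicit.
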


This theorem is a direct consequence of Corollary \ref{crl:equivalence_random_case} above, and Lemma \ref{lmm:uncertainty} below.
The left hand sides of (\ref{eq:thm:LHL-like_bound}), (\ref{eq:thm:coding_theorem_EC}), (\ref{eq:thm:coding_theorem_DC}) are equal due to Corollary \ref{crl:equivalence_random_case}, and the right hands are also equal due to Lemma \ref{lmm:uncertainty}.

\begin{Lmm}[Entropic uncertainty relation \cite{PhysRevLett.106.110506}, and its equality condition \cite{PhysRevLett.108.210405}]
\label{lmm:uncertainty}
Let $A$ be an $n$-qubit space and $\rho_{ABE}$ be a sub-normalized pure state.
Then for $\varepsilon\ge0$, we have \cite{PhysRevLett.106.110506}
\begin{equation}
H_{\rm max}^\varepsilon(X^A|B)_\rho+H_{\rm min}^\varepsilon(Z^A|E)_{\rho}\ge n.
\label{eq:UCR}
\end{equation}
The inequality holds if $\rho_{ABE}$ is in the standard form \cite{PhysRevLett.108.210405}.
\end{Lmm}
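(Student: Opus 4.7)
The plan is to establish Lemma~\ref{lmm:uncertainty} in two steps matching the two cited references: the inequality~(\ref{eq:UCR}) follows the route of Tomamichel and Renner~\cite{PhysRevLett.106.110506}, while the equality condition for standard-form states~\cite{PhysRevLett.108.210405} can be obtained as a direct consequence of Theorem~\ref{Thm:equivalence} of the present manuscript.

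For the inequality, the main device is the duality of smooth min- and max-entropies on pure states, namely $H_{\rm max}^\varepsilon(A|B)_\rho = -H_{\rm min}^\varepsilon(A|E)_\rho$ for any pure sub-normalized $\rho_{ABE}$. I would first dispose of the non-smooth case $\varepsilon=0$, where the operational interpretation of $H_{\rm min}(Z^A|E)$ as minus the logarithm of Eve's optimal guessing probability for $Z^A$ combines with the mutually unbiased overlap $c = \max_{z,x}|\langle z|\tilde x\rangle|^2 = 2^{-n}$ to yield $-\log_2 c = n$ on the right-hand side. Smoothing is then handled by the standard trick: take a near-optimizing $\bar\rho$ for one smoothed entropy, purify it, apply duality to reach the other smoothed entropy, and tune the $\varepsilon$-radii so that both balls contain the relevant optimizers.

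For the equality, I would instantiate Theorem~\ref{Thm:equivalence} with the degenerate dual pair $f = \mathrm{id}_n$ and $g:\{0,1\}^n \to \{0,1\}^0$ the trivial map (so $m=n$ and $fg^T = 0$ vacuously). The last equality in~(\ref{eq:Thm_equality}) then specializes, for $\rho_{ABE}$ in standard form, to
\begin{equation}
H_{\rm max}(Z^A|E)_\rho + H_{\rm min}(X^A|B)_\rho = n,
\end{equation}
which is the saturated version of~(\ref{eq:UCR}) after exchanging $(X,B)\leftrightarrow(Z,E)$---an exchange which is a symmetry of the standard-form definition. The principal obstacle I foresee is a potential circular dependency: if the Appendix proof of Theorem~\ref{Thm:equivalence} itself invokes~(\ref{eq:UCR}), this short argument is unavailable and one must instead establish the equality directly by taking a Schmidt decomposition of a standard-form purification, evaluating $H_{\rm min}(Z^A|E)$ and $H_{\rm max}(X^A|B)$ explicitly, and tracking the saturation condition of each inequality used inside the Tomamichel--Renner argument. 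Extending the equality to smoothed entropies is a secondary technical subtlety, since one must select optimizers on both sides that are simultaneously compatible with standard form.
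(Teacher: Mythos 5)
The first thing to note is that the paper does not prove Lemma \ref{lmm:uncertainty} at all: both the inequality and its saturation for standard-form states are imported verbatim from Refs.~\cite{PhysRevLett.106.110506} and \cite{PhysRevLett.108.210405}, so the ``proof'' the paper relies on is a citation. Your attempt to supply an actual argument therefore goes beyond what the paper does, but the route you choose for the equality part fails for exactly the reason you flag as a risk: it is circular. The Appendix proof of Theorem \ref{Thm:equivalence} rests on the final lemma there, which establishes $H_{\rm max}(f(Z^A)|E)_{\rho}+H_{\rm min}(X^A|B,g(X^A))_\rho=m$, i.e.\ Eq.~(\ref{eq:UCR_fZ}), and that proof explicitly invokes Lemma \ref{lmm:uncertainty} --- including its equality part --- twice: once applied to the conditional pure states $\ket{\tau^{s}}$ in case (i), and once applied to $\rho_{ABE}$ directly in case (ii). Consequently, instantiating Theorem \ref{Thm:equivalence} with $f=\mathrm{id}_n$ and $g$ trivial cannot be used to derive the equality condition; it derives the lemma from itself. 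Your fallback (Schmidt decomposition plus tracking saturation through the Tomamichel--Renner chain) is the only non-circular option of the two you offer, but it is left entirely unexecuted, and it is precisely the nontrivial content of Ref.~\cite{PhysRevLett.108.210405}.

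The inequality sketch also has a gap. The duality $H_{\rm max}^\varepsilon(A|B)_\rho=-H_{\rm min}^\varepsilon(A|E)_\rho$ concerns the fully quantum system $A$, whereas (\ref{eq:UCR}) involves two incompatible measured versions $X^A$ and $Z^A$ conditioned on \emph{different} systems; duality alone does not bridge these, and the $\varepsilon=0$ remark about guessing probabilities and the overlap $c=2^{-n}$ is an interpretation of the statement rather than a proof of it. The actual argument of Ref.~\cite{PhysRevLett.106.110506} requires considerably more (a chain of relations for R\'enyi-type entropies and a delicate treatment of the smoothing). Finally, even granting the non-smoothed equality for standard-form states, the lemma asserts it for $H^\varepsilon$ with $\varepsilon\ge0$, and your argument does not address $\varepsilon>0$. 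Given that the paper treats this lemma as an external input, the clean resolution is simply to cite the two references rather than to reprove them via Theorem \ref{Thm:equivalence}.
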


\subsubsection{Equivalence of the generalized bounds}

We note that Theorem \ref{thm:equivalence_LHL_coding} above can also be generalized in a straightforward way.

In Theorem \ref{thm:equivalence_LHL_coding}, we implicitly assumed that the average security index ${\rm E}_F\, Q^{{\rm PA},F}(\rho_{Z^AE})$ of PA should be bounded by a function of its initial state $\rho_{Z^AE}$, and also that the failure probability ${\rm E}_{G}\, Q^{{\rm EC},G}(\rho_{X^AB})$ of EC (or ${\rm E}_{G}\, Q^{{\rm DC},G}(\rho_{X^AB})$ of DC) should be bounded by its environment state $\rho_{X^AB}$.
However, as one can see from the proof of Theorem \ref{thm:equivalence_LHL_coding}, such restriction is not in fact essential in our formalism. 
Rather, we may consider bounds of the generalized forms by removing such restriction, and prove their equivalence.

\begin{Thm}[Generalization of Theorem \ref{thm:equivalence_LHL_coding}]
\label{Thm:H_min_H_max}
Let $F,G$ be a pair of dual random functions, $r$ be a function, and $\rho_{ABE}$ be a normalized state in the standard form.
Then the six inequalities of the form
\begin{equation}
a\le r(b)
\label{eq:general_inequality_form}
\end{equation}
are all equivalent, 
where 
\begin{eqnarray*}
a&\in&\{{\rm E}_F\, Q^{{\rm PA},F}(\rho_{Z^AE}),\ {\rm E}_{G}\,Q^{{\rm EC},G}(\rho_{X^AB}),\\
&&\quad {\rm E}_{G}\, Q^{{\rm DC},G}(\rho_{X^AB})\},\\
b&\in&\{H_{\rm min}(Z^A|E)_\rho,\ n-H_{\rm max}(X^A|B)_\rho\}.
\end{eqnarray*}
That is, if one inequality of the form (\ref{eq:general_inequality_form}) is true, then all the six are true.
\end{Thm}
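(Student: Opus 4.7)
The plan is to observe that this generalized theorem is nothing more than a repackaging of Corollary \ref{crl:equivalence_random_case} together with the equality clause of Lemma \ref{lmm:uncertainty}, exactly the two ingredients that already powered Theorem \ref{thm:equivalence_LHL_coding}. The strategy: fix a normalized state $\rho_{ABE}$ in the standard form and a dual pair $F, G = F^\perp$, and show that all three candidate quantities for $a$ are numerically equal, and both candidate quantities for $b$ are numerically equal. Once this is done, the six inequalities of the form $a \le r(b)$ are literally the same inequality, and equivalence is immediate (regardless of the shape of $r$).

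First I would invoke Corollary \ref{crl:equivalence_random_case}, which under exactly the hypotheses of the theorem (state in standard form, dual pair of random functions) gives
\begin{equation*}
{\rm E}_F\, Q^{{\rm PA},F}(\rho_{Z^AE}) = {\rm E}_{G}\, Q^{{\rm EC},G}(\rho_{X^AB}) = {\rm E}_{G}\, Q^{{\rm DC},G}(\rho_{X^AB}).
\end{equation*}
This collapses the three choices for $a$ into a single value. Next I would invoke Lemma \ref{lmm:uncertainty} at $\varepsilon = 0$: since $\rho_{ABE}$ is a normalized pure state in the standard form, the equality condition applies and gives
\begin{equation*}
H_{\rm max}(X^A|B)_\rho + H_{\rm min}(Z^A|E)_\rho = n,
\end{equation*}
so that $n - H_{\rm max}(X^A|B)_\rho = H_{\rm min}(Z^A|E)_\rho$. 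This collapses the two choices for $b$ into a single value. Combining the two collapses, each of the six inequalities $a \le r(b)$ is the same numerical statement, so any one of them implies (and is implied by) every other.

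I do not expect any real obstacle here; the content of the theorem is purely the observation that Corollary \ref{crl:equivalence_random_case} and Lemma \ref{lmm:uncertainty} can be chained without needing the monotonicity or any other structural property of $r$, which is why Theorem \ref{thm:equivalence_LHL_coding} generalizes so cleanly. The only small point worth verifying explicitly is that the hypotheses stated in the theorem (normalized, standard form, pure) match the hypotheses required for the equality clause of Lemma \ref{lmm:uncertainty} as cited from Ref.~\cite{PhysRevLett.108.210405}, and that purity of $\rho_{ABE}$ is indeed part of the standard-form assumption (as recorded in Definition 1). With that checked, the proof is a one-line composition of the two previously established identities.
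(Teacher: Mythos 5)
Your proposal is correct and matches the paper's own argument: the paper proves Theorem~\ref{thm:equivalence_LHL_coding} by exactly this two-step collapse (Corollary~\ref{crl:equivalence_random_case} for the left-hand sides, the equality clause of Lemma~\ref{lmm:uncertainty} for the right-hand sides), and Theorem~\ref{Thm:H_min_H_max} is presented as the observation that this argument never used any structural property of $r$ or the particular pairing of $a$ with $b$. Your explicit check that the standard-form and normalization hypotheses are what the equality clause of Lemma~\ref{lmm:uncertainty} (at $\varepsilon=0$) requires is the only point of substance, and you handle it correctly.
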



\section{Application 1: Leftover Hashing Lemmas for (dual) universal$_2$ functions $F$, and the corresponding coding theorems}
\label{sec:application1_LHL_coding_theorem}

Next we apply our main results of the previous section, Theorem \ref{Thm:H_min_H_max} in particular, to explicit examples of dual random function pair, $F$ and $G$ $(=F^\perp)$.

Specifically, we choose $F$ be the three classes of hash functions that are commonly used for PA: the universal$_2$ \cite{CARTER1979143}, the almost universal$_2$ \cite{WEGMAN1981265}, and the dual universal$_2$ functions \cite{FS08,6492260}.
Then we prove LHLs of the new type (i.e. bounds on ${\rm E}_F\,Q^{{\rm PA},F}$; cf. \ref{sec:LHL_new_type}) for these $F$.

Further, we apply Theorem \ref{Thm:H_min_H_max} to the new LHLs thus obtained, and demonstrate that the coding theorems for EC and DC using the dual random function $F^\perp$ follows automatically.

To the best of our knowledge, most of the LHLs and the coding theorems obtained below are new results that have not been obtained previously (also see the first paragraph of Section \ref{sec:explicit_example_LHL}).

\subsection{Basic strategy}
The basic strategy here is to borrow previous results related with LHLs of the conventional form (i.e. bounds on the standard security index ${\rm E}_F\,d_1(\rho^F_{KE})$) \cite{RennerPhD,5961850,FS08,6492260}, and convert them to LHLs of the new type.
The actual procedure is as follows.

\subsubsection{Previous derivation of LHLs of the conventional form}
First we review the previous results of Refs. \cite{RennerPhD,5961850,FS08,6492260}.
In these papers, LHLs of the conventional form were obtained by the following two steps.
\begin{enumerate}
\item Prove a bound of the form
\begin{equation}
{\rm E}_F\, d_2(\rho_{KE}^F|\sigma_E)\le 2^{-m}r(H_2(\rho_{Z^AE}|\sigma_E)),
\label{eq:d_2_bound}
\end{equation}
where $r(\cdot)$ is a non-increasing function, and $d_2(\cdot)$ and $H_2(\cdot|\cdot)$ are defined by
\begin{align}
&d_2(\rho_{KE}|\sigma_E)\nonumber\\
&:={\rm Tr}
\left\{
\left(
\left(\rho_{KB}-2^{-m}\II_K\otimes\rho_E\right)\left(\II_K\otimes\sigma_E^{-1/2}\right)
\right)^2
\right\},\\
&H_2(\rho_{Z^AE}|\sigma_E):=-\log {\rm Tr}\left\{\left(\rho_{Z^AE}\left(\II_A\otimes \sigma_E^{-1/2}\right)\right)^2\right\}
\end{align}
for a normalized $\sigma$ (see, e.g., \cite{RennerPhD}).

\item By using relations
\begin{eqnarray}
d_1(\rho^f_{KE})&\le& \sqrt{2^md_2(\rho^f_{KE}|\sigma_E)},
\label{eq:d_1_bounded_by_d2}\\
H_{\rm min}(Z^A|E)_{\rho}&\le&H_2(\rho_{Z^AE}|\sigma_E)
\label{eq:H_min_bounded_by_H_2} 
\end{eqnarray}
which hold for a normalized $\sigma$ (see Ref. \cite{RennerPhD} for the proofs), and by using Jensen's inequality, one obtains an LHL of the conventional form,
\begin{equation}
{\rm E}_F\,d_1(\rho^F_{KE})\le \sqrt{r(H_{\rm min}(Z^A|E)_{\rho})}.
\label{eq:conventional_LHL}
\end{equation}
\end{enumerate}

\subsubsection{How to derive the new type of LHLs}
\label{sec;how_to_derive_LHL_QPA}

As we have seen above, bounds of the form (\ref{eq:d_2_bound}) have already been proved for some random functions $F$ \cite{RennerPhD,5961850,FS08,6492260}.
These bounds can readily be converted to the new type of LHLs (i.e. bounds on ${\rm E}_F\, Q^{{\rm PA},F}(\rho_{Z^AE})$) by using the following theorem.

\begin{Thm}
\label{thm:d_2_bound}
Let $\rho_{Z^AE}$ be a normalized state, $F$ be a random function, and $r$ be a non-increasing function.
If we have a bound of the form
\begin{equation}
{\rm E}_F\, d_2(\rho_{KE}^F|\rho_E)\le 2^{-m}r(H_2(\rho_{Z^AE}|\rho_E))
\end{equation}
(i.e., Inequality (\ref{eq:d_2_bound}) with $\sigma_E=\rho_E$),
then we have an LHL for PA using $F$,
\begin{equation}
{\rm E}_F\, Q^{{\rm PA},F}(\rho_{Z^AE})\le r(H_{\min}(Z^A|E)_\rho),
\label{eq:EFQPA_for_general_F}
\end{equation}
and coding theorems for EC and DC using $G=F^\perp$,
\begin{align}
&{\rm E}_F\, Q^{{\rm EC},F^\perp}(\rho_{X^AB})={\rm E}_F\, Q^{{\rm DC},F^\perp}(\rho_{X^AB})\nonumber\\
&\quad\le r(n-H_{\max}(X^A|B)_\rho),
\end{align}
which holds for a normalized state $\rho_{X^AB}$.
\end{Thm}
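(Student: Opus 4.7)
The plan has a single technically meaningful step flanked by two routine applications of results already established in the paper. The overall strategy is: (i) prove a per-$f$ bound relating the two security indices, $Q^{{\rm PA},f}(\rho_{Z^AE})\le 2^m\,d_2(\rho^f_{KE}|\rho_E)$; (ii) take expectations and substitute $H_{\min}$ for $H_2$ via (\ref{eq:H_min_bounded_by_H_2}) and the monotonicity of $r$; and (iii) transport the resulting PA bound to EC and DC through Corollary \ref{crl:equivalence_random_case} together with Lemma \ref{lmm:uncertainty}.

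For the per-$f$ step, fix $f$ and a normalized $\rho_{Z^AE}$, and set $\mu:=2^{-m}\II_K\otimes\rho_E$. Since $\rho^f_{KE}$ is then also normalized,
\begin{equation*}
Q^{{\rm PA},f}(\rho)\;=\;1-\max_{\sigma_E}F(\rho^f_{KE},2^{-m}\II_K\otimes\sigma_E)^2\;\le\;1-F(\rho^f_{KE},\mu)^2.
\end{equation*}
A short direct calculation, exploiting that $\rho^f_{KE}$ is classical in $K$, yields the identity $2^{\tilde{D}_2(\rho^f_{KE}\|\mu)}=2^m\,d_2(\rho^f_{KE}|\rho_E)+1$, where $\tilde{D}_2$ denotes the sandwiched R\'enyi-$2$ divergence. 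It therefore suffices to show $F^2+2^{\tilde{D}_2}\ge 2$, which follows from AM--GM together with the monotonicity $\tilde{D}_{1/2}\le\tilde{D}_2$ of sandwiched R\'enyi divergences in $\alpha$: since $\tilde{D}_{1/2}(\rho\|\mu)=-2\log F(\rho,\mu)$, monotonicity gives $F^2\cdot 2^{\tilde{D}_2}\ge 1$, whence $F^2+2^{\tilde{D}_2}\ge 2\sqrt{F^2\cdot 2^{\tilde{D}_2}}\ge 2$.

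Averaging the per-$f$ bound over $F$ and invoking the hypothesis of the theorem gives ${\rm E}_F\,Q^{{\rm PA},F}(\rho_{Z^AE})\le 2^m\,{\rm E}_F\,d_2(\rho^F_{KE}|\rho_E)\le r(H_2(\rho_{Z^AE}|\rho_E))$. Applying (\ref{eq:H_min_bounded_by_H_2}) with $\sigma_E=\rho_E$ gives $H_{\min}(Z^A|E)_\rho\le H_2(\rho_{Z^AE}|\rho_E)$, so the non-increasing $r$ yields the LHL (\ref{eq:EFQPA_for_general_F}). For the coding theorems, choose a standard-form pure $\rho_{ABE}$ whose reductions are the given $\rho_{Z^AE}$ and $\rho_{X^AB}$ (possible by the construction in Section \ref{sec:how_to_construct_three}); Corollary \ref{crl:equivalence_random_case} equates ${\rm E}_F\,Q^{{\rm PA},F}(\rho_{Z^AE})$ with both ${\rm E}_F\,Q^{{\rm EC},F^\perp}(\rho_{X^AB})$ and ${\rm E}_F\,Q^{{\rm DC},F^\perp}(\rho_{X^AB})$, while the equality case of Lemma \ref{lmm:uncertainty} gives $H_{\min}(Z^A|E)_\rho=n-H_{\max}(X^A|B)_\rho$. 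Substituting both completes the argument.

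The only real obstacle is the per-$f$ step: the naive route through $d_1\le\sqrt{2^m d_2}$ of (\ref{eq:d_1_bounded_by_d2}) loses both a square root and a factor of two and would produce a weaker theorem with $r$ replaced by $\sqrt{r}$. The essential move is to recognize that $2^m d_2+1$ equals the sandwiched R\'enyi-$2$ exponential $2^{\tilde{D}_2(\rho^f_{KE}\|\mu)}$, which makes the R\'enyi monotonicity at $\alpha=1/2$ versus $\alpha=2$ directly applicable; everything else is then a clean invocation of results already in the paper.
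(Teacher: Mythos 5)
Your proposal is correct and follows essentially the same route as the paper: the per-$f$ inequality $Q^{{\rm PA},f}\le 2^m d_2(\rho^f_{KE}|\rho_E)$ is exactly the paper's first lemma, since your identity $2^{\tilde{D}_2(\rho^f_{KE}\|\mu)}=2^m d_2+1$ and the monotonicity $\tilde{D}_{1/2}\le\tilde{D}_2$ combined with AM--GM reproduce, in divergence language, the paper's chain $H_{\rm max}\ge\widetilde{H}_{1/2}^\downarrow\ge\widetilde{H}_2^\downarrow$ followed by $1-1/x\le x-1$. The remaining steps (the substitution $H_{\rm min}\le H_2$ and the transport to EC/DC via Corollary \ref{crl:equivalence_random_case} and the equality case of Lemma \ref{lmm:uncertainty}) coincide with the paper's use of its second lemma and Theorem \ref{thm:equivalence_LHL_coding}.
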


Note here that if we apply (\ref{eq:d1_Q_bound}) and Jensen's inequality to Inequality (\ref{eq:EFQPA_for_general_F}), we again derive an LHL of the conventional type
\begin{equation}
{\rm E}_F\, d_1(\rho_{KE})\le 4 \sqrt{r(H_{\min}(Z^A|E)_\rho)}.
\end{equation}
Also note that this inequality differs only by the factor of four from the previous result (\ref{eq:conventional_LHL}), which was obtained without using (\ref{eq:EFQPA_for_general_F}).
In this sense we say that the new bound (\ref{eq:EFQPA_for_general_F}) is nearly as tight as the previous bound (\ref{eq:conventional_LHL}).

\subsubsection{Proof of Theorem \ref{thm:d_2_bound}}
Theorem \ref{thm:d_2_bound} can be proved by using Theorem \ref{thm:equivalence_LHL_coding} and the following two lemmas.
\begin{Lmm}
For any random hash function $F$, and for a normalized state $\rho_{Z^AE}$,
\begin{equation}
Q^{{\rm PA},f}(\rho_{Z^AE})\le 2^m d_2(\rho_{KE}^f|\rho_E).
\end{equation}
\end{Lmm}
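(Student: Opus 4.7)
The plan is to upper-bound $Q^{{\rm PA},f}$ by unfolding the definition of the conditional max-entropy, inserting a judicious trial state into the optimization, and then invoking a chi-squared-type bound on the squared purified distance. Since $\rho_{Z^AE}$ is normalized, so is $\rho_{KE}^f$, and Eq.~(\ref{eq:QPA_H_max}) simplifies to
\begin{equation*}
Q^{{\rm PA},f}(\rho_{Z^AE}) = 1 - \max_{\sigma_E} F\!\left(\rho_{KE}^f,\,2^{-m}\II_K\otimes\sigma_E\right)^{2}.
\end{equation*}
First I would plug in the feasible choice $\sigma_E=\rho_E$ (the $E$-marginal of $\rho_{KE}^f$), obtaining
\begin{equation*}
Q^{{\rm PA},f}(\rho_{Z^AE}) \le 1 - F(\rho_{KE}^f,\tau)^{2} = P(\rho_{KE}^f,\tau)^{2}, \qquad \tau := 2^{-m}\II_K\otimes\rho_E,
\end{equation*}
where both states are normalized, so the right side is precisely the squared purified distance.

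Next I would invoke the quantum chi-squared bound
\begin{equation*}
1 - F(\rho,\tau)^{2} \le {\rm Tr}\!\big[(\rho-\tau)\tau^{-1/2}(\rho-\tau)\tau^{-1/2}\big],
\end{equation*}
valid for normalized states with ${\rm supp}(\rho)\subseteq{\rm supp}(\tau)$. The derivation I have in mind is the chain
\begin{equation*}
1-F^{2} \le 2(1-F) \le {\rm Tr}\!\big[(\sqrt{\rho}-\sqrt{\tau})^{2}\big] \le {\rm Tr}\!\big[(\rho-\tau)\tau^{-1/2}(\rho-\tau)\tau^{-1/2}\big],
\end{equation*}
where the first step uses $0\le F\le 1$, the second uses $F = \|\sqrt{\rho}\sqrt{\tau}\|_1 \ge {\rm Tr}(\sqrt{\rho}\sqrt{\tau}) \ge 0$ (the trace of a product of two PSD operators being non-negative), and the third is the operator analogue of the scalar bound $(\sqrt{p}-\sqrt{q})^{2}\le (p-q)^{2}/q$. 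Substituting $\tau^{-1/2}=2^{m/2}\II_K\otimes\rho_E^{-1/2}$ and expanding the cross terms directly identifies
\begin{equation*}
{\rm Tr}\!\big[(\rho_{KE}^f-\tau)\tau^{-1/2}(\rho_{KE}^f-\tau)\tau^{-1/2}\big] = 2^{m}\, d_{2}(\rho_{KE}^f|\rho_E),
\end{equation*}
and the chain then yields $Q^{{\rm PA},f}\le 2^{m}d_{2}$ as claimed.

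The hard step will be the third inequality in the chi-squared chain, which is elementary when $\rho$ and $\tau$ commute — one just uses the trivial pointwise estimate $(\sqrt{p}+\sqrt{q})^{2}\ge q$ — but requires care for genuinely non-commuting states. I would attack it either by diagonalizing $\tau$ first and adapting the classical pointwise argument using the resulting weighted Hilbert--Schmidt structure, or by invoking the integral representation of the operator square root together with an operator Cauchy--Schwarz step after symmetric conjugation by $\tau^{\pm 1/4}$. The other two inequalities in the chain and the algebraic identification of the chi-squared as $2^{m}d_{2}$ are essentially routine.
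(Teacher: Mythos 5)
Your first step and your final algebraic identification are both correct: restricting the optimization in $H_{\rm max}(K|E)$ to the trial state $\sigma_E=\rho_E$ gives exactly $Q^{{\rm PA},f}\le 1-F(\rho^f_{KE},\tau)^2$ with $\tau=2^{-m}\II_K\otimes\rho_E$, and ${\rm Tr}\bigl[(\rho^f_{KE}-\tau)\tau^{-1/2}(\rho^f_{KE}-\tau)\tau^{-1/2}\bigr]=2^m d_2(\rho^f_{KE}|\rho_E)$ does hold (expand the square and use ${\rm Tr}(\rho^f_{KE})={\rm Tr}(\tau)=1$). The genuine gap is the third link of your chain, ${\rm Tr}\bigl[(\sqrt{\rho}-\sqrt{\tau})^2\bigr]\le{\rm Tr}\bigl[(\rho-\tau)\tau^{-1/2}(\rho-\tau)\tau^{-1/2}\bigr]$, which you acknowledge you have not proved. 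Note that this is a \emph{strictly stronger} statement than the lemma requires: since ${\rm Tr}\bigl[(\sqrt{\rho}-\sqrt{\tau})^2\bigr]=2-2\,{\rm Tr}(\sqrt{\rho}\sqrt{\tau})\ge 2-2F\ge 1-F^2$, you have inserted an intermediate quantity \emph{above} the one you need to bound, so your route cannot be easier than a direct attack on $1-F^2\le\chi^2$. Neither of your proposed strategies closes it: diagonalizing $\tau$ does not reduce to the scalar estimate because $(\sqrt{\rho})_{ij}$ is not a function of $\rho_{ij}$ alone, and the scalar identity $(\sqrt{p}-\sqrt{q})^2=(p-q)^2/(\sqrt{p}+\sqrt{q})^2$ has no operator analogue; the integral representation $\sqrt{\rho}-\sqrt{\tau}=\pi^{-1}\int_0^\infty\sqrt{t}\,(\tau+t)^{-1}(\rho-\tau)(\rho+t)^{-1}dt$ produces a resolvent of $\rho$ on one side that cannot be absorbed into powers of $\tau$ without losing the constant (a naive Cauchy--Schwarz there yields a divergent $t$-integral).

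The inequality you actually need, $1-F(\rho,\tau)^2\le{\rm Tr}\bigl[\rho\,\tau^{-1/2}\rho\,\tau^{-1/2}\bigr]-1$, is true, and the clean route to it is the one the paper takes: the monotonicity in $\alpha$ of the sandwiched R\'enyi conditional entropies (Corollary 4 of Tomamichel--Berta--Hayashi) gives $\widetilde{H}_{1/2}^{\downarrow}(K|E)_\rho\ge\widetilde{H}_{2}^{\downarrow}(K|E)_\rho$, i.e.\ $F(\rho^f_{KE},\tau)^2=2^{\widetilde{H}_{1/2}^{\downarrow}-m}\ge 2^{\widetilde{H}_{2}^{\downarrow}-m}=\bigl(1+2^m d_2\bigr)^{-1}$; then the elementary estimate $1-1/x\le x-1$ for $x>0$ yields $1-F^2\le 2^m d_2$. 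So you should replace the entire chain after $Q^{{\rm PA},f}\le 1-F(\rho^f_{KE},\tau)^2$ with this R\'enyi monotonicity argument; as it stands, your proof rests on an unestablished (and stronger-than-necessary) matrix inequality.
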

\begin{proof}
According to Ref. \cite{TBH14}, we have
\begin{equation}
\widetilde{H}_{1/2}^\downarrow(K|E)_\rho\ge \widetilde{H}_{2}^\downarrow(K|E)_\rho,
\end{equation}
where
\begin{align}
\widetilde{H}_{1/2}^\downarrow(K|E)_\rho&:=2\log_2 F(\II_K\otimes\rho_E,\rho_{KE}),\\
\widetilde{H}_{2}^\downarrow(K|E)_\rho&:=-\log_2 {\rm Tr}\left\{\left(\rho_{KE}\left(\II_K\otimes \rho_E^{-1/2}\right)\right)^2\right\}.
\label{eq:tilde_H2_down_defined}
\end{align}
Also by definition of $H_{\rm max}(K|E)_\rho$ and $\widetilde{H}_{1/2}^\downarrow(K|E)_\rho$, we have $H_{\rm max}(K|E)_\rho\ge \widetilde{H}_{1/2}^\downarrow(K|E)_\rho$.
Thus
\begin{eqnarray}
Q^{{\rm PA},f}(\rho)&\le&1-2^{-m+\widetilde{H}_{1/2}^\downarrow(K|E)_\rho}\nonumber
\\
&\le&1-2^{-m+\widetilde{H}_{2}^\downarrow(K|E)_\rho}\le2^{m-\widetilde{H}_{2}^\downarrow(K|E)_\rho}-1\nonumber\\
&=&2^m{\rm Tr}\left\{\left(\rho_{KE}\left(\II_K\otimes \rho_E^{-1/2}\right)\right)^2\right\}-1\nonumber\\
&=&2^m d_2(\rho_{KE}|\rho_E),
\label{eq:lemma2:1-2mF}
\end{eqnarray}
where in the second line we used $1-1/x\le x-1$ for $ x>0$.
\end{proof}
\begin{Lmm}
For any random hash function $F$, and for a normalized state $\rho_{Z^AE}$,
\begin{equation}
H_{\min}(Z^A|E)_\rho\le H_2(\rho_{Z^AE}|\rho_E).
\label{eq:lemma2:H_Alpha_relation}
\end{equation}
\end{Lmm}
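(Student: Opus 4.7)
The plan is to recast the inequality as an elementary operator manipulation on a sandwiched version of $\rho_{Z^AE}$. Setting $p := 2^{-H_{\min}(Z^A|E)_\rho}$, the definition of $H_{\rm min}$ furnishes a positive operator $\sigma^*_E$ with ${\rm Tr}(\sigma^*_E)=p$ satisfying $\rho_{Z^AE}\le \II_A\otimes\sigma^*_E$. The target inequality is equivalent to the single claim ${\rm Tr}\{(\rho_{Z^AE}(\II_A\otimes\rho_E^{-1/2}))^2\}\le p$, to which I would reduce the whole argument.

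Next, I would introduce the sandwiched operator $\widetilde{\rho} := (\II_A\otimes\rho_E^{-1/4})\,\rho_{Z^AE}\,(\II_A\otimes\rho_E^{-1/4})$. Cyclicity of the trace gives ${\rm Tr}[\widetilde{\rho}^{\,2}]=2^{-H_2(\rho_{Z^AE}|\rho_E)}$, so bounding this quantity by $p$ is the goal. Partial tracing yields ${\rm Tr}_A[\widetilde{\rho}]=\rho_E^{1/2}$, and the operator inequality $\rho_{Z^AE}\le\II_A\otimes\sigma^*_E$ is preserved under conjugation by $\II_A\otimes\rho_E^{-1/4}$, producing $\widetilde{\rho}\le \II_A\otimes S_E$ with $S_E := \rho_E^{-1/4}\sigma^*_E\rho_E^{-1/4}$.

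The decisive step invokes the elementary fact that positive operators satisfying $0\le A\le B$ obey ${\rm Tr}[A^2]\le{\rm Tr}[AB]$ (from $A^{1/2}AA^{1/2}\le A^{1/2}BA^{1/2}$). Applying this with $A=\widetilde{\rho}$ and $B=\II_A\otimes S_E$, and then contracting over $A$ using ${\rm Tr}_A[\widetilde{\rho}]=\rho_E^{1/2}$, the right-hand side reduces to ${\rm Tr}[\rho_E^{1/2}\rho_E^{-1/4}\sigma^*_E\rho_E^{-1/4}]={\rm Tr}[\sigma^*_E]=p$, which closes the argument.

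The only real obstacle is the technicality that $\rho_E^{-1/4}$ requires $\rho_E$ to be invertible. I would resolve this by working throughout on the support of $\rho_E$, on which the inverse is well defined and to which an optimal $\sigma^*_E$ may be restricted without increasing its trace; equivalently, the standard $\varepsilon\to 0^+$ continuity argument after replacing $\rho_E$ by $\rho_E+\varepsilon\II_E$ yields the same conclusion.
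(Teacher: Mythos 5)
Your proof is correct, but it takes a genuinely different route from the paper's. The paper's proof is essentially a citation: it identifies $H_2(\rho_{Z^AE}|\rho_E)$ with the sandwiched R\'enyi quantity $\widetilde{H}_{2}^\downarrow(Z^A|E)_\rho$ and $H_{\min}(Z^A|E)_\rho$ with $\widetilde{H}_{\infty}^\uparrow(Z^A|E)_\rho$ in the notation of Tomamichel--Berta--Hayashi, and then invokes their Corollary 4 (Eq.~(48) with $\alpha=\infty$) to conclude $\widetilde{H}_{\infty}^\uparrow\le\widetilde{H}_{2}^\downarrow$. You instead give a self-contained first-principles argument: extract the optimal witness $\sigma^*_E$ from the semidefinite-program form of $H_{\min}$, conjugate the operator inequality $\rho_{Z^AE}\le\II_A\otimes\sigma^*_E$ by $\II_A\otimes\rho_E^{-1/4}$, and finish with the elementary fact that $0\le A\le B$ implies ${\rm Tr}[A^2]\le{\rm Tr}[AB]$, together with ${\rm Tr}_A[\widetilde\rho]=\rho_E^{1/2}$. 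All the individual steps check out (including the reduction of the claim to ${\rm Tr}\{(\rho_{Z^AE}(\II_A\otimes\rho_E^{-1/2}))^2\}\le 2^{-H_{\min}(Z^A|E)_\rho}$, the cyclicity computation, and the support/regularization caveat for $\rho_E^{-1/4}$, which you handle correctly; note that even without restricting $\sigma^*_E$ the final trace is ${\rm Tr}[\Pi_{{\rm supp}\,\rho_E}\,\sigma^*_E]\le{\rm Tr}[\sigma^*_E]$, so the bound survives regardless). What each approach buys: the paper's version is one line and situates the inequality within the general monotonicity structure of the $\widetilde H_\alpha$ family, which the surrounding lemma (on $Q^{{\rm PA},f}\le 2^m d_2$) also relies on; yours is longer but elementary, avoids any dependence on the R\'enyi-entropy literature, and makes transparent exactly where the min-entropy witness enters.
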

\begin{proof}
$H_2(\rho_{Z^AE}|\rho_E)$ equals the quantity $\widetilde{H}_{2}^\downarrow(Z^A|E)_{\rho}$ defined in Ref. \cite{TBH14}, 
and $H_{\min}(Z^A|E)_\rho$ equals $\widetilde{H}_{\infty}^\uparrow(Z^A|E)_{\rho}$ also defined in the same paper.
Hence it suffices to show
\begin{equation}
\widetilde{H}_{\infty}^\uparrow(Z^A|E)_{\rho}\le\widetilde{H}_{2}^\downarrow(Z^A|E)_{\rho}.
\label{eq:lemma2:H_Alpha_relation}
\end{equation}
This inequality follows by substituting $\alpha=\infty$ in Eq. (48), Corollary 4, Ref. \cite{TBH14}.
\end{proof}


\subsection{Explicit examples of an LHL and the corresponding coding theorems}
\label{sec:explicit_example_LHL}

Next we apply Theorem \ref{thm:d_2_bound} above to three examples of random functions $F$.
To the best of our knowledge, all lemmas in this subsection, except Lemma \ref{lmm:LHL1}, are new results that have not been obtained previously.

\subsubsection{Case where $F$ is universal$_2$}
The first example of $F$ is the universal$_2$ function.
This is the most commonly used random functions for PA.

\begin{Dfn}[Linear universal$_2$ function \cite{CARTER1979143}]
\label{dfn:universal2_defined}
A linear random function $F:\{0,1\}^n\to\{0,1\}^m$ is universal$_2$ if
\begin{equation}
\Pr[F(x)=0]\le 2^{-m}\ {\rm for}\ \forall x\ne0.
\label{eq:universal2_def}
\end{equation}
\end{Dfn}

\paragraph{Leftover hashing lemma}
For this type of $F$, an inequality of the type of (\ref{eq:d_2_bound}) is already known, which takes the form
\begin{equation}
{\rm E}_F\, d_2(\rho_{KE}|\sigma_E)\le2^{-H_2(\rho_{Z^AE}|\sigma_E)}
\label{eq:proof_lemma2_EFTr_rho}
\end{equation}
for a normalized $\rho_{Z^AE}$ (Lemma 5.4.3, Ref. \cite{RennerPhD}).
If one applies relations (\ref{eq:d_1_bounded_by_d2}) and (\ref{eq:H_min_bounded_by_H_2}) to this inequality, one obtains an LHL of the conventional form,
\begin{equation}
{\rm E}_F\, d_1(\rho_{KE}^F)\le \sqrt{{\rm Tr}(\rho)}\sqrt{2^{m-H_{\rm min}(Z^A|E)_\rho}}
\label{eq:original_LHL}
\end{equation}
for a sub-normalized $\rho_{Z^AE}$ (Theorem 5.5.1 of Ref. \cite{RennerPhD}).

On the other hand, if we instead apply the former half of Theorem \ref{thm:d_2_bound} to (\ref{eq:proof_lemma2_EFTr_rho}), we obtain an LHL of the new type.
\begin{Lmm}[LHL for PA using a universal$_2$ hash function, in terms of $Q^{\rm PA}$ (Lemma 12, Ref. \cite{8970489})]
\label{lmm:LHL1}
For a universal$_2$ function $F$, and for a sub-normalized state $\rho_{Z^AE}$,
\begin{equation}
{\rm E}_F\, Q^{{\rm PA},F}(\rho_{Z^AE})\le 2^{m-H_{\rm min}(Z^A|E)_\rho}.
\label{eq:bounds_universal2}
\end{equation}
\end{Lmm}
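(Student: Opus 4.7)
The plan is to specialize Theorem \ref{thm:d_2_bound} to the universal$_2$ case, feeding into it the classical $d_2$-bound of Renner (Lemma 5.4.3 of Ref.~\cite{RennerPhD}), which is already stated as Eq.~(\ref{eq:proof_lemma2_EFTr_rho}), evaluated at the choice $\sigma_E = \rho_E$. In this specialization, most of the work has already been absorbed into Theorem \ref{thm:d_2_bound}, so what remains is essentially a matching of forms together with a mild extension to sub-normalized inputs.

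Concretely, I would first rewrite Renner's universal$_2$ bound in the exact form demanded by Theorem \ref{thm:d_2_bound}: setting $r(x) := 2^{m-x}$, Eq.~(\ref{eq:proof_lemma2_EFTr_rho}) becomes ${\rm E}_F\, d_2(\rho_{KE}^F | \rho_E) \le 2^{-m} r(H_2(\rho_{Z^AE}|\rho_E))$. Since $r$ is manifestly non-increasing, all hypotheses of Theorem \ref{thm:d_2_bound} are satisfied, and its conclusion gives ${\rm E}_F\, Q^{{\rm PA},F}(\rho_{Z^AE}) \le 2^{m - H_{\min}(Z^A|E)_\rho}$ for any \emph{normalized} $\rho_{Z^AE}$.

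Next, I would lift this inequality from normalized to sub-normalized $\rho_{Z^AE}$ by a scaling argument. Writing $p := {\rm Tr}(\rho)$ and $\hat{\rho} := \rho/p$, one checks directly from the definitions in Section \ref{sec:notation} that $H_{\max}(F(Z^A)|E)_\rho = H_{\max}(F(Z^A)|E)_{\hat{\rho}} + \log_2 p$ and $H_{\min}(Z^A|E)_\rho = H_{\min}(Z^A|E)_{\hat{\rho}} - \log_2 p$, so that both sides of the desired inequality scale linearly in $p$, namely $Q^{{\rm PA},F}(\rho_{Z^AE}) = p\, Q^{{\rm PA},F}(\hat{\rho}_{Z^AE})$ and $2^{m - H_{\min}(Z^A|E)_\rho} = p\, 2^{m - H_{\min}(Z^A|E)_{\hat{\rho}}}$. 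The normalized inequality then immediately implies the sub-normalized one.

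The step requiring the most care is the scaling identity for $H_{\max}$: it follows because in the generalized fidelity $F(p\hat{\rho}, |{\cal U}|^{-1}\II \otimes \sigma)$ against a normalized $\sigma$, the cross term $\sqrt{(1-p)(1-1)}$ vanishes, leaving $F(p\hat{\rho}, \tau) = \sqrt{p}\, F(\hat{\rho}, \tau)$ and hence the stated logarithmic shift; the identity for $H_{\min}$ is a simple rescaling of the constraint $\rho \le \II \otimes \sigma$. Apart from this bookkeeping, the lemma is a direct corollary of Theorem \ref{thm:d_2_bound} combined with Renner's well-known universal$_2$ $d_2$-bound, and I anticipate no further obstacle.
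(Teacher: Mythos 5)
Your proposal is correct and follows essentially the same route as the paper, which likewise obtains Lemma \ref{lmm:LHL1} by applying the former half of Theorem \ref{thm:d_2_bound} to Renner's universal$_2$ bound (\ref{eq:proof_lemma2_EFTr_rho}) with $r(x)=2^{m-x}$. Your explicit scaling argument for lifting the result from normalized to sub-normalized $\rho_{Z^AE}$ is a correct and welcome piece of bookkeeping that the paper leaves implicit.
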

As noted in Section \ref{sec:relation_with_the_conventional}, this LHL of the new type is nearly as tight as the LHL (\ref{eq:original_LHL}) of the conventional type, which has been obtained previously for the same type of $F$.

\paragraph{Coding theorems}
On the other hand, if we apply Theorem \ref{thm:equivalence_LHL_coding} to Lemma \ref{lmm:LHL1}, we readily obtain coding theorems for EC and DC using $G$ $(=F^\perp)$, with $F$ being universal$_2$.
\begin{Lmm}[Coding theorems for EC and DC using a dual universal$_2$ function (Lemma 12, Ref. \cite{8970489})]
\label{lmm:coding_theorems_dual_univ}
Let $G$ be a random function whose dual $G^\perp$ is universal$_2$
(or equivalently, let $G$ be a dual universal$_2$ function; see Definition \ref{def:dual_universal}).
Then for a sub-normalized state $\rho_{X^AB}$,
\begin{eqnarray}
{\rm E}_{G}\, Q^{{\rm EC},G}(\rho_{X^AB})&=&{\rm E}_{G}\, Q^{{\rm DC},G}(\rho_{X^AB})\nonumber\\
&\le& 2^{H_{\rm max}(X^A|B)_\rho-(n-m)}.
\end{eqnarray}
\end{Lmm}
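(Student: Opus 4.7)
The plan is to derive Lemma \ref{lmm:coding_theorems_dual_univ} directly from Lemma \ref{lmm:LHL1} by combining the algorithmic equivalence (Corollary \ref{crl:equivalence_random_case}) with the entropic uncertainty relation (Lemma \ref{lmm:uncertainty}); in effect, this is exactly Theorem \ref{thm:equivalence_LHL_coding} instantiated with the choice $r(x)=2^{m-x}$, extended to the sub-normalized setting.

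Concretely, the first step is to set $F=G^\perp$, which by hypothesis is a universal$_2$ hash function from $\{0,1\}^n$ to $\{0,1\}^m$, so that $G$ itself has output length $n-m$. Given the sub-normalized state $\rho_{X^AB}$, I would apply the construction of Section \ref{sec:how_to_construct_three} to produce a sub-normalized tripartite pure state $\rho_{ABE}$ in the standard form satisfying $\rho_{AB}=\rho_{X^AB}$; tracing out $B$ and performing the $Z^A$-measurement then defines the cq state $\rho_{Z^AE}=\rho_{AE}$ to which the PA bound will be applied.

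Next I would chain three ingredients. Lemma \ref{lmm:LHL1}, applied to $\rho_{Z^AE}$ with the universal$_2$ function $F$, yields ${\rm E}_F\,Q^{{\rm PA},F}(\rho_{Z^AE})\le 2^{m-H_{\rm min}(Z^A|E)_\rho}$. Corollary \ref{crl:equivalence_random_case}, which holds for sub-normalized states in the standard form, identifies the left-hand side with both ${\rm E}_G\,Q^{{\rm EC},G}(\rho_{X^AB})$ and ${\rm E}_G\,Q^{{\rm DC},G}(\rho_{X^AB})$, thereby also giving the equality asserted in the lemma. Finally, Lemma \ref{lmm:uncertainty} with $\varepsilon=0$ and the standard-form equality condition yields $H_{\rm min}(Z^A|E)_\rho=n-H_{\rm max}(X^A|B)_\rho$, so the right-hand exponent rewrites as $m-(n-H_{\rm max}(X^A|B)_\rho)=H_{\rm max}(X^A|B)_\rho-(n-m)$, which is exactly the claimed bound.

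The main obstacle, modest as it is, is bookkeeping rather than mathematics: one must check that each of the three ingredients remains valid when $\rho_{X^AB}$ is merely sub-normalized (the standard-form construction preserves sub-normalization, and the three ingredients are indeed stated in that generality), and one must keep track of the output dimensions of the dual pair $(F,G)$ so that the factor $m$ appearing in the PA exponent correctly produces the factor $n-m$ in the coding theorem.
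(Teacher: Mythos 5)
Your proposal is correct and follows essentially the same route as the paper, which obtains this lemma by applying Theorem \ref{thm:equivalence_LHL_coding} to Lemma \ref{lmm:LHL1}; you have merely inlined that theorem's own proof (Corollary \ref{crl:equivalence_random_case} plus the equality case of Lemma \ref{lmm:uncertainty}) with $r(x)=2^{m-x}$. Your explicit check that each ingredient survives sub-normalization is a small but welcome addition, since Theorem \ref{thm:equivalence_LHL_coding} as stated assumes normalized states while the lemma is claimed for sub-normalized $\rho_{X^AB}$.
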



\subsubsection{Case where $F$ is almost universal$_2$}

The second example of $F$ is a generalization of the universal$_2$ function above, which is defined as follows. 
\begin{Dfn}[Linear almost universal$_2$ function \cite{WEGMAN1981265}]
\label{dfn:almost_universal}
Let $\delta\in\RR$ be a parameter.
A linear random function $F:\{0,1\}^n\to\{0,1\}^m$ is $\delta$-almost universal$_2$ if
\begin{equation}
\Pr[F(x)=0]\le 2^{-m}\delta\ {\rm for}\ \forall x\ne0.
\end{equation}
\end{Dfn}
We note that $\delta$ must satisfy $(2^n-2^m)/(2^n-1)\le\delta$, as can be shown by  a simple argument \cite{WEGMAN1981265}.
Note that the universal$_2$ case of the previous subsection corresponds to $\delta=1$, near the minimum value.

\paragraph{Leftover hashing lemma}
For this type of $F$ as well, an inequality of the type of (\ref{eq:d_2_bound}) is already known,
\begin{equation}
{\rm E}_F\, d_2(\rho_{KE}^F|\rho_E)\le 2^{-m}(\delta-1)+2^{-H_2(\rho_{Z^AE}|\rho_E)}
\label{eq:bound_d2_type_almost_universal}
\end{equation}
for a normalized $\rho_{Z^AE}$ (Lemma 5 of Ref. \cite{5961850}), as well as an LHL of the conventional type
\begin{eqnarray}
\lefteqn{{\rm E}_F\, d_1(\rho_{KE}^F)}
\label{eq:LHL_conventional_almost_universal}
\\
&\le& \inf_{\varepsilon>0}\frac12\sqrt{ \delta-1+2^{m-H_2(\rho_{Z^AE}|\rho_E)+\log(2/\varepsilon^2+1)}}+\varepsilon\nonumber
\end{eqnarray}
(Lemma 2 of Ref. \cite{5961850}).

By applying the former half of Theorem \ref{thm:d_2_bound} to (\ref{eq:bound_d2_type_almost_universal}), we obtain an LHL of the new type.
\begin{Lmm}[LHL for PA using an almost universal$_2$ hash function, in terms of $Q^{\rm PA}$]
\label{lmm:LHL_QPA_almost_universal}
For a universal$_2$ hash function $F$ and for a sub-normalized state $\rho_{Z^AE}$,
\begin{equation}
{\rm E}_F\, Q^{{\rm PA},F}(\rho_{Z^AE})\le (\delta-1){\rm Tr}(\rho)+ 2^{m-H_{\rm min}(Z^A|E)_\rho}.
\label{eq:EFQ_almost_univ}
\end{equation}
\end{Lmm}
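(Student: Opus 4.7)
The plan is to combine the already-known $d_2$-bound (\ref{eq:bound_d2_type_almost_universal}) with Theorem \ref{thm:d_2_bound}, and then lift the resulting inequality from normalized to sub-normalized states by a short rescaling argument.

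First, I would rewrite the right-hand side of (\ref{eq:bound_d2_type_almost_universal}) in the form $2^{-m}\, r(H_2(\rho_{Z^AE}|\rho_E))$ with $r(h) := (\delta-1) + 2^{m-h}$, which is strictly decreasing and hence non-increasing in $h$. Applying the first half of Theorem \ref{thm:d_2_bound} then yields
\[
{\rm E}_F\, Q^{{\rm PA},F}(\rho_{Z^AE}) \le (\delta-1) + 2^{m - H_{\min}(Z^A|E)_\rho}
\]
for every normalized state $\rho_{Z^AE}$; this is exactly (\ref{eq:EFQ_almost_univ}) specialized to ${\rm Tr}(\rho) = 1$.

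Second, I would extend to a general sub-normalized $\rho_{Z^AE}$ by writing $\rho = p\hat{\rho}$ with $p := {\rm Tr}(\rho)$ and $\hat{\rho}$ normalized, and then establishing two elementary scaling identities: (i) $Q^{{\rm PA},f}(p\hat{\rho}) = p\cdot Q^{{\rm PA},f}(\hat{\rho})$, and (ii) $H_{\min}(Z^A|E)_{p\hat{\rho}} = H_{\min}(Z^A|E)_{\hat{\rho}} - \log p$. Identity (ii) is immediate from the definition of $H_{\min}$ via the substitution $\sigma \mapsto p\sigma'$ inside the minimization. For identity (i), I would observe that the auxiliary state $|{\cal K}|^{-1}\II_K\otimes\sigma_E$ appearing in the definition of $H_{\max}$ is always normalized, so the generalized fidelity collapses to $\|\sqrt{\rho}\sqrt{\tau}\|_1$; since this quantity scales as $\sqrt{p}$ under $\rho \mapsto p\rho$, one obtains $H_{\max}(K|E)_{p\hat{\rho}} = H_{\max}(K|E)_{\hat{\rho}} + \log p$, and substituting into the definition of $Q^{{\rm PA},f}$ gives (i). Applying the normalized-state bound to $\hat{\rho}$ and multiplying through by $p$ then produces precisely (\ref{eq:EFQ_almost_univ}).

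The only subtle technical point is the homogeneity (i): the generalized fidelity is not homogeneous in its first argument in general, because of the ``missing-mass'' correction $\sqrt{(1-{\rm Tr}\rho)(1-{\rm Tr}\tau)}$. Here, however, that correction vanishes identically because the second argument is always a normalized ideal state, so the scaling becomes clean, and all remaining manipulations are essentially bookkeeping.
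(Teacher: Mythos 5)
Your proposal is correct and follows the paper's own route: the paper likewise obtains (\ref{eq:EFQ_almost_univ}) by applying the former half of Theorem \ref{thm:d_2_bound} to the known bound (\ref{eq:bound_d2_type_almost_universal}) with $r(h)=(\delta-1)+2^{m-h}$. Your explicit rescaling argument for sub-normalized states (using the homogeneity $Q^{{\rm PA},f}(p\hat{\rho})=p\,Q^{{\rm PA},f}(\hat{\rho})$, which holds because the second argument of the generalized fidelity in $H_{\rm max}$ is normalized, together with $H_{\min}(Z^A|E)_{p\hat{\rho}}=H_{\min}(Z^A|E)_{\hat{\rho}}-\log p$) is a correct and welcome filling-in of a step the paper leaves implicit.
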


Note that if we set $\delta=1$, we again obtain Lemma \ref{lmm:LHL1}.
Also note that the bound (\ref{eq:EFQ_almost_univ}) loses its meaning for $2< \delta$, where its right hand exceeds one.
Therefore, PA using this type of $F$ can only be secure when $\delta$ is confined within a relatively small region, $1\simeq(2^n-2^m)/(2^n-1)\le \delta<2$ (see Section VIII.B of Ref. \cite{6492260} for a more detailed argument).
This seems to reflect the fact that this type of $F$ was originally conceived for information theoretically secure authentication \cite{WEGMAN1981265}, and not for PA.

\paragraph{Coding theorems}
As in the previous subsection, if we apply Theorem \ref{thm:equivalence_LHL_coding} to Lemma \ref{lmm:LHL_QPA_almost_universal}, we readily obtain coding theorems for EC and DC using a dual random function $G$ $(=F^\perp)$.
\begin{Lmm}[Coding theorems for EC and DC using an almost dual universal$_2$ function]
\label{lmm:coding_theorem_almost_dual_univ}
Let $G$ be a random function whose dual function $G^\perp$ is $\delta$-almost universal$_2$
(or equivalently, let $G$ be a $\delta$-almost dual universal$_2$ function; see Definition \ref{def:dual_universal}).
Then for a sub-normalized state $\rho_{X^AB}$,
\begin{eqnarray}
\lefteqn{{\rm E}_{G}\, Q^{{\rm EC},G}(\rho_{X^AB})={\rm E}_{G}\, Q^{{\rm DC},G}(\rho_{X^AB})}\nonumber\\
&\le& (\delta-1){\rm Tr}(\rho) + 2^{H_{\rm max}(X^A|B)_\rho-(n-m)}.
\end{eqnarray}
\end{Lmm}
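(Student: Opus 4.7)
The plan is to derive the coding theorem for EC and DC by recycling the privacy amplification bound already established in Lemma \ref{lmm:LHL_QPA_almost_universal}, and translating it through the LHL/coding-theorem equivalence of Theorem \ref{thm:equivalence_LHL_coding} (or its generalization Theorem \ref{Thm:H_min_H_max}). In other words, I do not want to attack the minimization over POVMs $\{M^s\}$ in the definition of $Q^{{\rm EC},G}$ or $Q^{{\rm DC},G}$ directly; instead I would move the entire problem to the PA side, where the almost-universal$_2$ structure already yields a tight bound.

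First I would set $F := G^\perp$. By hypothesis $F$ is then $\delta$-almost universal$_2$, so Lemma \ref{lmm:LHL_QPA_almost_universal} applies and gives
\begin{equation*}
{\rm E}_F\, Q^{{\rm PA},F}(\rho_{Z^AE}) \le (\delta-1){\rm Tr}(\rho) + 2^{m-H_{\rm min}(Z^A|E)_\rho},
\end{equation*}
where $\rho_{Z^AE}$ is the PA-side marginal of the pure standard-form state $\rho_{ABE}$ associated with the given $\rho_{X^AB}$. Second, I would cast the right-hand side as $r(H_{\rm min}(Z^A|E)_\rho)$ by defining the (non-increasing) function $r(b) := (\delta-1){\rm Tr}(\rho) + 2^{m-b}$. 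Then by Theorem \ref{thm:equivalence_LHL_coding}, the same $r$ governs the EC and DC bounds with argument $n - H_{\rm max}(X^A|B)_\rho$, and unpacking the exponent $m - (n - H_{\rm max}(X^A|B)_\rho) = H_{\rm max}(X^A|B)_\rho - (n-m)$ yields exactly the claimed inequality.

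The only delicate point I foresee is the mismatch between the sub-normalized setting of the lemma statement and the normalized setting under which Theorem \ref{thm:equivalence_LHL_coding} was stated. I would handle this by appealing one level deeper: Corollary \ref{crl:equivalence_random_case} establishes equality of the three averaged indices for sub-normalized standard-form states, and Lemma \ref{lmm:uncertainty} guarantees $H_{\rm max}(X^A|B)_\rho + H_{\rm min}(Z^A|E)_\rho = n$ for the same class of states. Combining these two ingredients reproduces the LHL/coding-theorem equivalence in the sub-normalized regime, so the reduction above goes through without modification. A related subtlety is that $r$ contains an explicit ${\rm Tr}(\rho)$ term, but this is unambiguous because on a pure standard-form $\rho_{ABE}$ one has ${\rm Tr}(\rho_{Z^AE}) = {\rm Tr}(\rho_{X^AB}) = {\rm Tr}(\rho_{ABE})$, so the same constant appears on both sides of the translation. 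Beyond these bookkeeping issues, the argument is essentially automatic once Lemma \ref{lmm:LHL_QPA_almost_universal} is in hand.
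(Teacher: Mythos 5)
Your proposal is correct and follows essentially the same route as the paper, which obtains this lemma by applying Theorem \ref{thm:equivalence_LHL_coding} directly to the PA bound of Lemma \ref{lmm:LHL_QPA_almost_universal} with $F=G^\perp$. Your extra care about the sub-normalized setting --- reducing to Corollary \ref{crl:equivalence_random_case} together with the equality case of Lemma \ref{lmm:uncertainty}, and noting ${\rm Tr}(\rho_{Z^AE})={\rm Tr}(\rho_{X^AB})$ for a standard-form purification --- is a bookkeeping point the paper leaves implicit (all quantities scale linearly in ${\rm Tr}(\rho)$), not a departure from its argument.
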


\subsubsection{Case where $G$ is almost universal$_2$, or equivalently, where $F$ is almost dual universal$_2$}
In the third example of random functions, we switch properties of $F$ and $G$, and let $G$ $(=F^\perp)$ be an almost universal$_2$ function, instead of $F$.
In this case, $F$ is said to be an almost {\it dual} universal$_2$ function.
\begin{Dfn}[Almost dual universal$_2$ function \cite{FS08,6492260}]
\label{def:dual_universal}
A random function $F$ is called $\delta$-almost dual universal$_2$, if its dual random function $F^\perp$ is $\delta$-almost universal$_2$.
\end{Dfn}

We note that for this type of $F$, parameter $\delta$ must satisfy $(2^n-2^{n-m})/(2^n-1)\le\delta$ \cite{6492260}.
We also note that this type of $F$ is $2(1-2^{-m}\delta)+(\delta-1)2^{n-m}$-almost universal$_2$ simultaneously \cite{6492260}\footnote{Conversely, the type of random function $F$ that is defined in Definition \ref{dfn:almost_universal} is also $2(1-2^{-m}\delta)+(\delta-1)2^{n-m}$-almost {\it dual} universal$_2$ \cite{6492260}.}.

\paragraph{Leftover hashing lemma}
For this type of $F$ as well, an inequality of the type of (\ref{eq:d_2_bound}) is known \cite{FS08,6492260},
\begin{equation}
{\rm E}_F\, d_2(\rho_{KE}^F|\sigma_E)\le 2^{-H_2(\rho_{Z^AE}|\sigma_E)}\delta,
\label{eq:d_2_bound_almost_universal}
\end{equation}
for a normalized $\rho_{Z^AE}$.

If one applies relations (\ref{eq:d_1_bounded_by_d2}) and (\ref{eq:H_min_bounded_by_H_2}) to this inequality, one obtains an LHL of the conventional form,
\begin{equation}
{\rm E}_F\, d_1(\rho_{KE}^F)\le \sqrt{{\rm Tr}(\rho)}\sqrt{2^{m-H_{\rm min}(Z^A|E)_\rho}}\sqrt{\delta}
\label{eq:original_LHL_dual_universal}
\end{equation}
for a sub-normalized $\rho_{Z^AE}$ \cite{FS08,6492260}.

On the other hand, if we instead apply the former half of Theorem \ref{thm:d_2_bound} to (\ref{eq:d_2_bound_almost_universal}), we obtain an LHL of the new type.
\begin{Lmm}[LHL for PA using an almost dual universal$_2$ hash function, in terms of $Q^{\rm PA}$]
\label{lmm:LHL_almost_universal}
For a $\delta$-almost dual universal$_2$ hash function $F$, and for a sub-normalized state $\rho_{Z^AE}$,
\begin{equation}
{\rm E}_F\, Q^{{\rm PA},F}(\rho_{Z^AE})\le 2^{m-H_{\rm min}(Z^A|E)_\rho}\delta.
\end{equation}
\end{Lmm}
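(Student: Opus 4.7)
The plan is to reduce Lemma \ref{lmm:LHL_almost_universal} to a direct application of the first half of Theorem \ref{thm:d_2_bound}. The $d_2$-type inequality (\ref{eq:d_2_bound_almost_universal}), established in \cite{FS08,6492260}, already has precisely the shape required as the hypothesis of that theorem: for a normalized $\rho_{Z^AE}$ it reads
\begin{equation*}
{\rm E}_F\, d_2(\rho_{KE}^F|\rho_E)\le\delta\cdot 2^{-H_2(\rho_{Z^AE}|\rho_E)}=2^{-m}\,r\bigl(H_2(\rho_{Z^AE}|\rho_E)\bigr),
\end{equation*}
with $r(x):=\delta\cdot 2^{m-x}$, which is non-increasing in $x$. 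Invoking Theorem \ref{thm:d_2_bound} with this choice of $r$ immediately gives, still for normalized $\rho_{Z^AE}$,
\begin{equation*}
{\rm E}_F\, Q^{{\rm PA},F}(\rho_{Z^AE})\le r\bigl(H_{\min}(Z^A|E)_\rho\bigr)=\delta\cdot 2^{m-H_{\min}(Z^A|E)_\rho},
\end{equation*}
which is the asserted bound in the normalized case.

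The only remaining step is to extend the inequality from normalized to sub-normalized $\rho_{Z^AE}$. Writing $\rho=c\tilde\rho$ with $c:={\rm Tr}(\rho)\le 1$ and $\tilde\rho$ normalized, the definitions in Section \ref{sec:notation} yield the homogeneity relations $H_{\min}(U|V)_\rho=H_{\min}(U|V)_{\tilde\rho}-\log c$ (by rescaling the feasible $\sigma$ in the min-entropy SDP) and $H_{\max}(U|V)_\rho=H_{\max}(U|V)_{\tilde\rho}+\log c$ (using $F(c\tilde\rho,\II_U/|U|\otimes\sigma)=\sqrt{c}\,F(\tilde\rho,\II_U/|U|\otimes\sigma)$ for normalized $\sigma$, since the second term of the generalized fidelity vanishes). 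Substituting these into the definition (\ref{eq:QPA_H_max}) of $Q^{{\rm PA},f}$ yields $Q^{{\rm PA},f}(\rho_{Z^AE})=c\cdot Q^{{\rm PA},f}(\tilde\rho_{Z^AE})$, while the right-hand side of the bound scales as $2^{m-H_{\min}(Z^A|E)_\rho}=c\cdot 2^{m-H_{\min}(Z^A|E)_{\tilde\rho}}$; the common factor $c$ cancels between the two sides and the normalized bound transfers verbatim to the sub-normalized case.

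I expect no genuine obstacle in this proof: all of the non-trivial work is packaged in the cited $d_2$-bound (\ref{eq:d_2_bound_almost_universal}) and in Theorem \ref{thm:d_2_bound}. The only piece of bookkeeping is the scaling argument needed for the sub-normalized extension, and that argument is the very same one implicit in the $\delta=1$ specialization stated as Lemma \ref{lmm:LHL1}.
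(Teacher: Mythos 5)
Your proposal is correct and follows essentially the same route as the paper, which obtains the lemma by applying the former half of Theorem \ref{thm:d_2_bound} to the known $d_2$-bound (\ref{eq:d_2_bound_almost_universal}) with $r(x)=\delta\cdot 2^{m-x}$. The only addition is your explicit scaling argument for passing from normalized to sub-normalized $\rho_{Z^AE}$, which the paper leaves implicit; your homogeneity relations for $H_{\min}$, $H_{\max}$, and $Q^{{\rm PA},f}$ are all verified correctly, so this fills a small gap rather than changing the approach.
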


Unlike the $\delta$-almost universal$_2$ function of the previous subsection (see the paragraph below Lemma \ref{lmm:LHL_QPA_almost_universal}), PA using this type of $F$ is secure even for $\delta$ exponentially larger than one.
Hence, this type of $F$ provides a much larger class of secure functions for PA.
Exploiting this property, in Ref. \cite{7399404} we proposed many useful examples of $F$, such as, efficiently computable hash functions requiring a small random seed.

\paragraph{Coding theorems}
Again by applying Theorem \ref{thm:equivalence_LHL_coding} to Lemma \ref{lmm:LHL_almost_universal}, we readily obtain the following lemma.
\begin{Lmm}[Coding theorems for EC and DC using an almost universal$_2$ function]
For a $\delta$-almost universal$_2$ function $G$, and for a sub-normalized state $\rho_{X^AB}$,
\begin{eqnarray}
{\rm E}_{G}\, Q^{{\rm EC},G}(\rho_{X^AB})&=&{\rm E}_{G}\, Q^{{\rm DC},G}(\rho_{X^AB})\\
&\le&  2^{H_{\rm max}(X^A|B)_\rho-(n-m)}\delta.\nonumber
\end{eqnarray}
\end{Lmm}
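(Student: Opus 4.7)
The plan is to derive this lemma as an immediate application of the equivalence theorem (Theorem~\ref{thm:equivalence_LHL_coding}) to the LHL for almost dual universal$_2$ functions that was just established (Lemma~\ref{lmm:LHL_almost_universal}). The strategy mirrors exactly the way Lemma~\ref{lmm:coding_theorems_dual_univ} was obtained from Lemma~\ref{lmm:LHL1}, and the way Lemma~\ref{lmm:coding_theorem_almost_dual_univ} was obtained from Lemma~\ref{lmm:LHL_QPA_almost_universal}.

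First I would set $F := G^\perp$. By Definition~\ref{def:dual_universal}, saying that $G$ is $\delta$-almost universal$_2$ is exactly the same as saying that its dual $F$ is $\delta$-almost dual universal$_2$. Next I would pair $\rho_{X^AB}$ with the state $\rho_{Z^AE}$ obtained from it via the standard-form construction of Section~\ref{sec:how_to_construct_three} (Fig.~\ref{fig:construction_standard_state}), so that a common purification $\rho_{ABE}$ in the standard form links the two algorithms. Applying Lemma~\ref{lmm:LHL_almost_universal} to this $\rho_{Z^AE}$ with the random function $F$ yields
\begin{equation}
{\rm E}_F\, Q^{{\rm PA},F}(\rho_{Z^AE}) \le \delta\cdot 2^{m - H_{\rm min}(Z^A|E)_\rho}.
\end{equation}

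Now I would invoke Theorem~\ref{thm:equivalence_LHL_coding} with the non-increasing function $r(b) := \delta\cdot 2^{m-b}$. Since the PA bound above is precisely of the form~(\ref{eq:thm:LHL-like_bound}), the equivalence yields the two matching coding-theorem bounds (\ref{eq:thm:coding_theorem_EC}) and (\ref{eq:thm:coding_theorem_DC}) for $G = F^\perp$, namely
\begin{equation}
{\rm E}_G\,Q^{{\rm EC},G}(\rho_{X^AB}) = {\rm E}_G\,Q^{{\rm DC},G}(\rho_{X^AB}) \le r(n - H_{\rm max}(X^A|B)_\rho),
\end{equation}
which simplifies to $\delta\cdot 2^{H_{\rm max}(X^A|B)_\rho - (n-m)}$, as claimed.

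The only mild obstacle is that Theorem~\ref{thm:equivalence_LHL_coding} is stated for normalized states, whereas the lemma at hand is claimed for sub-normalized $\rho_{X^AB}$. This gap is essentially cosmetic: the underlying equality of the three $Q$-quantities in Corollary~\ref{crl:equivalence_random_case} holds for sub-normalized $\rho_{ABE}$ in the standard form, and the entropic uncertainty relation of Lemma~\ref{lmm:uncertainty} is likewise stated for sub-normalized pure $\rho_{ABE}$; combining these two ingredients directly (rather than invoking Theorem~\ref{thm:equivalence_LHL_coding} as a black box) delivers the coding-theorem bound in the sub-normalized setting. No new computation is required beyond the algebraic rewriting $m - (n - H_{\rm max}(X^A|B)_\rho) = H_{\rm max}(X^A|B)_\rho - (n-m)$.
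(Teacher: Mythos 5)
Your proposal is correct and follows exactly the route the paper takes: the paper obtains this lemma by applying Theorem~\ref{thm:equivalence_LHL_coding} to Lemma~\ref{lmm:LHL_almost_universal}, with $G=F^\perp$ and the implicit choice $r(b)=\delta\cdot 2^{m-b}$, just as you describe. Your extra remark on extending from normalized to sub-normalized states via Corollary~\ref{crl:equivalence_random_case} and Lemma~\ref{lmm:uncertainty} is a sound way to fill a detail the paper leaves implicit.
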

This lemma generalizes and improves the first inequality given in Theorem 1 of Ref. \cite{6157080}, which essentially says, in our notation,
\begin{equation}
{\rm E}_{G}\, Q^{{\rm DC},G}(\rho_{X^AB})\le 4 \sqrt{ 2^{H_{\rm max}(X^A|B)_\rho-(n-m)}}
\end{equation}
for normalized $\rho_{X^AB}$ and for $\delta=1$.

\section{Application 2: Equivalence of the two approaches to the security proof of quantum key distribution}

There are two major approaches for the security proof of quantum key distribution (QKD):
\begin{itemize}
\item {\bf Leftover hashing lemma (LHL)-based approach}, where one makes use of an LHL \cite{RennerPhD}. 
\item {\bf Phase error correction (PEC)-based approach} \cite{Mayers98,Lo2050,SP00,Koashi,H07}, where one transforms a given QKD protocol mathematically to a EC algorithm on the phase degree of freedom of its sifted key.
\end{itemize}
Previously we have proved that these two approaches are equivalent, in the sense that a proof of one approach can always be converted to the one of the other approach without affecting the resulting security bound \cite{8970489}.

Below, we will apply the results obtained in this paper to simplify and improve our previous proof in Ref. \cite{8970489}.
The proof below is improved in that it is valid for a larger classes of hash functions.
That is, the equivalence holds for the case where the random function $F$ for PA is {\it almost} universal$_2$ \cite{WEGMAN1981265} and {\it almost dual} universal$_2$ \cite{FS08,6492260}, while previously \cite{8970489} we treated only the case where $F$ is universal$_2$ \cite{CARTER1979143}.

Further, utilizing the knowledge gained in this new proof, we propose a method to simplify the PEC-based proof.
That is, we propose to evaluate the randomness of Alice's phase degrees of freedom by the smooth max-entropy, rather than by the phase error rate.
This method has an additional merit that every step of the proof becomes equivalent to that of  the corresponding LHL-based proof.
As a result, one is guaranteed to reach exactly the same security bound as in the LHL-based approach, without any extra factor.

\subsection{Quick review of the two approaches}
We begin by reviewing these approaches concentrating on typical cases, though not completely general.

\subsubsection{Typical QKD protocol}
\label{sec:common_properties}
First, we specify what we mean by a typical QKD protocol.

\paragraph{Participants and their degrees of freedom}
There are two legitimate users, Alice and Bob, who want to share a secure key, and an eavesdropper Eve.

We let space $A$ be Alice's $n$-qubit space for storing her sifted key, 
and $B$ be all other degrees of freedom of the legitimate users.
Note here that space $B$ generally includes Alice's degrees of freedom, besides Bob's.

We also let space $E$ be all degrees of freedom of Eve.

\paragraph{Protocol}
\label{sec:protocol_typical QKD}
For the sake of simplicity we limit ourselves with entanglement-based protocol.
We note that we do not lose generality here, since a prepare-and-measure protocols can always be transformed to an entanglement-based protocol by introducing ancillary spaces appropriately.

Initially, Alice, Bob, and Eve are in a state $\mu_{ABE}$.
Alice and Bob then perform the following protocol.
\begin{itemize}
\item[1.] {\bf Sample measurement} Alice and Bob measure their reduced state $\mu_{AB}$, and determine whether they abort the protocol or not.
\item[2.] {\bf Corrected key generation} Alice and Bob each performs a measurement in space $AB$ independently.
They then perform information reconciliation together, and generate their corrected keys $z\in\{0,1\}^n$ for Alice, and $z'\in\{0,1\}^n$ for Bob respectively.

We assume that Alice stores her corrected key $z$ in space $A$ in the $z$ basis, or in $Z^A$ basis.
We denote by $\rho_{Z^AE}$ the sub-normalized state after this step of Alice's $z$ and Eve's degree of freedom $E$.
\item[3.] {\bf Secret key generation} 
Alice and Bob each applies PA using a random function $F$ to their sifted keys $z$,  $z'$ respectively, and generates their secret keys $k,k'$.
\end{itemize}

\subsubsection{Security criterion of QKD}
\label{sec:security_QKD}
In order to show the security of a QKD protocol as a whole, it suffices to bound the sum of the failure probability of IR, $\varepsilon_{\rm IR}=\Pr[Z\ne Z']$ ($\ge \Pr[K\ne K']$), and the security index of Alice' secret key
${\rm E}_F\, d_1(\rho_{KE}^F)$ (see e.g. Remark 6.1.3 of Ref. \cite{RennerPhD}).

The probability $\varepsilon_{\rm IR}$ can easily be bounded using the theory of classical error correcting codes.

As a result, the security analysis of QKD is reduced to that of Alice's secret key, i.e., to bounding ${\rm E}_F\, d_1(\rho_{KE}^F)$.

\subsubsection{Leftover hashing lemma (LHL)-based approach}
\label{sec:LHL_based_approach}
For a QKD protocol of the above type, a typical LHL-based proof proceeds as follows.

\paragraph{Assumption}
One assumes that the sample measurement is designed so that the resulting sub-normalized state $\rho_{Z^AE}$ satisfies
\begin{equation}
H_{\rm min}^{\rm th}\le H^\varepsilon_{\rm min}(Z^A|E)_{\rho}
\label{eq:H_min_condition}
\end{equation}
with $H_{\rm min}^{\rm th}$ being a predetermined constant.

One also assumes that an LHL for the random function $F$ has already been proved.
In practice, it suffices to assume that Alice and Bob use one of the examples of $F$ that we studied in Section \ref{sec:application1_LHL_coding_theorem}, i.e.,  universal$_2$, $\delta$-almost universal$_2$ (with $\delta$ sufficiently small), and $\delta$-almost dual universal$_2$ functions.

\paragraph{Security proof}
\label{par:security_proof_LHL_based}
As mentioned in Section \ref{sec:security_QKD}, in order to prove the security of the QKD protocol, one needs to bound the security index of Alice's secret key, ${\rm E}_F\, d_1(\rho_{KE}^F)$.
For this purpose, one uses an LHL.

For example, if $F$ is universal$_2$, ${\rm E}_F\, d_1(\rho_{KE}^F)$ can be bound by using an LHL (\ref{eq:original_LHL}).
By substituting (\ref{eq:H_min_condition}) to (\ref{eq:original_LHL}), one obtains a security bound
\begin{equation}
{\rm E}_F\, d_1(\rho_{KE}^F)\le 2\varepsilon+\sqrt{2^{m-H_{\rm min}^{\rm th}}}.
\label{eq:LHL_based_bound}
\end{equation}

Also for the case where $F$ is $\delta$-almost universal$_2$, one can obtain a similar security bounds using an LHL (\ref{eq:LHL_conventional_almost_universal}).
For the case of $\delta$-almost universal$_2$, one can use (\ref{eq:original_LHL_dual_universal}).

\subsubsection{Phase error correction (PEC)-based approach}
\label{sec:PEC_based_approach}

On the other hand, a typical PEC-based proof proceeds as follows.
\paragraph{Assumption}
\label{par:assumption_PEC}
This approach starts by deriving a {\it virtual state} $\rho_{X^AB}$.
That is, one applies to state $\rho_{Z^AE}$ (defined in step 2 of Section \ref{sec:protocol_typical QKD}) the procedure given in Section \ref{sec:equivalence_three_algorithms} and Fig. \ref{fig:construction_standard_state}, and generates $\rho_{X^AB}$ along with a tripartite state $\rho_{ABE}$ in the standard form.

Then one supposes that Alice and Bob together perform EC, with $\rho_{X^AB}$ regarded as the quantum channel on input Alice's codeword $c(t)=0$ and message $t=0$\footnote{
This corresponds to the situation where Alice always sends out $c(t)=0$, and Bob's goal is always to recover $y=0$.
If this setting seems unnatural, one may rewrite the situation such that the quantum channel is binary symmetric (see Section \ref{sec:EC_side_info})  by ``twirling'' or by randomizing the EC with bit flips in the $X$ basis (randomization by the Pauli $Z$ operators).
Such randomization is always possible since $\rho_{Z^AE}$ is invariant under the Pauli $Z$ operators.}.
One also assumes that the sample measurement step and the random function $F$ are designed, so that the EC using the dual random function $F^\perp$ fails with a probability,
\begin{equation}
{\rm E}_F\,Q^{{\rm EC},F^\perp}(\rho_{X^AB})\le Q^{\rm EC,th},
\label{eq:Q_EC_condition}
\end{equation}
where $Q^{\rm EC,th}$ is a predetermined constant.

\paragraph{Security proof}
\label{sec:security_proof_PEC_based}
As mentioned in Section \ref{sec:security_QKD}, one needs to bound the security index of Alice's secret key, ${\rm E}_F\, d_1(\rho_{KE}^F)$.
For this purpose, one uses an inequality
\begin{equation}
d_1(\rho_{KE}^f)\le 2\sqrt2\sqrt{Q^{{\rm EC},f^{\perp}}(\rho_{X^AB})},
\label{eq:bound_on_d1_by_QEC}
\end{equation}
which has been known previously  (see e.g. Refs. \cite{HT12,8970489}).
By substituting condition (\ref{eq:Q_EC_condition}) to (\ref{eq:bound_on_d1_by_QEC}), one obtains a security bound
\begin{eqnarray}
{\rm E}_F\,d_1(\rho_{KE}^F)&\le& 2\sqrt2\sqrt{{\rm E}_F\,Q^{{\rm EC},F^\perp}(\rho_{X^AB})}\nonumber\\
&\le&2\sqrt2\sqrt{Q^{\rm EC,th}},
\label{eq:d1_bound_PEC}
\end{eqnarray}
where used Jensen's inequality is used in the first line.

\paragraph{Typical method for satisfying the assumption}
\label{eq:typical_way_fulfill}
In most literature of the PEC-based approach, one designs the sample measurement step and determines the upper bound $Q^{\rm EC,th}$ on the average failure probability of the virtual EC, as follows.
\begin{enumerate}
\item One chooses a POVM $N=\{N^e\,|\,\sum_eN^e=\II_B\}$ in space $B$, and calculates the classical distribution $p(x,e):={\rm Tr}\left\{\rho_{X^AB}\left(\ket{\widetilde{x}}\bra{\widetilde{x}}_A\otimes N^e\right)\right\}$.

\item One designs the sample measurement step such that the randomness of $x$ seen in $p(x,e)$ becomes sufficiently small, 
for example, such that the phase error rate $e_{\rm ph}:=\sum_{x\ne0,e}p(x,e)$ becomes sufficiently small.
\item Using the theory of classical EC, one obtains an upper bound on the average failure probability of the EC.
This bound serves as $Q^{\rm EC,th}$.
\end{enumerate}
The POVM $N$ here plays the same role as $M^s$ of Section \ref{sec:decoding_ec}; i.e., they both provide a hint to boost the performance of the decoder (cf. 3rd paragraph, Section \ref{sec:EC_side_info}).
The difference is that unlike $M^s$, $N$ does not depend on the syndrome $s=f^\perp(x)$.

\paragraph{Remarks}
There are two remarks regarding the above procedure.

First, in the context of the QKD, state $\rho_{X^AB}$ is often called a virtual state since it is not necessarily realized in the actual QKD protocol, unlike $\rho_{Z^AE}$.
It is rather a mathematical artifact introduced for simplifying security proofs.
For the same reason, the EC on the phase error correction considered above is often called the {\it virtual} EC.

Second, while the bound (\ref{eq:bound_on_d1_by_QEC}) has been known previously, it can alternatively be regarded as a consequence of the equivalence of EC and PA, which we have shown in Section \ref{sec:main_results}.
Indeed, if we combine (\ref{eq:d1_Q_bound}) and Theorem \ref{Thm:equivalence}, we obtain a slightly weaker inequality $d_1(\rho_{KE}^f)\le 4\sqrt{Q^{{\rm EC},f^{\perp}}(\rho_{X^AB})}$.

\subsection{How to convert one approach to the other}
\label{sec:how_to_convert_PEC_LHL}
By using our results of Section \ref{sec:main_results}, one can always convert proofs of these two approaches to each other, while keeping the bounds (\ref{eq:LHL_based_bound}) and (\ref{eq:d1_bound_PEC}) essentially the same.

\subsubsection{Conversion from the LHL-based to the PEC-based}
\label{sec:conversion_from_LHL_to_PEC}
When given a LHL-based proof, one can always convert it to an alternative proof of the PEC-based approach.

The basic idea is as follows.
When given $\rho_{Z^AE}$, one can always reconstruct the (sub-normalized) virtual state $\rho_{X^AB}$, by using the procedure of Fig. \ref{fig:construction_standard_state}.
If one applies EC to $\rho_{X^AB}$ thus obtained, the situation now becomes equivalent to the PEC-based approach.
In addition, the failure probability of the EC there equals the security of PA in the original LHL-based proof,
\begin{equation}
{\rm E}_F\,Q^{{\rm PA},F}(\rho_{Z^AE})={\rm E}_F\,Q^{{\rm EC},F^\perp}(\rho_{X^AB}).
\label{eq:equality_PA_EC_QKD}
\end{equation}
This is because, by definition, $\rho_{Z^AE}$ and $\rho_{X^AB}$ here are related via a standard form $\rho_{ABE}$, and Corollary \ref{crl:equivalence_random_case} can be applied.
If one further applies (\ref{eq:bound_on_d1_by_QEC}), one can recover essentially the same bound as the original LHL approach, though one is working in the PEC-based approach.

We will see this procedure in detail for the case where $F$ is universal$_2$.
(For the case where $F$ is almost universal$_2$ or almost dual universal$_2$, one also can perform a similar procedure using Lemma \ref{lmm:LHL_QPA_almost_universal} or \ref{lmm:LHL_almost_universal}).

By the definition of the smooth min-entropy, there exists a sub-normalized state $\bar{\rho}_{Z^AE}$ which is $\varepsilon$-close to $\rho_{Z^AE}$ and satisfies $H_{\rm min}(Z^A|E)_{\bar{\rho}}=H_{\rm min}^\varepsilon(Z^A|E)_\rho$.
Let $\bar{\rho}_{X^AB}$ be the virtual sub-normalized state corresponding to $\bar{\rho}_{Z^AE}$.
Then by applying Corollary \ref{crl:equivalence_random_case} and Lemma \ref{lmm:LHL1}, one obtains
\begin{eqnarray}
\lefteqn{{\rm E}_F\,Q^{{\rm EC},F^\perp}(\bar{\rho}_{X^AB})={\rm E}_F\,Q^{{\rm PA},F}(\bar{\rho}_{Z^AE})}\nonumber\\
&=& 2^{m-H_{\rm min}(Z^A|E)_{\bar{\rho}}}=2^{m-H_{\rm min}^\varepsilon(Z^A|E)_\rho}
\nonumber\\
&\le& 2^{m-H_{\rm min}^{\rm th}}.
\label{eq:bound_on_EFQPA_LHL_approach}
\end{eqnarray}

This means that the situation is now equivalent to that of the PEC-based approach where assumption (\ref{eq:Q_EC_condition}) holds for $\bar{\rho}_{X^AB}$ with $Q^{\rm EC,th}=2^{m-H_{\rm min}^{\rm th}}$.
Therefore by applying (\ref{eq:d1_bound_PEC}), one obtains a bound,
\begin{equation}
{\rm E}_F\,d_1(\bar{\rho}_{KB}^F)\le2\sqrt2 \sqrt{2^{\frac12(m-H_{\rm min}^{\rm th})}},
\end{equation}
and also the security bound for the actual state $\rho_{KB}^F$,
\begin{equation}
{\rm E}_F\,d_1(\rho_{KB}^F)\le2\varepsilon+2\sqrt2 \sqrt{2^{\frac12(m-H_{\rm min}^{\rm th})}},
\end{equation}
which is identical to (\ref{eq:LHL_based_bound}), except for the presence of the factor of $2\sqrt2$.

\subsubsection{Conversion from the PEC-based to the LHL-based approach}

Conversely, when given a security proof of the PEC-based approach, one can always convert it to a proof of the LHL-based approach.

To this end, one repeats the the reasoning of the first two paragraph of Section \ref{sec:conversion_from_LHL_to_PEC}, and reaches Eq. (\ref{eq:equality_PA_EC_QKD}).
Then by substituting condition (\ref{eq:Q_EC_condition}) to (\ref{eq:equality_PA_EC_QKD}), one obtains a LHL of the new type (see Section \ref{sec:LHL_new_type}),
\begin{equation}
{\rm E}_F\,Q^{{\rm PA},F}(\rho_{Z^AE})\le Q^{\rm EC,th}.
\label{eq:new_LHL_from_PEC}
\end{equation}

The situation is now equivalent to that of the LHL-based approach.
If one further applies (\ref{eq:d1_Q_bound}) to (\ref{eq:new_LHL_from_PEC}), one obtains
\begin{eqnarray}
{\rm E}_F\,d_1(\rho_{KE})&\le&4 \sqrt{{\rm Tr}\rho}\sqrt{{\rm E}_F\,Q^{{\rm PA},F}(\rho_{Z^AE})}\nonumber\\
&\le& 4\sqrt{Q^{\rm EC,th}},
\end{eqnarray}
which is the same as (\ref{eq:d1_bound_PEC}), except that it is looser by the factor of $\sqrt2$.

\subsection{Evaluating the phase randomness by the smooth max-entropy}
\label{sec:PEC_based_using_LHLs}

In Section \ref{eq:typical_way_fulfill}, we explained a typical method used in the PEC-based approach for designing the sample measurement step and for determining the bound $Q^{\rm EC,th}$ on the average failure probability of the virtual EC.

However, this method has a problem that there is no fixed methodology for finding the appropriate POVM $N$.

We here propose a method to avoid this problem.
Namely, we point out that it is convenient to evaluate the randomness of the phase degrees of freedom $X^A$ by the conditional max-entropy $H_{\rm max}^\varepsilon(X^A|B)_{\rho}$, rather than by the phase error rate $e_{\rm ph}$ mentioned in Section \ref{eq:typical_way_fulfill}.
In this method one can exploit the coding theorems for EC (e.g. those derived in Section \ref{sec:explicit_example_LHL}) to determine $Q^{\rm EC,th}$, without being bothered by the choice of the POVN $N$.

This method has an additional merit that every step of the proof becomes equivalent to that of  the LHL-based approach of Section \ref{sec:LHL_based_approach}.
As a result, one is guaranteed to reach exactly the same security bound as in the LHL-based approach, without any extra factor.

\subsubsection{PEC-based approach using the smooth max-entropy}

\paragraph{Assumptions}
As in Section \ref{par:assumption_PEC}, we derive the (sub-normalized) virtual state $\rho_{X^AB}$ by the procedure given in Section \ref{sec:equivalence_three_algorithms} and Fig. \ref{fig:construction_standard_state}.
Then we assume the following two items.
\begin{enumerate}
\item The sample measurement step is designed such that the virtual (sub-normalized) state $\rho_{X^AB}$ ends up having the smooth min-entropy bounded as
\begin{equation}
H_{\rm max}^\varepsilon(X^A|B)_{\rho}\le n-H_{\rm min}^{\rm th}.
\label{eq:H_max_condition}
\end{equation}
\item The random hash function $F$ is universal$_2$.
\end{enumerate}

These two assumptions together guarantee that there exists a {\it approximate} virtual (sub-normalized) state $\bar{\rho}_{X^AB}$ ($\approx_\varepsilon\rho_{X^AB}$) which satisfies condition (\ref{eq:Q_EC_condition}) of the PEC-based approach, with the parameter
\begin{equation}
Q^{\rm EC,th}=2^{m-H_{\rm min}^{\rm th}}.
\label{eq:Q_EC_th_H_max_th}
\end{equation}

To see this, note that by the definition of the smooth max-entropy, there exists a sub-normalized state $\bar{\rho}_{X^AB}$  ($\approx_\varepsilon\rho_{X^AB}$) satisfying $H_{\rm max}(X^A|B)_{\bar{\rho}}=H_{\rm max}^\varepsilon(X^A|B)_\rho$.
Then by applying the coding theorem for EC using $F^\perp$ (Lemma \ref{lmm:coding_theorems_dual_univ}) to this $\bar{\rho}_{X^AB}$,
\begin{eqnarray}
{\rm E}_{F}\, Q^{{\rm EC},F^\perp}(\bar{\rho}_{X^AB})&\le& 2^{H_{\rm max}(X^A|B)_{\bar{\rho}}-(n-m)}\nonumber\\
&=& 2^{H_{\rm max}^\varepsilon(X^A|B)_\rho-(n-m)}\nonumber\\
&\le&2^{m-H_{\rm min}^{\rm th}}.
\label{eq:EF_QEC_bounded_by_Hmax}
\end{eqnarray}

\paragraph{Security proof}
Hence we can apply the PEC-based proof of Section \ref{sec:PEC_based_approach} to the approximate virtual (sub-normalized) state $\bar{\rho}_{X^AB}$.
By substituting (\ref{eq:EF_QEC_bounded_by_Hmax}) to (\ref{eq:bound_on_d1_by_QEC}), we obtain
\begin{equation}
{\rm E}_F\,d_1(\bar{\rho}_{KE}^F)\le2\sqrt2\sqrt{2^{H_{\rm max}^\varepsilon(X^A|B)_\rho-(n-m)}}.
\label{eq:E_F_d1_rho_bar}
\end{equation}
Since $\bar{\rho}_{KE}^f\approx_\varepsilon\rho_{KE}^f$, we then have an LHL which is expressed in terms of Alice's phase degree of freedom $X^A$, 
\begin{equation}
{\rm E}_F\,d_1(\rho_{KE}^F)\le2\varepsilon+2\sqrt2\sqrt{2^{H_{\rm max}^\varepsilon(X^A|B)_\rho-(n-m)}}.
\label{eq:result_new_PEC_approach1}
\end{equation}
By substituting condition (\ref{eq:H_max_condition}) to (\ref{eq:result_new_PEC_approach1}), we obtain
\begin{equation}
{\rm E}_F\,d_1(\rho_{KE}^F)\le 2\varepsilon+2\sqrt2 \sqrt{2^{m-H_{\rm min}^{\rm th}}}.
\label{eq:result_new_PEC_approach2}
\end{equation}

\subsubsection{Equivalence with the LHL-based approach}
The final result (\ref{eq:result_new_PEC_approach2}) of the PEC-based proof above is the same as (\ref{eq:LHL_based_bound}) of the LHL-based proof, except for the presence of the factor of $2\sqrt2$.
This is because we are in fact using the same assumption and the same inequality as in the LHL-based approach of Section \ref{sec:security_QKD}:
\begin{itemize}
\item Condition (\ref{eq:H_max_condition}) is equivalent to condition (\ref{eq:H_min_condition}) of the LHL-based approach.

This is because in the current situation, we have
\begin{equation}
H_{\rm min}^\varepsilon(Z^A|E)_\rho+H_{\rm max}^\varepsilon(X^A|B)_\rho= n
\label{eq:equality_uncertainty}
\end{equation}
due to Lemma \ref{lmm:uncertainty}.
\item The coding theorem (Lemma \ref{lmm:coding_theorems_dual_univ}) that we used to derive (\ref{eq:EF_QEC_bounded_by_Hmax}) is essentially equivalent to the LHL (\ref{eq:original_LHL}), which is used in the LHL-based approach.

To see this, we apply Theorem \ref{thm:equivalence_LHL_coding} (the equivalence of PA and EC) to Lemma \ref{lmm:coding_theorems_dual_univ}, and obtain an LHL of the new type,
\begin{eqnarray}
{\rm E}_{F}\, Q^{{\rm EC},F^\perp}(\rho_{X^AB})&=&{\rm E}_{F}\, Q^{{\rm PA},F}(\rho_{Z^AE})\nonumber\\
&\le& 2^{H_{\rm max}(X^A|B)_{\rho}-(n-m)}\nonumber\\
&=& 2^{m-H_{\rm min}(Z^A|E)_{\rho}}
\end{eqnarray}
If we then apply (\ref{eq:d1_Q_bound}) to these inequalities, we obtain an LHL of the conventional type, which differs from (\ref{eq:original_LHL}) by a factor of 4.
\end{itemize}

Thus, in fact our proof method here is essentially the same as the LHL-based approach that was specified in Section \ref{sec:security_QKD}.
The difference is that ours is re-formalized within the PEC-based approach.

\subsubsection{Advantages of our method}
From the standpoint of the PEC-based approach, we believe that this stronger version of the equivalence is an additional merit of our method.
As a result, the advantages of our method against the typical PEC-based approach can be summarized as follows.
\begin{enumerate}
\item It admits the use of the smoothing parameter $\varepsilon$.
\item The analysis is simple:
One need not specify POVM $N$, which was mentioned in Section \ref{eq:typical_way_fulfill}.
Once one finishes evaluating Alice's phase randomness in terms of $H^\varepsilon_{\rm max}(X^A|B)_{\rho}$, the security bound readily follows from the LHL (\ref{eq:result_new_PEC_approach1}).
\item The bound thus obtained is guaranteed to be the same as in the corresponding LHL-based proof.
\end{enumerate}

\subsubsection{Improved bound using the uncertainty relation}

Finally, we note that the LHL expressed in terms of Alice's phase degree of freedom, (\ref{eq:result_new_PEC_approach1}), is not by itself a new result (though we believe that our interpretation in the PEC-based approach is).
It is rather a direct consequence of the entropic uncertainty relation (the former half of Lemma \ref{lmm:uncertainty} of this paper), which was shown previously by Tomamichel and coauthors \cite{PhysRevLett.106.110506,2012NatCo...3..634T,TomamichelPhD}.
Moreover,  (\ref{eq:result_new_PEC_approach1}) can be improved by using their result.

If we substitute (\ref{eq:UCR}) to (\ref{eq:original_LHL}), we obtain an LHL
\begin{equation}
{\rm E}_F\, d_1(\rho_{KE}^F)\le2\varepsilon+\sqrt{{\rm Tr}(\rho)}\sqrt{2^{H_{\rm max}^\varepsilon(X^A|B)_\rho-(n-m)}},
\label{eq:LHL_in_PEC_based}
\end{equation}
which improves (\ref{eq:result_new_PEC_approach1}) by a factor of $2\sqrt2$.
If we then substitute condition (\ref{eq:H_max_condition}) to (\ref{eq:LHL_in_PEC_based}), we obtain a security bound 
\begin{equation}
{\rm E}_F\, d_1(\rho_{KE}^F)\le2\varepsilon+\sqrt{2^{m-H_{\rm min}^{\rm th}}},
\label{eq:security_bound_by_LHL_in_PEC}
\end{equation}
which again improves (\ref{eq:result_new_PEC_approach2}) by a factor of $2\sqrt2$, and equals (\ref{eq:LHL_based_bound}) of the LHL-based proof exactly.

\section{Summary and outlook}
We showed that quantum algorithms of privacy amplification (PA), error correction (EC), and data compression (DC) are equivalent, if we define the security of PA by using the purified distance and if we generalize EC and DC by adding quantum side information.

As an application of this equivalence, we took previously known security bounds of PA, and converted them into coding theorems for EC and DC that have not been obtained previously.
We applied these results to simplify and improve our previous result that the two prevalent approaches to the security proof of quantum key distribution (QKD) are equivalent.
We also propose a method to simplify the security proof of QKD by using the insight gained in this analysis.

An interesting future direction is to generalize the equivalence by using a dual pair of $K$-entropies \cite{PhysRevLett.108.210405} for defining the indices $Q$ for the three algorithms.
For example, if we use the von Neumann entropy to define the index $Q$ for any one of the three algorithms, and if we can still prove the equivalence, then all the three algorithms will share the same definition for their indices $Q$.
Therefore, the equivalence in a stronger sense will be established, and the three algorithms will become truly indistinguishable.

\appendices

\section{Proof of Theorem \ref{Thm:equivalence}}
\label{sec:proof_of_theorem}

In this section we will prove Theorem \ref{Thm:equivalence}.

Since all terms in (\ref{eq:Thm_equality}) are proportional to ${\rm Tr}(\rho_{ABE})$, it suffices to consider the case where $\rho_{ABE}$ is normalized.
In this case, Theorem \ref{Thm:equivalence} follows immediately by combining Eq. (\ref{eq:QPA_H_max}) and the following three lemmas.

\begin{Lmm}
\label{lmm:EC_DC_equivalence}
For a normalized state $\rho_{X^AB}$ and for a function $g$, we have
\begin{eqnarray}
\lefteqn{Q^{{\rm EC},g}(\rho_{X^AB})=Q^{{\rm DC},g}(\rho_{X^AB})}\nonumber\\
&=&1-2^{-H_{\rm min}(X^A|B,g(X^A))_\rho}.
\end{eqnarray}
\end{Lmm}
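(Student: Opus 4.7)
The plan is to recognize that, given the parallel descriptions of EC (conditioned on $c(0)=0$) and DC, the first equality is essentially definitional, and the second is an instance of the operational interpretation of the conditional min-entropy as a negative log of the guessing probability.

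First I would observe that for EC with $c(0)=0$, the decoder outputs $y=x+e$ and succeeds iff $y=0$, i.e., iff $e=x$. For DC, the decoder outputs $x'=e$ and succeeds iff $e=x$. In both cases the decoder receives the same data $(s=g(x),\,\tilde\rho^x)$ and applies the same family of POVMs $\{M^s\}$, so the two minimization problems in the definitions of $Q^{{\rm EC},g}$ and $Q^{{\rm DC},g}$ are literally identical. This gives the first equality without further work.

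Next I would reduce $Q^{{\rm DC},g}$ to a guessing problem. Form the cq-state on $A B S$,
\begin{equation}
\rho_{X^A B S}=\sum_x \ket{\widetilde x}\bra{\widetilde x}_A\otimes\tilde\rho^x_B\otimes\ket{g(x)}\bra{g(x)}_S,
\end{equation}
where $S$ holds the syndrome $g(X^A)$. Any DC strategy is specified by a family of POVMs $\{M^s\}_s$ on $B$, one per value of $s$, and the success probability is
\begin{equation}
\Pr[X'=X^A]=\sum_{x}{\rm Tr}\!\left(\tilde\rho^{x}_B\,M^{g(x),x}\right).
\end{equation}
Any such family is in turn equivalent to a single POVM $\{N^{x}\}_x$ on $BS$ of the form $N^{x}=M^{g(x),x}\otimes\ket{g(x)}\bra{g(x)}_S$, and conversely any POVM on $BS$ can be assumed without loss of generality to be of this block-diagonal form, because projecting onto the classical register $S$ can only help. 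Hence the optimal success probability equals the guessing probability $p_{\rm guess}(X^A|B,g(X^A))_\rho$.

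Finally, I would invoke the standard identity (due to K\"onig--Renner--Schaffner \cite{5208530}) that for a cq-state with classical part $U$,
\begin{equation}
p_{\rm guess}(U|V)_\rho=2^{-H_{\rm min}(U|V)_\rho},
\end{equation}
applied with $U=X^A$ and $V=(B,g(X^A))$. This gives
\begin{equation}
1-Q^{{\rm DC},g}(\rho_{X^AB})=2^{-H_{\rm min}(X^A|B,g(X^A))_\rho},
\end{equation}
which rearranges to the claimed formula. The only delicate point is the reduction from POVMs on $B$ labelled by $s$ to POVMs on the combined register $BS$; once that is handled, the result is immediate and no hard estimates are required.
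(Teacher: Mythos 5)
Your proposal is correct and follows essentially the same route as the paper's proof: both establish the first equality directly from the definitions (same POVM family, same success event $e=x$), then append the syndrome $g(X^A)$ as a classical register (your $S$, the paper's $D$) and invoke the K\"onig--Renner--Schaffner identity $p_{\rm guess}=2^{-H_{\rm min}}$. The only difference is that you spell out the reduction from $s$-indexed POVMs on $B$ to a single POVM on the joint register, a step the paper treats as evident; that added care is welcome but does not change the argument.
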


\begin{proof}
From the definitions of EC and DC, it is evident that their indices $Q^{{\rm EC},g}(\rho_{X^AB})$ and $Q^{{\rm DC},g}(\rho_{X^AB})$ can both be rewritten as
\begin{eqnarray}
\lefteqn{Q^{{\rm EC},g}(\rho_{X^AB})=Q^{{\rm DC},g}(\rho_{X^AB})}\nonumber\\
&=&\min_{\{M^s\}}\left(1-\sum_{x}{\rm tr}(\tilde{\rho}^{x} M^{g(x),x})\right).
\end{eqnarray}
Next note that the decoding of DC is equivalent to the situation where, given a state  
\begin{equation}
\tau=\sum_x\ket{\tilde{x}}\bra{\tilde{x}}_A\otimes\tilde{\rho}^{x}_B\otimes \ket{g(x)}\bra{g(x)}_D
\end{equation}
(with $D$ being a new ancillary space), one estimates $x$ by measuring spaces $B,D$.
According to Ref. \cite{5208530}, the success probability of this estimation equals $2^{H_{\rm min}(X^A|B,D)_{\tau}}$.
Then, by noting $H_{\rm min}(X^A|B,D)_{\tau}=H_{\rm min}(X^A|B,g(X^A))_{\rho}$, we obtain the lemma.
\end{proof}

\begin{Lmm}
\label{thm:quotient_inequality}
For a dual pair of functions $f,g$, and for a normalized state $\rho_{ABE}$, we have
\begin{equation}
Q^{{\rm PA},f}(\rho_{Z^AE})\le Q^{{\rm EC},g}(\rho_{X^AB}).
\label{eq:Hmin_ZKB}
\end{equation}
\end{Lmm}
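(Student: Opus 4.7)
The plan is to reduce the operational inequality (\ref{eq:Hmin_ZKB}) to an entropic uncertainty relation via a Clifford basis change on Alice's $n$-qubit register. By equation (\ref{eq:QPA_H_max}) with ${\rm Tr}(\rho)=1$ and by Lemma \ref{lmm:EC_DC_equivalence}, the claim (\ref{eq:Hmin_ZKB}) is equivalent to
\[ H_{\max}(f(Z^A)|E)_\rho + H_{\min}(X^A|B,g(X^A))_\rho \ge m, \]
so the task becomes a purely entropic statement relating $Z$- and $X$-basis measurements on $A$.

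To prove it, I would exploit the hypothesis $fg^T=0$: since $f,g$ have full rank, the $n$ Paulis $\{Z^{f_i}\}_{i=1}^{m}\cup\{X^{g_j}\}_{j=1}^{n-m}$ (where $Z^{f_i}=\prod_l Z_l^{(f_i)_l}$ and $X^{g_j}=\prod_l X_l^{(g_j)_l}$) are mutually commuting and independent, so there exists a Clifford unitary $U$ on $A$ that maps them to the canonical generators $\{Z_i\}_{i=1}^m\cup\{X_{m+j}\}_{j=1}^{n-m}$ of a tensor-product decomposition $A=A_1\otimes A_2$ with $|A_1|=m$. After conjugating $\rho_{ABE}$ by $U$, the $f(Z^A)$- and $g(X^A)$-measurements become the $Z^{A_1}$- and $X^{A_2}$-measurements respectively, while the marginals on $B$ and $E$ (and hence the relevant entropies) are unaffected. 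I would then adjoin an $(n-m)$-qubit ancilla $C$ and apply $X$-basis CNOTs from $A_2$ to $C$ to coherently copy $X^{A_2}$ onto $C$, yielding a pure state $\tilde\rho_{A_1A_2BCE}$.

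Viewing $\tilde\rho$ as a pure tripartite state on $A_1\otimes(BC)\otimes(EA_2)$ and applying the dual form of Lemma \ref{lmm:uncertainty} with $\varepsilon=0$ (obtained from (\ref{eq:UCR}) by conjugating with Hadamards on every qubit of $A$ together with a $B\leftrightarrow E$ relabeling) to the $m$-qubit subsystem $A_1$ yields
\[ H_{\max}(Z^{A_1}|EA_2)_{\tilde\rho}+H_{\min}(X^{A_1}|BC)_{\tilde\rho}\ge m. \]
Two data-processing inequalities then close the argument: tracing out $A_2$ can only increase $H_{\max}$, so $H_{\max}(Z^{A_1}|EA_2)_{\tilde\rho}\le H_{\max}(Z^{A_1}|E)_{\tilde\rho}=H_{\max}(f(Z^A)|E)_\rho$; and applying the $X^C$-measurement channel on $BC$ can only increase $H_{\min}$, giving classical access to $X^{A_2}=g(X^A)$, so $H_{\min}(X^{A_1}|BC)_{\tilde\rho}\le H_{\min}(X^{A_1}|B,X^{A_2})_{U\rho U^\dagger}=H_{\min}(X^A|B,g(X^A))_\rho$.

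The main obstacle is in the second paragraph: verifying that a Clifford $U$ with the prescribed action exists, which rests essentially on the dualness $fg^T=0$ (ensuring all $Z^{f_i}$ commute with all $X^{g_j}$) and on the full-rank hypothesis (ensuring the $n$ Paulis are independent). A secondary subtlety is that Lemma \ref{lmm:uncertainty} is stated with $H_{\max}$ on the $B$-side and $H_{\min}$ on the $E$-side, opposite to what the reduction requires; this is resolved by the $Z\leftrightarrow X$ symmetry of the qubit setting, implemented by a Hadamard on every qubit of $A$ together with a $B\leftrightarrow E$ relabeling of the two purifying systems.
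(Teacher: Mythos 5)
Your proof is correct, but it takes a genuinely different route from the paper's. The paper proves $Q^{{\rm PA},f}\le Q^{{\rm EC},g}$ \emph{operationally}: it rewrites both indices as fidelities involving two commuting measurement algorithms $\Pi^{{\rm PA},f}$ (measuring $Z^{f_i}$) and $\Pi^{{\rm EC},g}$ (measuring $X^{g_j}$, then correcting), and then uses monotonicity of the fidelity under the CPTP map $\Pi^{{\rm PA},f}$ together with the commutation $[\Pi^{{\rm PA},f},\Pi^{{\rm EC},g}]=0$ guaranteed by $fg^T=0$; the entropic uncertainty relation is never invoked for this lemma. You instead convert the claim, via Eq.~(\ref{eq:QPA_H_max}) and Lemma~\ref{lmm:EC_DC_equivalence}, into the purely entropic inequality $H_{\max}(f(Z^A)|E)_\rho+H_{\min}(X^A|B,g(X^A))_\rho\ge m$ and prove it by a symplectic change of basis $A=A_1A_2$ followed by Lemma~\ref{lmm:uncertainty} and data processing. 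This is essentially a one-sided, general-state version of the paper's \emph{third} appendix lemma (the equality $=m$ for standard-form states), which uses the very same $A_1A_2$ decomposition; your observation that the inequality half already suffices for the present lemma, and holds without the standard-form hypothesis, is a legitimate and arguably cleaner repackaging. What the paper's route buys is self-containedness (only fidelity monotonicity, no uncertainty relation); what yours buys is that two of the three appendix lemmas collapse into a single entropic statement.

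Two points you should make explicit. First, you only require the Clifford $U$ to normalize the $n$ commuting Paulis $\{Z^{f_i}\}\cup\{X^{g_j}\}$, but the identifications $H_{\max}(Z^{A_1}|E)_{\tilde\rho}=H_{\max}(f(Z^A)|E)_\rho$ and $H_{\min}(X^{A_1}|B,X^{A_2})=H_{\min}(X^A|B,g(X^A))_\rho$ additionally need $U$ to carry the \emph{full} $Z^A$ and $X^A$ product bases to product bases. This is achievable: extend $\{g_j\}$ by vectors $\{h_i\}$ dual to $\{f_i\}$ to a basis of $\FF_2^n$ and take $U$ to be the linear-reversible (CNOT-circuit) relabeling $\ket{z}\mapsto\ket{Lz}$, which preserves both Lagrangians; the paper glosses over the same point in its own appendix. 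Second, your construction of the pure state $\tilde\rho_{A_1A_2BCE}$ implicitly assumes $\rho_{ABE}$ is pure, whereas the lemma is stated for arbitrary normalized $\rho_{ABE}$; this is harmless (purify with an auxiliary system $R$, attach $R$ to the $E$ side, and use $H_{\max}(Z^{A_1}|ERA_2)\le H_{\max}(Z^{A_1}|E)$), but it should be said.
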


\begin{proof}

We begin by introducing a new notation: We denote the $i$-th row of a matrix $f$ (appearing in Definition \ref{dfn:dual_functions}) by $f_i$, and components of $f_i$ by $f_{i1},\dots,f_{in}$.
In this notation, for example, the duality condition $fg^T=0$ can be written $f_i\cdot g_j=\sum_{k}f_{ik}g_{jk}=0$ for $\forall i,j$.

Next we note that the security index $Q^{{\rm PA},f}$ for PA can be rewritten as
\begin{eqnarray}
\lefteqn{Q^{{\rm PA},f}(\rho_{Z^AE})}\\
&=&1-\max_{\sigma\ge0,{\rm Tr}\sigma=1} F(\Pi^{{\rm PA},f}(\rho_{ABE})_{KE},2^{-m}\II_K\otimes \sigma_E)^2.\nonumber
\end{eqnarray}
by using the algorithm $\Pi^{{\rm PA},f}$ below, which is designed to affect spaces $A,E$ in the same way as the actual PA.
\begin{itemize}
\item {\bf Equivalent algorithm for PA ($\Pi^{{\rm PA},f}$):} Measure space $A$ using the operator $Z^{f_i}=Z^{f_{i1}}\otimes\cdots\otimes Z^{f_{in}}$ ($i=1,\dots,m$), and store the result $k\in\{0,1\}^m$ in space $K$.
\end{itemize}

Similarly, we also note that  the performance index $Q^{{\rm EC},g}(\rho_{X^AB})$ of EC can be rewritten as
\begin{eqnarray}
\lefteqn{Q^{{\rm EC},g}(\rho_{X^AB})}\nonumber\\
&=&1-\max_{\{M^s\}}F(\ket{\tilde{0}}\bra{\tilde{0}}_A,\Pi^{{\rm EC},g}(\rho_{ABE})_A)^2.
\label{eq:QEC_rewritten}
\end{eqnarray}
by using the algorithm $\Pi^{{\rm EC},g}$ below.
\begin{itemize}
\item {\bf Equivalent decoding algorithm for EC ($\Pi^{{\rm EC},g}$):}
\begin{enumerate}
\item {\bf Syndrome measurement:} Measure space $A$ using an operator $X^{g_i}=X^{g_{i1}}\otimes\cdots\otimes X^{g_{in}}$ ($i=1,\dots,n-m$), and record the result as the syndrome $s=(s_1,\dots,s_{n-m})\in\{0,1\}^{n-m}$.
\item {\bf Side information measurement:} Measure space $B$ using the POVM $M^s=\{M^{s,e}\}_{x'\in\{0,1\}^n}$, and obtain the estimated error pattern $e$.
\item {\bf Bit flip:} Apply the operator $Z^{e}$ in space $A$.
\end{enumerate}
\end{itemize}

The state $\ket{\tilde{0}}\bra{\tilde{0}}_A$ appearing in Eq. (\ref{eq:QEC_rewritten}) must take the form $\ket{\tilde{0}}\bra{\tilde{0}}_A\otimes\sigma_{E}$ for some $\sigma$, when it is extended to spaces $A,E$.
Thus Eq. (\ref{eq:QEC_rewritten}) can be rewritten further as
\begin{align}
&Q^{{\rm EC},g}(\rho_{X^AB})=\\
&1-\max_{\{M^s\},\sigma\ge0,\sigma,{\rm Tr}\sigma=1} F\left(\ket{\tilde{0}}\bra{\tilde{0}}_A\otimes\sigma_{E},\Pi^{{\rm EC},g}(\rho_{ABE})_{AE}\right)^2.\nonumber
\end{align}

Further,  if functions $f,g$ are dual, the operator $Z^{f_i}$ of $\Pi^{{\rm PA},f}$, commutes with $X^{g_i}$ and with $Z^{z'}$ of  $\Pi^{{\rm EC},g}$.
Thus algorithms $\Pi^{{\rm PA},f}$ and $\Pi^{{\rm EC},g}$ also commute with each other.
Therefore we have $\Pi^{{\rm PA},f}(\Pi^{{\rm EC},g}(\rho))_{KE}=\Pi^{{\rm EC},g}(\Pi^{{\rm PA},f}(\rho))_{KE}=\Pi^{{\rm PA},f}(\rho)_{KE}$, and
\begin{align}
&1-Q^{{\rm EC},g}(\rho_{X^AB})\\
&\le\max_{\{M^s\},\sigma\ge0,{\rm Tr}\sigma=1} \nonumber\\
&\quad F\left(\Pi^{{\rm PA},f}(\ket{\tilde{0}}\bra{\tilde{0}}_A\otimes\sigma_{E})_{KE},\Pi^{{\rm PA},f}(\Pi^{{\rm EC},g}(\rho)_{AE})_{KE}\right)^2\nonumber\\
&=\max_{\sigma\ge0,{\rm Tr}\sigma=1} F\left(\Pi^{{\rm PA},f}(\ket{\tilde{0}}\bra{\tilde{0}}_A\otimes\sigma_{E})_{KE},\Pi^{{\rm PA},f}(\rho)_{KE}\right)^2\nonumber\\
&=\max_{\sigma\ge0,{\rm Tr}\sigma=1}F\left(\II_{K}\otimes\sigma_{E}, \rho^{f}_{KE}\right)^2=1-Q^{{\rm PA},f}(\rho_{AE}).\nonumber
\end{align}

\end{proof}

\begin{Lmm}
For a dual pair of functions $f,g$, and for a normalized state $\rho_{ABE}$ in the standard form, we have
\begin{equation}
H_{\rm max}(f(Z^A)|E)_{\rho}+H_{\rm min}(X^A|B,g(X^A))_\rho=m.
\label{eq:UCR_fZ}
\end{equation}
\end{Lmm}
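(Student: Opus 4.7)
The plan is to use a Clifford change of basis to decompose $A=A'\otimes A''$ into subsystems of $m$ and $n-m$ qubits, and reduce the claim to the equality form of Lemma \ref{lmm:uncertainty} applied to $A'$. Duality $fg^T=0$ makes $\{Z^{f_i}\}_{i=1}^m\cup\{X^{g_j}\}_{j=1}^{n-m}$ a set of $n$ commuting, independent, $Z$/$X$-type Paulis, so there exists a Clifford $V$ on $A$ with $VZ^{f_i}V^\dagger=Z^{(A')}_i$ and $VX^{g_j}V^\dagger=X^{(A'')}_j$. Setting $\sigma:=V\rho V^\dagger$, we get the Clifford translations $f(Z^A)\leftrightarrow Z^{A'}$, $g(X^A)\leftrightarrow X^{A''}$, $X^A\leftrightarrow(X^{A'},X^{A''})$, and $\sigma_{A'A''E}$ stays classical in $Z^{A'A''}$ (the Clifford merely permutes/rephases the computational basis).

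Next I would condition on the outcome $s$ of an $X^{A''}$-measurement, giving the pure conditional state $\sigma^s_{A'BE}$ with probability $p(s)$. Since $\sigma_{A''}$ is diagonal in $Z^{A''}$, every outcome has $p(s)=2^{-(n-m)}$, and a short computation on the purification $\ket{\sigma}=\sum_{a,b}\sqrt{p(a,b)}\ket{a,b}_{A'A''}\ket{\phi^{a,b}}_{BE}$ shows that $\sigma^s_{A'E}=\sigma_{A'E}$ is independent of $s$ and classical in $Z^{A'}$. (The $Z^{A'A''}$-classicality of $\sigma_{A'A''E}$ forces $\mathrm{Tr}_B\ket{\phi^{a,b}}\bra{\phi^{a',b'}}=0$ for $(a,b)\ne(a',b')$, a property that survives the $\bra{\tilde{s}}_{A''}$ projection.) Hence each $\sigma^s$ is in standard form on the $m$-qubit $A'$, and Lemma \ref{lmm:uncertainty} applies with equality:
\[
H_{\rm max}(Z^{A'}|E)_{\sigma^s}+H_{\rm min}(X^{A'}|B)_{\sigma^s}=m\quad\forall s.
\]

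Finally I would aggregate. Storing $s$ in a classical register $S$ forms $\tilde\sigma_{A'BES}=\sum_s p(s)\,\sigma^s_{A'BE}\otimes\ket{s}\bra{s}_S$, and the standard cq-entropy identities for classical $S$,
\[
2^{H_{\rm max}(Z^{A'}|ES)_{\tilde\sigma}}=\sum_s p(s)\,2^{H_{\rm max}(Z^{A'}|E)_{\sigma^s}},\quad 2^{-H_{\rm min}(X^{A'}|BS)_{\tilde\sigma}}=\sum_s p(s)\,2^{-H_{\rm min}(X^{A'}|B)_{\sigma^s}},
\]
exponentiate and average the per-$s$ equality into $H_{\rm max}(Z^{A'}|ES)_{\tilde\sigma}+H_{\rm min}(X^{A'}|BS)_{\tilde\sigma}=m$. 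Because $\sigma^s_{A'E}$ does not depend on $s$, the register $S$ carries no additional information about $Z^{A'}$ given $E$, so $H_{\rm max}(Z^{A'}|ES)_{\tilde\sigma}=H_{\rm max}(Z^{A'}|E)_\sigma=H_{\rm max}(f(Z^A)|E)_\rho$; and a direct guessing-probability calculation gives $H_{\rm min}(X^{A'}|BS)_{\tilde\sigma}=H_{\rm min}(X^{A'}|B,X^{A''})_\sigma=H_{\rm min}(X^A|B,g(X^A))_\rho$. Combining yields the stated identity. The main obstacle is the standard-form verification: tracking that the $Z^{A'A''}$-classicality on $\sigma_{A'A''E}$ survives the conditional $X^{A''}$-measurement on the purification; once that is in place, the remainder is routine bookkeeping with cq-entropy formulas.
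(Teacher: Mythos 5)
Your argument for the case $\rho_{AE}=\rho_{Z^AE}$ is essentially the paper's own proof of this lemma: decompose $A$ into an $m$-qubit block carrying $f(Z^A)$ and an $(n-m)$-qubit block carrying $g(X^A)$ (the paper writes $A=A_1A_2$ where you write $A'A''$), condition on the $X^{A''}$-measurement to get uniform outcomes and conditional pure states that are each in standard form on $A'$, apply the equality case of Lemma \ref{lmm:uncertainty} to each, and recombine via the cq-state formulas $2^{-H_{\rm min}(\cdot|BS)}=\sum_s p(s)2^{-H_{\rm min}(\cdot|B)_{\sigma^s}}$ and its $H_{\rm max}$ analogue (the paper cites Proposition 4.6 of Tomamichel's thesis for this step). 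Your verification that the conditional marginals $\sigma^s_{A'E}$ coincide and remain $Z^{A'}$-classical is a correct, slightly more explicit version of the paper's observation that all the post-measurement pure states $\ket{\tau^s}$ purify the same $\rho_{Z^{A_1}E}$.

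The gap is that the standard form (Definition 1) is an \emph{either/or} condition, and your proof silently assumes the first alternative. Your opening claim that ``$\sigma_{A'A''E}$ stays classical in $Z^{A'A''}$'' is exactly the statement $\rho_{AE}=\rho_{Z^AE}$; it does not follow for a state that is in standard form only via the second alternative $\rho_{AB}=\rho_{X^AB}$. For such a state (e.g.\ $\ket{\rho}_{ABE}=\ket{\tilde 0}_A\otimes\ket{0}_B\otimes\ket{0}_E$, whose $\rho_{AE}$ is not $Z^A$-classical), the pillars of your argument collapse: the $X^{A''}$-outcome distribution need not be uniform, the conditional states $\sigma^s_{A'E}$ need not be independent of $s$, and they need not be $Z^{A'}$-classical, so the equality case of Lemma \ref{lmm:uncertainty} cannot be invoked per outcome. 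The paper treats this second case separately and more simply: since $\rho_{A_1A_2B}$ is already classical in $X^{A_1}X^{A_2}$, the $X^{A_2}$-measurement acts trivially, $H_{\rm min}(X^{A_1}|B,X^{A_2})=H_{\rm min}(X^{A_1}|A_2B)$, and one applies the equality form of Lemma \ref{lmm:uncertainty} directly to the pure state $\rho_{A_1(A_2B)E}$ with $A_1$ as the system. You need to add this branch (or prove that your conditioning argument can be salvaged in it) for the proof to cover the lemma as stated.
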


\begin{proof}
We continue to use the notation introduced in the proof of Lemma \ref{thm:quotient_inequality}.
In this notation, the random number $K=f(Z^A)$ in PA corresponds to measurement result by operators $(Z^{f_i})_A$, and the syndrome $S=g(X^A)$ in EC corresponds to that by $(X^{g_j})_A$.
Since these operators, $(Z^{f_i})_A$ and $(X^{g_j})_A$, commutes with each other for a dual pair $f,g$, we can decompose space $A$ into $A=A_1A_2$, such that random variable $K$ equals the result of the $z$ basis measurement in the $m$-qubit space $A_1$, and $S$ equals the result of the $x$ basis measurement in the $n-m$ qubit space $A_2$, i.e., $Z^{A_1}=K=f(Z^A)$ and $X^{A_2}=g(X^A)$.

In this case, $H_{\rm min}(X|B,X^{A_2})_\rho=H_{\rm min}(X^{A_1}|B,X^{A_2})_\rho$ holds, and the relation (\ref{eq:UCR_fZ}), which we need to prove, takes the form
\begin{equation}
H_{\rm max}(Z^{A_1}|E)_{\rho}+H_{\rm min}(X^{A_1}|B,X^{A_2})_\rho=m.
\label{eq:UCR_ZKB}
\end{equation}

(i) Suppose that $\rho_{AE}=\rho_{Z^AE}$.
In this case, if we measure $\ket{\rho}_{ABE}=\ket{\rho}_{A_1A_2BE}$ in the $X^{A_2}$ basis, we obtain
\begin{equation}
\rho_{A_1X^{A_2}BE}=2^{-(n-m)}\sum_{t}\ket{\tilde{s}}\bra{\tilde{s}}_{A_2}\otimes\ket{\tau^{s}}\bra{\tau^{s}}_{A_1BE}.
\label{eq:Th2_Ease1_1}
\end{equation}
Alternatively, if we measure $\rho_{A_1A_2E}=\rho_{Z^{A_1}Z^{A_2}E}$ in the $X^{A_2}$ basis, we obtain
\begin{equation}
\rho_{Z^{A_1}X^{A_2}E}=2^{-(n-m)}\II_{A_2}\otimes\rho_{Z^{A_1}E}.
\label{eq:Th2_Ease1_2}
\end{equation}
Then if we further trace out $B$ from (\ref{eq:Th2_Ease1_1}), it should equal (\ref{eq:Th2_Ease1_2}) by construction.
Thus all $\ket{\tau^{s}}$ are a purification of $\rho_{Z^{A_1}E}$.
Hence by applying Lemma \ref{lmm:uncertainty} to $\ket{\tau^{s}}$ and space $A_1$, we see that $H_{\rm min}(X^{A_1}|B)_{\tau^s}=m-H_{\rm max}(Z^{A_1}|E)_{\tau^s}=m-H_{\rm max}(Z^{A_1}|E)_{\rho}$.
Also by applying Proposition 4.6 of Ref. \cite{TomamichelPhD}, we obtain $2^{H_{\rm min}(X^{A_1}|X^{A_2},B)_\rho}=2^{-(n-m)}\sum_s2^{H_{\rm min}(X^{A_1}|B)_{\tau^s}}=2^{H_{\rm max}(Z^{A_1}|E)_{\rho}}$, which proves (\ref{eq:UCR_ZKB}).

(ii) Suppose that $\rho_{AB}=\rho_{X^AB}$.
Then we have $\rho_{A_1A_2B}=\rho_{A_1X^{A_2}B}=\rho_{X^{A_1}X^{A_2}B}$, with $\rho_{ABE}$ being the purification of all these states.
By applying Lemma \ref{lmm:uncertainty} to $\rho_{ABE}$ and space $A_1$, we obtain  (\ref{eq:UCR_ZKB}).
\end{proof}

\section{Relation between security criteria of privacy amplification}
\label{sec:security_Eriteria}

In section \ref{sec:PA}, we introduced quantities $Q^{{\rm PA},f}(\rho_{Z^AE})$ and $d_1(\rho_{KE}^f)$ as the security index for random number $K=f(Z^A)$ in PA.
Besides these two quantities, some literature also use  security index for $K$,
\begin{equation}
d_1'(\rho_{KE}^f):=\min_{\sigma\ge0,{\rm Tr}(\sigma)=1}\left\|\rho_{KE}^f-2^{-m}\II_K\otimes\sigma_E\right\|_1.
\end{equation}

It is straightforward to show that this quantity can be bounded by $Q^{{\rm PA},f}(\rho_{Z^AE})$ as
\begin{equation}
1-\sqrt{1-Q^{{\rm PA},f}(\rho_{Z^AE})}\le\frac12 d'_1(\rho_{KE}^f)\le \sqrt{Q^{{\rm PA},f}(\rho_{Z^AE})},
\label{eq:bound_d1'}
\end{equation}
if $\rho$ is normalized  (see from Eq. (9.110) of Ref. \cite{Nielsen-Chuang}).

It can also be shown that $d_1'(\rho_{KE}^f)$ bounds the conventional security index $d_1(\rho_{KE}^f)$ as
\begin{equation}
d_1'(\rho_{KE}^f)\le d_1(\rho_{KE}^f)\le2d_1'(\rho_{KE}^f),
\label{eq:bound_d1}
\end{equation}
if $\rho$ is normalized.
The first inequality is immediate from the definitions of  $d_1(\rho_{KE}^f)$ and  $d_1'(\rho_{KE}^f)$.
The second inequality can be shown as
\begin{eqnarray}
\lefteqn{\left\|\rho_{KE}^f-2^{-m}\II_K\otimes\rho_E\right\|}\nonumber\\
&\le&
\left\|\rho_{KE}^f-2^{-m}\II_K\otimes\sigma_E\right\|\nonumber\\
&&+\left\|2^{-m}\II_K\otimes\sigma_E-2^{-m}\II_K\otimes\rho_E\right\|
\nonumber\\
&=&\left\|\rho_{KE}^f-2^{-m}\II_K\otimes\sigma_E\right\|+
\left\|\rho_E-\sigma_E\right\|\nonumber\\
&\le&2\left\|\rho_{KE}^f-2^{-m}\II_K\otimes\sigma_E\right\|,
\end{eqnarray}
where the last inequality holds from the monotonicity of the trace distance.

Inequality (\ref{eq:d1_Q_bound}) of section \ref{sec:PA} follows from Eqs. (\ref{eq:bound_d1'}) and (\ref{eq:bound_d1}).

\section{A note on the correspondence of states}
\label{sec:note_on_correspondence}

In Sec. \ref{par:correspondence_states}, we introduced the procedure $T$ in order to define the correspondence between the input state $\rho_{Z^AE}$ of PA, and $\rho_{X^AB}$ of EC and DC.
However, contrary to our naive expectation, repeating this procedure $T$ twice does not in fact yield the original state:
A straightforward calculation gives
\begin{eqnarray}
T(T(\rho_{Z^AE}))&=&{\rm XOR}^z_{E'\to A}(\rho_{Z^AE}\otimes 2^{-n}\II_{E'}),
\label{eq:TTrhoZAE}\\
T(T(\rho_{X^AB}))&=&{\rm XOR}^x_{B'\to A}(\rho_{X^AB}\otimes 2^{-n}\II_{B'}),
\end{eqnarray}
where the Hilbert spaces $E'$ and $B'$ are of the same size as $A$, and ${\rm XOR}^b_{C\to A}$ denotes XORing variable $C$ on variable $A$ in basis $b\in\{z,x\}$.
Note that $T(T(\rho_{Z^AE}))\ne\rho_{Z^AE}$ and $T(T(\rho_{X^AB}))\ne\rho_{X^AB}$.

Here we demonstrate that this fact does not compromise the correspondence of the two types of states.
The basic observation is that states $T(T(\rho_{Z^AE}))$ and $\rho_{Z^AE}$ are equivalent as long as one is concerned with the security of PA;
and $T(T(\rho_{X^AB}))$ and $\rho_{X^AB}$ are equivalent as long as the performance of DC or EC is concerned.

To see this for the case of $\rho_{Z^AE}$, note that the state ${\rm XOR}^z_{E'\to A}(\rho_{Z^AE}\otimes 2^{-n}\II_{E'})$ appearing in (\ref{eq:TTrhoZAE}) describes a classical ensemble of $\rho_{Z^AE}$ where Alice's classical variable $Z^A$ is shifted (XORed) by a publicly known random variable $E'$.
Hence performing PA on this state is equivalent to performing it on $\rho_{Z^AE}$ in parallel, where Alice's random bits $k$ is shifted by a public constant value $f(e')$.
Since the security of $k$ cannot be affected by such shift, this situation is clearly equivalent to PA on $\rho_{Z^AE}$.

We note that essentially the same argument also holds for the case of $\rho_{X^AB}$ in DC and EC.

\bibliographystyle{IEEEtran}
\bibliography{CSI_PA_equivalence}

\begin{IEEEbiographynophoto}{Toyohiro Tsurumaru} was born in Japan in 1973.
He received the B.S. degree from the Faculty of Science, University of Tokyo, Japan in 1996,
and the M.S. and Ph.D. degrees in physics from the Graduate School of Science, University of Tokyo, Japan in 1998 and 2001, respectively.
Then he joined Mitsubishi Electric Corporation in 2001.
His research interests include theoretical aspects of quantum cryptography, as well as modern cryptography.
\end{IEEEbiographynophoto}

\end{document}